\renewcommand{\algorithmiccomment}[1]{\hfill\eqparbox{COMMENT}{\textcolor{gray}{#1}}}
\newcommand{\norm}[1]{\left|\left|#1\right|\right|}
\newcommand{\dom}{\operatorname{dom}}
\newcommand{\id}{\operatorname{id}}
\newcommand{\R}{\mathbb{R}}
\newcommand{\C}{\mathbb{C}}
\newcommand{\Q}{\mathbb{Q}}
\newcommand{\N}{\mathbb{N}}
\newcommand{\Z}{\mathbb{Z}}
\newcommand{\NN}{\N^{\N}}
\newcommand{\diam}{\operatorname{diam}}
\newcommand{\QD}{\mathscr{Q}}
\renewcommand{\O}{\mathcal{O}}
\newcommand{\subf}{\subseteq_{\text{\tiny fin}}}
\newcommand{\intr}[1]{#1^{\circ}}
\newlength\arrowheight
\newcommand{\Set}[2]{\left\{#1 \mid #2\right\}}
\title{Deciding Robust Instances of an Escape Problem for Dynamical Systems in Euclidean Space} 
\titlerunning{Deciding Robust Instances of an Escape Problem for Dynamical Systems} 
\author{Eike Neumann}{Swansea University, United Kingdom}{}{}{}
\authorrunning{E. Neumann} 
\keywords{Dynamical Systems, Computability in Analysis, Non-Linear Functions} 
\begin{document}
\algnewcommand\algorithmicof{\textbf{of}}
\algnewcommand\algorithmiccase{\textbf{case}}
\algdef{SE}[SWITCH]{Switch}{EndSwitch}[1]{\algorithmiccase\ #1\ \algorithmicof}{\algorithmicend\ \algorithmicswitch}%
\algdef{SE}[CASE]{Case}{EndCase}[1]{#1}{\algorithmicend\ \algorithmiccase}%
\algtext*{EndSwitch}%
\algtext*{EndCase}%

\maketitle

\begin{abstract}
    We study the problem of deciding whether a point escapes a closed subset of $\R^d$ under the iteration of a continuous map $f \colon \R^d \to \R^d$ in the bit-model of real computation.
    We give a sound partial decision method for this problem which is complete in the sense that its halting set contains the halting set of all sound partial decision methods for the problem.
    Equivalently, our decision method terminates on all problem instances whose answer is robust under all sufficiently small perturbations of the function.
    We further show that the halting set of our algorithm is dense in the set of all problem instances.
    While our algorithm applies to general continuous functions,
    we demonstrate that it also yields complete decision methods for much more rigid 
    function families: 
    affine linear systems and quadratic complex polynomials.
    In the latter case, completeness is subject to the density of hyperbolicity conjecture in complex dynamics.
    This in particular yields an alternative proof of Hertling's (2004) conditional answer to a question raised by Penrose (1989) regarding the computability of the Mandelbrot set.
\end{abstract}

\section{Introduction}

A large number of problems in a diverse range of areas,
such as program verification \cite{OWSurvey15},
numerical analysis \cite{Szegö1933, PillweinJacobi},
economics \cite{Economics},
probabilistic systems \cite{MarkovChains, Quantum},
and
biology \cite{YuChen, MelczerMezzarobba}
reduce to a problem of the following type:
Given a set $A \subseteq X$, a function $f \colon X \to X$, and a point $x_0 \in A$,
decide whether $x_0$ escapes $A$ under finitely many iterations of $f$.

Unsurprisingly, the study of automated methods for the solution of such problems has received much attention \cite{Mignotte1984,Vereshchagin, OWPositivityLowOrder14,OWPositivitySimple14,OWUltimateSimple14,DBLP:conf/icalp/NeumannO021,HolonomicPeriods20,Pillwein13,PillweinSchussler15,KauersPillwein10,GerholdKauers05,BellGerhold07,kenison2024threshold}.

Existing work on automated decision methods largely focusses on linear systems.
In fact, the scope of automated decision methods for non-linear systems appears to be quite limited:
the above problem is undecidable \cite{Tiwari} already when $f \colon \R^d \to \R^d$ is a piecewise linear function and $A$ is a polyhedron in $\R^d$.

All of the aforementioned results assume that the system is specified exactly, by means of rational or algebraic numbers.
In application areas such as engineering and the natural sciences, it may be unrealistic to assume that the system under consideration is known to perfect accuracy.
Rather, one should assume that the system is known only up to some (presumably small) error with respect to a given distance function.
In this context, one is arguably less interested in deciding the problem for a single given instance, but to exhibit a neighbourhood of the given instance for which the answer to the problem remains constant -- provided that such a neighbourhood exists.
If the given instance lies on a ``decision boundary'', \textit{i.e.}, the answer to the problem is sensitive to arbitrarily small perturbations of the instance, not much appears to be gained by solving the problem for that specific instance.
In some sense, this can also be viewed as an opportunity to increase the scope of automated decision methods,
since the aforementioned undecidability results usually occur due to instances that lie on such decision boundaries.

This has led to various problems of this type being studied in various, generally non-equivalent, formalisations of ``robust decidability'' \cite{DPLRSR21, Robust, Ratschan, Pseudo, Rounding}.

The aim of this paper is to demonstrate by means of a case study that \emph{computable analysis} constitutes a very suitable framework for the study of robust decidability questions.
Computable analysis is the study of computation on data that is not specified by a complete finite description, but rather given as an infinite sequence of increasingly accurate approximations. 
For example, a real number $x \in \R$ can be encoded by an infinite sequence $(I_n)_n$ of nested intervals with rational endpoints such that $\{x\} = \bigcap_{n \in \N} I_n$.
An algorithm for computing a real number $x \in \R$ takes as input a non-negative integer $n \in \N$ and outputs an interval $I_n$ with rational endpoints. 
The algorithm is required to ensure that the sequence $(I_n)_n$ is nested and satisfies $\{x\} = \bigcap_{n \in \N} I_n$.
An algorithm for computing a real function $f \colon \R \to \R$ takes as input an infinite sequence of intervals 
$(I_m)_m$ which is guaranteed to encode a real number $x \in \R$, and returns as output an infinite sequence of intervals $(J_n)_n$.
The algorithm is required to ensure that for all input sequences $(I_m)_m$ encoding a real number $x$, 
the output sequence $(J_n)_n$ encodes the real number $f(x)$.
The infinite sequence $(I_m)_m$ is presented to the algorithm as a ``black box'': the algorithm can query the black box for any integer $m \in \N$ to receive the interval $I_m$.
Formally, this idea can be realised by using oracle Turing machines with an ``input oracle'' \cite{Ko} or, equivalently, by using Turing machines with an infinite read-only tape that contains the entire infinite input sequence \cite{WeihrauchBook}.
It is important to observe that in the definition above, an algorithm for computing a real function operates on \emph{all} real numbers, not just on computable ones. 
This should be contrasted with \emph{Markov computability} \cite{Kushner, HoyrupRojasMarkov}, where algorithms operate only on computable objects which are presented as indexes of Turing machines that compute the objects.

One may object that the above model of computation is still unrealistic: Our initial motivation -- to study systems that are not known exactly -- suggests to consider decision methods for objects that are known only to \emph{some fixed finite precision}.
While in computable analysis objects are not given by an infinitely precise finite description,
they can still be approximated to \emph{arbitrary}, rather than fixed, finite precision.
However, an algorithm that computes a function from one space to another can be automatically lifted to one that computes the range of the function over any given compact subset of the space \cite[Proposition 11]{PaulyRepresented}, automatically yielding an effective method for operating on objects that are known only to some fixed precision. 
See also \cite[Section 2]{DPLRSR21} for a discussion of this in the context of decision problems.

We study the following very general instantiation of the decision problem mentioned above:
Given a continuous function $f \colon \R^d \to \R^d$, a closed set $A \subseteq \R^d$, and a point $x_0 \in A$, determine whether the point escapes $A$ under iteration of $f$. To make this precise, we introduce standard encodings of points, sets, and functions, which we discuss in detail in Section \ref{Section: Encodings}.
Here, the point escapes if and only if there exists $n \in \N$ such that $f^n(x_0) \notin A$ -- \textit{i.e.}, the point is said to escape if it leaves the set after some number of function applications, even if it re-enters the set after further applications of $f$.
We will refer to this problem as the \emph{Point Escape Problem}.

Since the inputs to an algorithm are given as infinite sequences of increasingly accurate approximations, there is no hope to obtain a correct algorithm that halts on every input for \emph{any} non-trivial problem of the above type: every computable function must be continuous, and every continuous function from a connected space to a two-point space is constant.
Since no \emph{total} algorithm exists for deciding the problem in question, the next best al\-ter\-na\-tive is to ask for a partial algorithm that halts on as many problem instances as possible.
Accordingly, we will call a partial algorithm for deciding a problem \emph{complete} if its halting set contains the halting set of all correct partial algorithms deciding the same problem.
This is equivalent to asking that the algorithm halt on all problem instances whose answer is stable under small perturbations. 
See Section \ref{Section: Decision Methods} for a more formal discussion and \cite[Proposition 2.1]{DPLRSR21} for a proof of this equivalence.
Observe that for discrete spaces such as the natural numbers, an algorithm is complete in this sense if and only if it halts on all inputs, so that the above definition of completeness is a generalisation of the usual definition in classical computability.

Our algorithm for solving the Point Escape Problem is rather straightforward: We keep track of an overapproximation $O$ of a finite initial segment 
$\left\{x_0, f(x_0), \dots, f^{N}(x_0)\right\}$ of the orbit, as well as of overapproximations $Q_i$ of the individual points $\left\{f^i(x_0)\right\}$ in the orbit.
If we witness that $Q_i$ is separated from the closed set $A$, then we conclude that $x_0$ escapes under $f$.
If we witness that $O$ is included in the interior of $A$ and we can find an overapproximation $O'$ of the image $f(O)$ with 
$O' \subseteq O$, then we conclude that the point $x_0$ must be trapped.
If the overapproximations we have computed do not allow us to draw either conclusion, we compute more accurate overapproximations to longer initial segments of the orbit, until we are able to make a decision.

Our main contribution is to show that this simple algorithm is complete in the above sense.
The key idea is to show that the search for an invariant set for $f$ is guaranteed to terminate provided that $f$ has a \emph{robust invariant} whose interior contains $f(x_0)$. A compact set $V \subseteq \R^d$ is called a robust invariant for $f$ if $f(V) \subseteq \intr{V}$, where $\intr{V}$ denotes the interior of $V$.
We show that if $x_0$ is trapped but $f(x_0)$ is not contained in a robust invariant, then $x_0$ escapes under arbitrarily small perturbations of $f$ and $A$.

We will further show that if $x_0$ is trapped but $f(x_0)$ is not contained in a robust invariant and $A \neq \R^d$, then 
$x_0$ escapes $A$ under arbitrarily small perturbations of $f$ alone, with $A$ being fixed.
This implies that our algorithm, which takes $x_0$ and $A$ as inputs, 
is complete for all fixed $x_0$ and $A \subsetneq \R^d$, when only the function $f$ is given as an input.
Thus, a bespoke algorithm for fixed special sets or initial points, say $x_0 = 0$ or $A = [0,1]^d$ will not halt on more functions than our general algorithm.

Since a complete algorithm is only required to halt on robust instances, one is lead to the problem of determining the ``size'' of the robust instances. In the worst case, a problem may not have any robust instances at all, so that a complete decision method is given by the algorithm that never halts. 
We show that the halting set of our algorithm is ``large'', in the sense that the algorithm halts on a dense set of inputs and that the set of trapped problem instances is the closure of its interior.

Finally, which problem instances are robust depends on the class of functions and how these functions are represented (by our aforementioned result it surprisingly does not essentially depend on how the sets or initial points are represented). 
We allow arbitrary continuous functions as inputs, and we will represent these using the weakest representation that makes function evaluation uniformly computable. 

In general, if the problem is restricted to a smaller class of functions, potentially with a representation that induces a stronger topology, then previously non-robust instances may become robust, so that our algorithm will in general fail to be complete for restrictions of the Point Escape Problem to smaller function classes.

Indeed, The topology induced by our representation of continuous functions is the topology of uniform convergence on compact sets. 
In this topology, every neighbourhood of a function $f$ contains functions $g$ that agree with $f$ on some compact set, but differ arbitrarily from $f$ outside this compact set.
For this reason, our algorithm will fail to halt on very simple-looking problem instances where the given set is unbounded.
Consider for example the problem instance $(A,f,x_0)$ consisting of the set $A = \Set{x \in \R}{x \geq 0}$, the function $f(x) = 2x$, and the initial point $x_0 = 1$.
Since $f$ does not map any compact set into its interior, our algorithm will fail to halt on this problem instance.

However, we will show that systems such as the above can be effectively treated by our algorithm by first applying a suitable compactification.
More specifically, we show that the problem of deciding whether a point escapes a polyhedron (which is in general unbounded) under an affine linear map reduces to the Point Escape Problem for continuous functions in such a way that robust problem instances get mapped to robust problem instances. We conjecture that this reduction extends to more general non-linear systems that are sufficiently well behaved ``at infinity''.

To discuss the applicability of our algorithm to a class of much more rigid non-linear systems, we consider the problem of deciding whether the origin in the complex plane has an unbounded orbit under a quadratic polynomial, represented by a single complex parameter. The existence of a complete algorithm for this problem is closely related to the computability of the Mandelbrot set \cite{hertling9}. We give a reduction to the Point Escape Problem that maps robust instances to robust instances if and only if the hyperbolicity conjecture holds true.

The rest of the paper is structured as follows: in Section \ref{Section: Preliminaries} we review the relevant background material from computable analysis,
discuss completeness of decision methods in the context of computable analysis, and formally introduce our representations of points, sets, and functions.
Section \ref{Section: Algorithm and Main Results} contains a formal description of our main algorithm and our main results: that the algorithm is correct, complete, and generically terminating.
In Sections \ref{Section: Linear Systems} and \ref{Section: Mandelbrot} we describe our applications to linear and quadratic systems respectively.

\section{Preliminaries}\label{Section: Preliminaries}

\subsection{Computable Analysis}

We review some basic definitions and results from computable analysis. 
Good introductions to the subject are given in \cite{WeihrauchBook,BrattkaHertling21,Ko,BrattkaPresser,PaulyRepresented}.
The key idea for computing with first-order objects such as real numbers and real functions is to encode them via integer sequences.
Let $\NN$ denote the space of all sequences $(p_n)_n$ of non-negative integers.
We make $\NN$ into a topological space by endowing it with the product topology, which is generated by the distance function 
$d\left((p_n)_n, (q_n)_n\right) = 2^{-\inf\left\{n \mid p_n \neq q_n\right\}}$.
A sequence $(p_n)_n$ is called \emph{computable} if there exists a Turing machine which takes as input a natural number $n$ and returns as output the natural number $p_n$.

An oracle Turing machine \emph{computes} a partial function 
$f \colon \subseteq \NN \to \NN$
if for all oracles $\Phi$ with $(\Phi(n))_n \in \dom f$
and all natural numbers $N$,
the machine halts in finite time on input $N$ with oracle $\Phi$ and outputs the number
$q_N$, where $(q_m)_m = f((\Phi(n))_n)$.
Observe that the oracle $\Phi$ is not a fixed function, but part of the input.
Further, observe that we do not constrain the machine's behaviour for oracle-inputs outside the domain of $f$.
A partial function $f \colon \subseteq \NN \to \NN$ is called \emph{computable} if it is computed by some Turing machine.

A \emph{represented space} $(X, \delta_X)$ is a set $X$ together with a partial surjective map $\delta_X \colon \subseteq \NN \to X$.
We will usually write just $X$ for the represented space $(X, \delta_X)$ when $\delta_X$ is implicit or inferable from the context.
Let $X$ be a represented space, and let $x \in X$.
A point $p \in \NN$ is called a \emph{name} of $x$ if $p \in \dom \delta_X$ and $\delta_X(p) = x$.
The point $x$ is called \emph{computable} if it has a computable name.

Let $f \colon X \to Y$ be a function between represented spaces.
A partial function 
$F \colon \subseteq \NN \to \NN$
is called a \emph{realiser} of $f$
if $\dom F \supseteq \dom \delta_X$
and $\delta_Y \circ F = f \circ \delta_X$.
The function $f$ is called \emph{computable} if it has a computable realiser.
It is called \emph{continuous} if it has a continuous realiser (with respect to the relative topology induced by the product topology on $\NN$).

If $X$ is a set and $\delta_X \colon \subseteq \NN \to X$ and $\delta'_X \colon \subseteq \NN \to X$ are representations,
we say that $\delta_X$ and $\delta_X'$ are \emph{computably equivalent} or just \emph{equivalent} if 
the identity $\id_X \colon X \to X$ on $X$ is computable as a map $\id_X \colon (X,\delta_X) \to (X,\delta'_X)$
and as a map $\id_X \colon (X,\delta_X') \to (X,\delta_X)$ between represented spaces.
By replacing ``computable'' with ``continuous'' in the above definition, we obtain the definition of \emph{topological equivalence}.

Any representation $\delta_X \colon \subseteq \NN \to X$ induces a topology on $X$, namely the final topology,
where $U \subseteq X$ is open if and only if $\delta_X^{-1}(U)$ is an open subset of $\dom \delta_X$ (with respect to the relative topology induced by the product topology on $\NN$).
We will call this topology \emph{the topology of the represented space $X$}.
For a function $f \colon X \to Y$ between represented spaces, we have two a priori distinct notions of continuity available:
continuity in the sense of having a continuous realiser, and \emph{topological continuity} in the sense of being continuous 
with respect to the topologies on $X$ and $Y$.
Any continuous function is topologically continuous, but the converse is false in general.
A representation $\delta_X \colon \subseteq \NN \to X$ is called \emph{admissible} if all topologically continuous functions 
$f \colon Y \to X$, where $Y$ is a represented space, are continuous (cf.~\cite[Theorem 36]{PaulyRepresented}).
All representations we consider in this paper will be admissible.
Beyond this, we will only consider representations that are \emph{open maps}, \textit{i.e.} representations where every prefix of a name defines an open subset of the represented space.
Intuitively, this means that all names contain the same amount of information.

\subsection{Complete Decision Methods}\label{Section: Decision Methods}

An algorithm over discrete data is called \emph{complete} if it halts on all inputs.
A set is said to be \emph{decidable} if there exists a complete algorithm that correctly determines whether a given input belongs to the set.
If we apply the same definition to continuous data, we often end up with a trivial notion: if $X$ is a represented space such as $\R^d$ or $C(\R^d, \R^d)$ whose associated topology is connected, then the only decidable subsets of $X$ in  in the above sense are the empty set and $X$ itself, for a decision method defines a computable function of type $X \to \{0,1\}$ and this function must be continuous.

To obtain a more meaningful notion, the definition of ``complete algorithm'' must be extended in a different way.
Let $X$ be a represented space.
Let $A \subseteq X$.
A \emph{(partial) decision method} for $A$ is an algorithm that takes as input the name of a point $x \in X$ and either runs forever or halts in finite time. 
Upon halting, the algorithm is required to output the integer $1$ if $x \in A$ and the integer $0$ if $x \notin A$.
The \emph{halting set} of a decision method is the set of all $x \in X$ such that the algorithm halts on all names of $x$ (for open representations, a decision method that halts on \emph{some} name of $x$ automatically extends to a decision method that halts on all names of $x$).
Letting $\partial A$ denote the boundary of $A$, it is easy to see that the halting set of a decision method is contained in $X \setminus \partial A$.
A decision method for $A$ is \emph{complete} if its halting set is equal to $X \setminus \partial A$.
A subset $A \subseteq X$ which admits a complete decision method is called \emph{maximally partially decidable} or simply \emph{decidable}.
Observe that for any space $X$ which carries the discrete topology, such as $X = \N$, any subset of $X$ has empty boundary, so that our definition of decidability agrees with the usual definition.
For spaces such as $X = \R^d$, this definition of ``decidability'' yield a considerably richer structure than the trivial one induced by the na\"ive direct generalisation.
For the spaces we are interested in, completeness can be characterised as a kind of optimality:
If $X$ admits an open representation, then a decision method for $A \subseteq X$ is complete if and only if its halting set contains the halting set of every decision method for $A$. 
This definition is further discussed and motivated in \cite{DPLRSR21}.

We will call a problem instance $x \in X$ a \emph{robust instance} if $x \notin \partial A$, and a \emph{boundary instance} otherwise.
Thus, an algorithm is complete if and only if it halts on all robust instances.
Observe that if we witness an algorithm halting on a given problem instance $x \in X$, then we can identify a prefix $p \in \N^*$ of the given name of $x$ such that the algorithm halts on all names that extend $p$. 
When the representation of $X$ is open, this prefix corresponds to an open neighbourhood of $x$ where the answer to the problem is constant.

\subsection{Encoding Points, Sets, and Functions}\label{Section: Encodings}

We describe representations of points in $\R^d$, continuous functions $\R^d \to \R^d$, and closed and compact subsets of $\R^d$.
Our representations of points and functions are up to equivalence the usual standard representations of computable analysis that can be found in the literature \cite{WeihrauchBook}.
Our representations of sets are equivalent to a join of certain standard representations.

Before we introduce our representations, let us introduce some standard notation and terminology.
Throughout, we will work with the supremum norm 
\[
    \norm{x}_{\infty} = \max \Set{|x_i|}{i \in \{1, \dots, d\}}
\]
and its induced metric $d(x,y) = \norm{x - y}_\infty$
on $\R^d$.
For a point $x \in \R^d$ and positive number $r > 0$, we let 
$
    B(x, r) = \Set{y \in \R^d}{\norm{x - y}_\infty < r}
$
and 
$
    \overline{B}(x, r) = \Set{y \in \R^d}{\norm{x - y}_\infty \leq r}
$
denote the open and closed ball of radius $r$ about $x$ respectively.
In $\R^d$, the closed ball of radius $r$ about $x$ is the closure of the open ball of radius $r$ about $x$.
More generally, for a set $S \subseteq \R^d$, we let 
\[
    B(S, r) = \Set{y \in \R^d}{\exists x \in S. \norm{x - y}_\infty < r} = \bigcup_{x \in S} B(x, r)
\]
and 
\[
    \overline{B}(S, r) = \Set{y \in \R^d}{\exists x \in S. \norm{x - y}_\infty \leq r} = \bigcup_{x \in S} \overline{B}(x, r).
\]
Observe that $B(S, r)$ is always an open set, while $\overline{B}(S, r)$ is not necessarily a closed set.
In particular, $\overline{B}(S, r)$ is not necessarily equal to the closure of $B(S, r)$.
This is the case, however, if the set $S$ is closed. For a proof, see Proposition \ref{Proposition: closed balls about closed sets are closure of open balls} in the appendix.

For continuous maps $f, g \colon \R^d \to \R^d$ we let
$
    \norm{f - g}_{\infty} = \sup \Set{\norm{f(x) - g(x)}_{\infty}}{x \in \R^d} \in [0, +\infty].
$
For a compact set $K \subseteq \R^d$ we define
$
    \norm{f - g}_{\infty, K} = \sup \Set{\norm{f(x) - g(x)}_{\infty}}{x \in K} \in [0, +\infty).
$
The set $K$ is allowed to be empty in this definition.
We define a distance function on the space $C\left(\R^d, \R^d\right)$ of all continuous maps by 
\begin{equation}\label{eq: distance on functions}
    d(f, g) = \sum_{n \in \N} 2^{-n} \min\left\{1, \norm{f - g}_{\infty, \overline{B}\left(0, 2^n\right)}\right\}.
\end{equation}
We choose this metric since its induced topology is the same as the topology induced by the weakest representation of the space of continuous functions that renders evaluation computable -- see Proposition \ref{Proposition: properties of function-space representation}.

In order to formulate our main algorithm, it will be convenient to work with subdivisions of $\R^d$ into dyadic rational cubes of uniform size.
To facilitate this, we choose representations based on such subdivisions.

Let 
$
    \QD^d_n =
        \Set{
            \left[\frac{x_1}{2^n}, \frac{x_1 + 1}{2^n}\right]
            \times 
            \dots 
            \times 
            \left[\frac{x_d}{2^n}, \frac{x_d + 1}{2^n}\right]
        }
        {
            x_1,\dots,x_d \in \Z
        }
$
denote the set of all $d$-dimensional dyadic rational cubes of side-length $2^{-n}$.
The sets $\QD^d_n$ subdivide $\R^d$ into a cubical mesh of mesh-width $2^{-n}$,
with the interiors of distinct cubes being disjoint.
Let $\QD^d_* = \bigcup_{n \in \N} \QD^d_n$.

For a subset $A \subseteq \QD^d_*$ of $\QD^d_*$, we write $|A| = \bigcup_{Q \in A} Q \subseteq \R^d$. 
For a set $A \subseteq \R^d$, we write 
$
    \QD^d_n(A) = \Set{Q \in \QD^d_n}
                   {Q \cap A \neq \emptyset}.
$
For sets $A$ and $B$, we write $A \subf B$ to indicate that $A$ is a finite subset of $B$.

For a set $A \subseteq \R^d$, we let $\diam A = \sup \Set{d(x,y)}{x, y \in A}$ denote its \emph{diameter}.
We write $\intr{A}$ for the \emph{interior} of $A$, $\overline{A}$ for the \emph{closure} of $A$,
and $\partial A$ for the \emph{boundary} of $A$.
For sets $A, B \subseteq \R^d$, we write $A \Subset B$ if $\overline{A} \subseteq \intr{B}$.
It is clear that for finite subsets $A$, $B$ of $\QD^d_*$ the relations 
$|A| \subseteq |B|$ and $|A| \Subset |B|$ are decidable.

We will now introduce our representations of points, sets, and functions.
Elements of $\QD^d_*$ and finite sets of elements of $\QD^d_*$ can be coded by natural numbers.
We will fix appropriate encodings, without making them explicit.

\begin{definition}\label{Definition: rho^d}
Define a representation $\rho^d \colon \subseteq \NN \to \R^d$ as follows:

Up to coding, a $\rho^d$-name of a point $x \in \R^d$ is a sequence $(X_n)_n$ of finite sets $X_n \subf \QD^d_n$ satisfying
$x \in \intr{|X_n|}$ for all $n$,
$|X_{n + 1}| \subseteq |X_n|$ for all $n$,
and $\diam |X_n| \to 0$ as $n \to \infty$.
\end{definition}

\begin{proposition}
    The representation $\rho^d$ is open and admissible and induces the standard Euclidean topology on $\R^d$,
    \textit{i.e.}, the topology generated by the norm $\norm{\cdot}_\infty$.
\end{proposition}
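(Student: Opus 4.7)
The plan is to verify in turn openness, agreement of the induced topology with the Euclidean one, and admissibility, reducing the last to the standard Cauchy representation.

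\textbf{Openness.} The key observation is that for any finite prefix $(X_0, \ldots, X_k)$ of a valid $\rho^d$-name, the image under $\rho^d$ of the corresponding cylinder in $\dom \rho^d$ equals $\intr{|X_k|}$, which is Euclidean-open. The inclusion ``$\subseteq$'' is immediate from Definition \ref{Definition: rho^d}. For ``$\supseteq$'', given $x \in \intr{|X_k|}$ I choose $\rho > 0$ with $\overline{B}(x, \rho) \subseteq |X_k|$ and pick a strictly decreasing sequence $(r_n)_{n > k}$ tending to zero with $r_{k+1} + 2^{-(k+1)} \leq \rho$ and $r_n + 2^{-n} \leq r_{n-1}$ for $n > k+1$; then setting $X_n := \QD^d_n(\overline{B}(x, r_n))$ extends the prefix to a valid $\rho^d$-name of $x$, since $B(x, r_n) \subseteq \intr{|X_n|}$, $|X_n| \subseteq \overline{B}(x, r_n + 2^{-n}) \subseteq |X_{n-1}|$, and $\diam|X_n| \leq 2(r_n + 2^{-n}) \to 0$.

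\textbf{Induced topology.} Openness reduces the claim to checking that the family of sets $\intr{|X|}$ ranging over $X \subf \QD^d_*$ generates the Euclidean topology. Containment in the Euclidean topology is immediate. Conversely, for any Euclidean-open $U \subseteq \R^d$, any $\rho^d$-name $(X_n)_n$ of a point $x \in U$, and any $\epsilon > 0$ with $B(x, \epsilon) \subseteq U$, the combination of $x \in |X_n|$ and $\diam|X_n| \to 0$ yields some $n$ with $|X_n| \subseteq B(x, \epsilon) \subseteq U$, whence the prefix $(X_0, \ldots, X_n)$ lies in $(\rho^d)^{-1}(U)$.

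\textbf{Admissibility.} I reduce to the standard Cauchy representation of $\R^d$, whose admissibility is established in \cite{WeihrauchBook}. To pass from $\rho^d$ to the Cauchy representation, for each precision $2^{-k}$ I search for the least $n$ with $\diam|X_n| < 2^{-k}$ -- feasible because $|X_n|$ is a finite union of dyadic cubes, and terminating because $\diam|X_n| \to 0$ -- and output the centre of any cube in $X_n$. Conversely, from a Cauchy sequence $(q_n)_n$ of rationals with $\norm{q_n - x}_\infty < 2^{-n}$ I output $X_n := \QD^d_n(\overline{B}(q_{n+2}, 2^{-(n+2)}))$, with the defining conditions of $\rho^d$ verified by essentially the same geometric bookkeeping as in the openness argument.

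The main obstacle is precisely this bookkeeping: the nesting condition $|X_{n+1}| \subseteq |X_n|$ forces a careful coupling between the shrinking radii and the cube side lengths $2^{-n}$, since a cube of side $2^{-n}$ that meets $\overline{B}(x, r_n)$ need not lie inside $\overline{B}(x, r_n)$ itself. Once the inductive choice of radii is in place, the remaining verifications are routine.
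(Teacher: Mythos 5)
Your proof is self-contained and takes a genuinely different route from the paper's. The paper delegates admissibility and the identification of the final topology to \cite{WeihrauchBook,SchroederPhD}, and obtains openness from Proposition~\ref{Proposition: openness and fibre-overtness} (it suffices to semi-decide, given a name of $x$ and a finite string $q_0$, whether $q_0$ extends to a name of $x$); you instead compute the image of a cylinder directly, read off the final topology from that, and reduce admissibility to the Cauchy representation. The openness and induced-topology parts of your argument are correct: the construction with shrinking radii $r_n$ about the \emph{fixed} center $x$ works precisely because the recursion $r_n + 2^{-n} \leq r_{n-1}$ absorbs the fattening that occurs when you pass from a ball to the cubes that meet it.

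The admissibility part, however, contains a concrete error in the Cauchy-to-$\rho^d$ direction, and it is not, as you claim, ``essentially the same geometric bookkeeping as in the openness argument.'' In the openness argument the center is fixed at $x$, so only the radii need coupling. Here the centers $q_{n+2}$ \emph{move}, and the drift $\norm{q_{n+3}-q_{n+2}}_\infty$ can be as large as roughly $3\cdot 2^{-(n+3)}$, which overwhelms the radius shrinkage $2^{-(n+2)}-2^{-(n+3)} = 2^{-(n+3)}$; the recipe $X_n := \QD^d_n\bigl(\overline{B}(q_{n+2}, 2^{-(n+2)})\bigr)$ therefore does not guarantee the nesting $|X_{n+1}| \subseteq |X_n|$. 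A concrete counterexample in $d=1$: take $x = 7/32$, $q_4 = x - 2^{-5}$, $q_5 = x + 2^{-6}$ (both valid Cauchy approximations). Then $\overline{B}(q_4, 2^{-4}) = [1/8, 1/4]$ gives $|X_2| = [0,1/4]$, while $\overline{B}(q_5, 2^{-5}) = [13/64, 17/64]$ straddles $1/4$ and gives $|X_3| = [1/8, 3/8] \not\subseteq [0,1/4]$. The fix is to use slower-shrinking radii so that the shrinkage dominates both the center drift and the cube-fattening, for example $X_n := \QD^d_n\bigl(\overline{B}(q_n, 2^{-(n-2)})\bigr)$ for $n \geq 2$: then $r_{n+1}+2^{-(n+1)}+\norm{q_{n+1}-q_n}_\infty < 2^{-(n-1)}+2^{-(n+1)}+2^{-n}+2^{-(n+1)} = 2^{-(n-2)} = r_n$, so $|X_{n+1}|\subseteq \overline{B}(q_{n+1}, r_{n+1}+2^{-(n+1)}) \subseteq \overline{B}(q_n, r_n) \subseteq |X_n|$. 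As written, though, the step fails, so you should replace the formula rather than wave at the bookkeeping.
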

\begin{proof}
    See Proposition \ref{Proposition: rho-d admissible and open (Appendix)} in Appendix \ref{Appendix: Representations}.
\end{proof}

A code of a multi-valued function $F \colon \QD^d_n\left([-2^n, 2^n]^d\right) \rightrightarrows \QD^d_n$ with finite values is an integer that encodes a list of the form 
$\left\langle \left(Q_0, \left\langle R_{0,0}, \dots, R_{0, N_0} \right\rangle\right), \dots, \left(Q_M, \left\langle R_{M,0}, \dots, R_{M, N_M} \right\rangle\right)\right\rangle$
where the sequence $(Q_i)_i$ contains every element of the finite set $\QD^d_n\left([-2^n, 2^n]^d\right)$ exactly once, and for all $i,j$ we have $R_{i,j} \in \QD^d_n$.
Such a list encodes the function that sends the cube $Q_i$ to the finite set $\left\{ R_{i,0}, \dots, R_{i, N_i} \right\}$.

\begin{definition}
Define a representation $\left[\rho^d \to \rho^d\right] \colon \subseteq \NN \to C(\R^d, \R^d)$ as follows:

Up to coding, a $\left[\rho^d \to \rho^d\right]$-name of a continuous function $f \colon \R^d \to \R^d$ is a sequence $(F_n)_n$ of set-valued maps 
$F_n \colon \QD^d_n\left([-2^n, 2^n]^d\right) \rightrightarrows \QD^d_n$
with finite values
satisfying the following requirements:
\begin{enumerate}
    \item 
    $f(Q) \subseteq \intr{|F_n(Q)|}$ for all $Q \in \QD^d_n\left(\left[-2^n, 2^n\right]^d\right)$.
    \item 
        If $Q \in \QD^d_n\left(\left[-2^n, 2^n\right]^d\right)$ and 
        $Q' \in \QD^d_{n + 1}\left(\left[-2^{n + 1}, 2^{n + 1}\right]^d\right)$ with 
        $Q' \subseteq Q$, then 
        $\left|F_{n + 1}(Q')\right| \subseteq \left|F_n(Q)\right|$.
    \item 
        If $(Q_n)_{n \geq m}$, $m \in \N$, is a sequence of cubes with
        $Q_n \in \QD^d_n\left(\left[-2^n, 2^n\right]^d\right)$,
        $Q_{n + 1} \subseteq Q_n$, 
        and 
        $\diam Q_n \to 0$ as $n \to \infty$,
        then 
        $\diam |F_n(Q_n)| \to 0$ as $n \to \infty$.
\end{enumerate}
\end{definition}

\begin{proposition}\label{Proposition: properties of function-space representation}
    The representation $\left[\rho^d \to \rho^d\right]$ is open and admissible and induces the compact-open topology on $C\left(\R^d, \R^d\right)$.
    This topology further coincides with the topology induced by the metric \eqref{eq: distance on functions}.
\end{proposition}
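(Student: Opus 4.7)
The plan is to establish the proposition by comparing $\left[\rho^d \to \rho^d\right]$ with the standard function-space representation of $C(\R^d, \R^d)$ in computable analysis. Concretely, I would show that $\left[\rho^d \to \rho^d\right]$ is computably equivalent to the admissible representation whose names encode oracle Turing machines realising the function as a map $(\R^d, \rho^d) \to (\R^d, \rho^d)$. Admissibility of $\left[\rho^d \to \rho^d\right]$ and identification of its induced topology with the compact-open topology then follow from well-known results about the standard function-space representation (see e.g.~\cite{PaulyRepresented,WeihrauchBook}). The last assertion, that the compact-open topology coincides with the metric \eqref{eq: distance on functions}, is a classical fact: $\R^d$ is hemicompact with exhaustion $\bigl(\overline{B}(0, 2^n)\bigr)_n$, and the compact-open topology on $C(\R^d, \R^d)$ is therefore metrised by a Fr\'echet-type sum of this form.

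For the forward direction of the equivalence, given a name $(F_n)_n$ of $f$ and a $\rho^d$-name $(X_n)_n$ of a point $x \in \R^d$, one effectively locates an index $n_0$ with $|X_{n_0}| \subseteq [-2^{n_0}, 2^{n_0}]^d$ (this is detectable, since each $X_n$ is a finite set of bounded cubes) and forms $Y_n = \bigcup_{Q \in X_n} F_n(Q) \subf \QD^d_n$ for $n \geq n_0$. Conditions 1--3 of the definition, combined with $x \in \intr{|X_n|}$ and continuity of $f$, imply that after a tail-truncation and minor re-indexing the sequence $(Y_n)_{n \geq n_0}$ is a $\rho^d$-name of $f(x)$. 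For the reverse direction, given a realiser of $f$, I would invoke \cite[Proposition 11]{PaulyRepresented} to compute, for every cube $Q \in \QD^d_n\bigl([-2^n, 2^n]^d\bigr)$, a finite set $F_n(Q) \subf \QD^d_n$ whose interior union contains the compact image $f(Q)$. The nesting condition 2 and the uniform-vanishing condition 3 are secured by building $F_n$ by induction on $n$, refining each cover to a prescribed shrinking diameter (using uniform continuity of $f$ on compact cubes) and to respect the containment $|F_{n+1}(Q')| \subseteq |F_n(Q)|$ whenever $Q' \subseteq Q$.

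Openness follows directly from the strict inclusion in condition 1: for a finite prefix of a name fixing $F_0, \ldots, F_k$, the set of functions $g \in C(\R^d, \R^d)$ admitting such a prefix equals
$\bigcap_{n \leq k} \bigcap_{Q} \Set{g \in C(\R^d, \R^d)}{g(Q) \subseteq \intr{|F_n(Q)|}}$,
which is a finite intersection of subbasic open sets of the compact-open topology, hence open; so every prefix of a name indeed defines an open subset. The main technical obstacle I anticipate lies in the reverse direction of the equivalence: one has to arrange conditions 2 and 3 uniformly while keeping the construction computable, committing to a choice of $F_n(Q)$ before constructing the finer $F_{n+1}(Q')$ and then guaranteeing that the latter refines the former. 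This is manageable by enforcing an explicit schedule (e.g.~geometrically shrinking diameters of $|F_n(Q)|$ in $n$) and refining each cover accordingly.
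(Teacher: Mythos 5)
Your plan is correct in substance, and for the admissibility and compact-open topology parts you land exactly where the paper does (the paper simply cites \cite{WeihrauchBook,SchroederPhD} for those facts, rather than spelling out the equivalence to the standard representation, but the route is the same). Where you genuinely diverge is the openness claim: the paper deduces openness from its auxiliary Proposition \ref{Proposition: openness and fibre-overtness}, which requires exhibiting a \emph{semi-decision procedure} that, given a name $(F_n)_n$ of $f$ and a candidate finite prefix $\langle G_1, \dots, G_N\rangle$, halts exactly when the prefix extends to a name of $f$ (the halting criterion there is the discovery of some level $n$ with $|F_n(Q')| \Subset |G_N(Q)|$ for all relevant cube pairs, semi-decidable because there are finitely many cubes at each level). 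You instead give a purely topological argument that the $\delta$-image of each basic cylinder is a finite intersection of compact-open subbasic sets. Both routes are valid; the algorithmic one is self-contained given Proposition \ref{Proposition: openness and fibre-overtness}, while your topological one has the appeal of being shorter once the compact-open identification is in hand.

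Two remarks on rigour. First, your topological openness argument is not fully circularity-free as written: to conclude openness of the representation from openness of the image in the compact-open topology, you need to have already established that the final topology \emph{is} the compact-open topology, so this step must come after (and depend on) the admissibility/topology part. Second, and more importantly, the identity
$\delta\bigl([F_0,\dots,F_k]\bigr) = \bigcap_{n\le k}\bigcap_Q \Set{g}{g(Q)\subseteq \intr{|F_n(Q)|}}$
has a nontrivial $\supseteq$ direction that you assert but do not prove: given $g$ satisfying the inclusion constraints, one must show the prefix actually extends to a \emph{full} $\left[\rho^d\to\rho^d\right]$-name of $g$, i.e.\ that suitable $F_{k+1}, F_{k+2}, \dots$ exist satisfying all three conditions. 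The construction works (take $F_{k+1}(Q') = \QD^d_{k+1}(g(Q'))$; the key lemma is that any cube of $\QD^d_{k+1}$ meeting $\intr{|F_k(Q)|}$ is already contained in $|F_k(Q)|$, which is what makes the strict inclusion of condition 1 compatible with the nesting of condition 2), but as written this is the crux of the openness claim and should be stated explicitly rather than attributed to ``the strict inclusion in condition 1''. The paper's own ``observe that \dots if and only if \dots'' carries the same hidden content, so this is a gap you share with the reference proof rather than an outright error.
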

\begin{proof}
    See Proposition \ref{Proposition: rho-d-rho-d admissible and open (Appendix)} in Appendix \ref{Appendix: Representations}.
\end{proof}

For a set-valued function $G \colon X \rightrightarrows Y$ and a set $A \subseteq X$, we write 
$G(A) = \bigcup_{x \in A} G(x) \subseteq Y$.

The encoding of continuous functions makes function evaluation on points uniformly computable:
if $(X_n)_n$ is a name of a point $x \in \R^d$ and $(F_n)_n$ is a name of a function $f \colon \R^d \to \R^d$
then $(F_n(X_n))_n$ is a name of the point $f(x)$ -- potentially up to shifting the sequence to make $F_n$ defined on $X_n$ for all $n$.
Further, if $\delta$ is any representation of $C(\R^d, \R^d)$ that renders evaluation computable, then there exists an algorithm which translates
a $\delta$-name of $f$ to a name $(F_n)_n$ of $f$ with respect to the above representation. For a proof see \cite{WeihrauchBook, SchroederPhD}.

Continuous functions on compact sets are uniformly continuous.
It is a folklore result in computable analysis that this fact is effectively witnessed
by the above encodings (cf.~also \cite[Theorem 2.24]{Ko}):

\begin{lemma}\label{lemma: modulus of continuity witnessed by representation}
    Let $f \colon \R^d \to \R^d$ be a continuous function.
    Let $K$ be a compact set.
    Then for all $n$ there exists $m$ such that for all 
    $Q \in \QD^d_m(K)$ we have 
    $\operatorname{diam} |F_n(Q)| < 2^{-n}$.
\end{lemma}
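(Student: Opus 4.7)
The plan is to argue by contradiction, using property~3 of the representation $[\rho^d \to \rho^d]$ together with compactness of $K$ (reading the conclusion as $\diam |F_m(Q)| < 2^{-n}$, which is the type-correct interpretation). Suppose for some $n \in \N$ no such $m$ works. Then for every $m$, there is a cube $Q_m \in \QD^d_m(K)$ with $\diam |F_m(Q_m)| \geq 2^{-n}$. Picking $x_m \in Q_m \cap K$ and using compactness of $K$, I would extract a subsequence with $x_{m_j} \to x^* \in K$. The strategy is then to build a single nested sequence of cubes descending around $x^*$, each with image diameter at least $2^{-n}$, thereby contradicting property~3.

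For each $j$ and each $i \leq m_j$, let $C_i^{(j)} \in \QD^d_i$ denote the unique ancestor cube containing $Q_{m_j}$. The chain $(C_i^{(j)})_{0 \leq i \leq m_j}$ is nested, and iterating property~2 gives $|F_{m_j}(Q_{m_j})| \subseteq |F_i(C_i^{(j)})|$ for every $i \leq m_j$, hence $\diam |F_i(C_i^{(j)})| \geq 2^{-n}$. Since $x_{m_j} \in C_i^{(j)}$ and $x_{m_j} \to x^*$, for each fixed $i$ only finitely many cubes of $\QD^d_i$ can arise as $C_i^{(j)}$. A standard diagonal extraction then yields a subsequence along which, for each fixed $i$, the cube $C_i^{(j)}$ eventually stabilises to a single $C_i \in \QD^d_i$. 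By construction the $C_i$ are nested with $\diam C_i = 2^{-i} \to 0$, and the inclusion above, applied along the diagonal, gives $\diam |F_i(C_i)| \geq 2^{-n}$ for every $i$.

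To finish, I would observe that $C_i$ lies within distance $2^{-i}$ of $x^* \in K$, so for $i$ large enough $C_i \in \QD^d_i([-2^i, 2^i]^d)$; property~3 then applies to the tail of $(C_i)_i$ and forces $\diam |F_i(C_i)| \to 0$, contradicting the uniform lower bound $2^{-n}$. The main obstacle is purely bookkeeping: the diagonal extraction must be carried out so that stabilisation at level $i$ is preserved when refining to level $i+1$ (standard successive subsequencing plus a diagonal), and one must check that the tail of $(C_i)$ eventually lies in the domain $\QD^d_i([-2^i, 2^i]^d)$ of $F_i$, which is automatic from the boundedness of $K$.
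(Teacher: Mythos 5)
Your proof is correct, but it takes a genuinely different route from the paper's. The paper argues directly: for each $x \in K$, property~3 of the representation yields a cube $Q_x \in \QD^d_{m_x}$ around $x$ with $\diam |F_{m_x}(Q_x)| < 2^{-n}$; the interiors $\intr{Q}_x$ form an open cover of $K$, compactness gives a finite subcover, and a Lebesgue-number argument produces a single level $m$ such that every cube of $\QD^d_m(K)$ lies inside some $\intr{Q}_{x_i}$, whence property~2 (monotonicity) gives the bound. You instead argue by contradiction via sequential compactness and a diagonal extraction: a putative sequence of bad cubes $Q_m$ yields, after passing to a subsequence of witnesses converging to some $x^* \in K$, a nested chain of ancestor cubes with uniformly large image diameter, directly contradicting property~3. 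Both proofs rest on the same two ingredients -- property~2 to propagate bounds across levels and property~3 as the local modulus -- and both require a mildly technical finishing move (the Lebesgue-number step in the paper, the diagonal extraction in yours). The paper's version is more economical; yours is more explicit about how the ancestor structure of the dyadic mesh interacts with the monotonicity condition, and it makes visible the fact that $x^*$ is forced to lie in every $C_i$, which is exactly what drives the contradiction with property~3. You also correctly identify and repair the type mismatch in the statement ($F_m$, not $F_n$), which the paper's own proof tacitly does as well.
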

\begin{proof}
    By definition, for all $x \in K$ there exists $Q_x \in \QD^d_{m_x}(K)$ with 
    $\operatorname{diam} |F_{m_x}(Q_x)| < 2^{-n}$.
    By monotonicity, if $\QD^d_{m} \ni Q \subseteq \intr{Q}_x$, then 
    $\operatorname{diam} |F_{m}(Q)| \leq \operatorname{diam} |F_{m_x}(Q_x)| < 2^{-n}$.

    By compactness of $K$, the cover 
    $
        \bigcup_{x \in K} \intr{Q}_x
    $
    must have a finite subcover 
    $
        \intr{Q}_{x_1}, \dots, \intr{Q}_{x_s}.
    $
    Now, there exists a number $m \in \N$ such that every $Q \in \QD^d_{m}(K)$ is contained in some $\intr{Q}_{x_i}$ with $i \in \{1, \dots, s\}$.
    The claim follows.
\end{proof}

A closed set $A$ will be represented by a sequence of lists of cubes that exhaust its complement $\R^d \setminus A$,
together with a sequence of lists of cubes that exhaust its interior $\intr{A}$.
This is equivalent to the joint of the standard representation of $A$ as a closed set with negative information (cf.~\cite[Definition 5.1.1]{WeihrauchBook})
and the standard representation of $\intr{A}$ as an open set with positive information (cf.~\cite[Definition 5.1.15]{WeihrauchBook}).

\begin{definition}
Define a representation $\alpha^d \colon \subseteq \NN \to \mathcal{A}(\R^d)$ of the set of closed subsets of $\R^d$ as follows:
Up to coding, a $\alpha^d$-name of a closed set $A \subseteq \R^d$ is a sequence of pairs 
$
    (I^A_n, E^A_n)_n
$
of finite sets $I^A_n, E^A_n \subf \QD^d_n$, satisfying the following properties:
\begin{enumerate}
    \item $|I^A_n| \Subset A$ for all $n$.
    \item $\intr{A} = \bigcup_{n \in \N} \intr{\left|I^A_n\right|}$.
    \item $|E^A_n| \Subset \R^d \setminus A$ for all $n$.
    \item $\R^d \setminus A = \bigcup_{n \in \N} \intr{\left|E^A_n\right|}$.
    \item If $Q \in I^A_n$ and $Q' \in \QD^d_{n + 1}$ with $Q' \subseteq Q$, then $Q' \in I^A_{n + 1}$.
    \item If $Q \in E^A_n$ and $Q' \in \QD^d_{n + 1}$ with $Q' \subseteq Q$, then $Q' \in E^A_{n + 1}$.
\end{enumerate}
\end{definition}

\begin{proposition}
    The representation $\alpha^d$ is open and admissible.
\end{proposition}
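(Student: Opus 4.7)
The plan is to observe that an $\alpha^d$-name carries two independent streams of positive information — the nested dyadic-cube exhaustion $(I^A_n)_n$ of $\intr{A}$ and the nested dyadic-cube exhaustion $(E^A_n)_n$ of $\R^d \setminus A$ — and to reduce both claims to an analogous statement for a representation of open subsets of $\R^d$.

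Let $\omega^d$ denote the representation of $\O(\R^d)$ in which a name of $U$ is a sequence $(C_n)_n$ of finite sets $C_n \subf \QD^d_n$ with $|C_n| \Subset U$, $|C_n| \subseteq |C_{n+1}|$, and $U = \bigcup_n \intr{|C_n|}$. First I would show that $\omega^d$ is open and admissible with respect to the Scott topology on $\O(\R^d)$, whose subbasic opens have the form $\{U : K \Subset U\}$ for $K$ compact. The proof mirrors the argument for $\rho^d$ in Appendix \ref{Appendix: Representations}: each finite prefix of an $\omega^d$-name fixes an inclusion $|C| \Subset U$ for some $C \subf \QD^d_*$, so the prefix-image is exactly the subbasic Scott-open $\{U : |C| \Subset U\}$, which gives openness. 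For admissibility, a continuous realiser for any topologically continuous $f \colon Y \to \O(\R^d)$ is built by scanning the name of $y$ and emitting every finite $C \in \QD^d_n$ that is witnessed to satisfy $|C| \Subset f(y)$; local compactness of $\R^d$ ensures that every $C$ which ought to be emitted is eventually witnessed.

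I would then observe that $\alpha^d$ is computably equivalent to the join of two copies of $\omega^d$, pulled back along the injection $A \mapsto (\intr{A}, \R^d \setminus A) \colon \A(\R^d) \to \O(\R^d) \times \O(\R^d)$. Up to coding, the first component of an $\alpha^d$-name is an $\omega^d$-name of $\intr{A}$, and the second component is an $\omega^d$-name of $\R^d \setminus A$; the two streams are enumerated independently. Since joins of open admissible representations are open and admissible, and pullback along a topological embedding preserves both properties, the claim for $\alpha^d$ follows immediately.

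The main obstacle is the admissibility of the auxiliary representation $\omega^d$. Unlike $\rho^d$, whose names converge in the metric on $\R^d$, names of $\omega^d$ only exhaust an open set from within, and the correct target topology on $\O(\R^d)$ is the Scott topology rather than any metric topology. The crucial point that makes the construction of a continuous realiser go through is local compactness of $\R^d$: it guarantees that any inclusion $K \Subset U$ admits a dyadic-cube witness $C$ with $K \subseteq \intr{|C|} \Subset U$, and that such a witness must appear in every $\omega^d$-name of any open set containing $K$ compactly.
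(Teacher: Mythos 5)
Your approach is correct in outline but genuinely different from the paper's. The paper proves openness directly by applying the semidecidability criterion of Proposition~\ref{Proposition: openness and fibre-overtness} to the data $(I^A_n, E^A_n)_n$, and simply cites \cite{WeihrauchBook,SchroederPhD} for admissibility. You instead factor $\alpha^d$ through two copies of a simpler representation $\omega^d$ of $\O(\R^d)$, pulled back along $A \mapsto (\intr{A}, \R^d \setminus A)$, and invoke preservation of openness and admissibility under products and restriction. Your direct observation that basic cylinders of $\dom \omega^d$ map exactly onto subbasic Scott-opens $\{U : |C| \Subset U\}$ is cleaner than the fibre-overtness route, and the modular decomposition exposes which standard facts about hyperspace representations are at work. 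The trade-off is that the preservation theorems are asserted but not supplied: admissibility under restriction to a subset of the represented space is standard (Schr\"oder), but for openness you additionally need that the final topology induced by the pulled-back representation coincides with the subspace topology on the image of the embedding --- otherwise ``$\alpha^d$ maps opens to opens in $\A(\R^d)$'' and ``$\omega^d \times \omega^d$ maps opens to relatively Scott-open subsets'' would not be the same claim. That coincidence is itself a consequence of the admissibility preservation, so the order of the deductions matters and deserves to be made explicit; as phrased, calling the injection a ``topological embedding'' presupposes the conclusion. A smaller point: the realiser you sketch for the admissibility of $\omega^d$ emits individual cubes as they become witnessed, whereas an $\omega^d$-name is a monotone stream $(C_n)_n$ with $C_n \subf \QD^d_n$; a routine post-processing step (collect, subdivide to level $n$, merge) is needed to produce an output of the right shape.
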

\begin{proof}
    See Proposition \ref{Proposition: alpha is admissible and open (Appendix)} in Appendix \ref{Appendix: Representations}.
\end{proof}

Observe that if $(X_n)_n$ is a name of a point $x \in \R^d$ and if $(I^A_n, E^A_n)_n$ is a name of a closed set $A \subseteq \R^d$, then 
$x \in \intr{A}$ if and only if for all large $n$ we have $X_n \Subset I^A_n$
and 
$x \in \R^d \setminus A$ if and only if for all large $n$ we have $X_n \Subset E^A_n$.

This property extends to arbitrary compact sets:
If $(I^A_n, E^A_n)_n$ is a name of a closed set, and $K$ is a compact set which is contained in $\intr{A}$ or $\R^d \setminus A$,
then this containment is witnessed by a finite initial segment of the name $(I^A_n, E^A_n)_n$.

\begin{proposition}\label{Proposition: witnessing containment of compact sets}
    Let $K \subseteq \R^d$ be a compact set.
    Let $(I^A_n, E^A_n)_n$ be a name of a closed set $A$.
    Then $K \subseteq \intr{A}$ if and only if for all large $n \in \N$, 
    the set
    $K$ is contained in $\intr{\left|I^A_n\right|}$.
    We have $K \subseteq \R^d \setminus A$ if and only if for all large $n \in \N$,
    the set
    $K$ is contained in $\intr{\left|E^A_n\right|}$.
\end{proposition}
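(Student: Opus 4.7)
The plan is to reduce both statements to a routine compactness argument based on properties 1--4 of the representation $\alpha^d$, using the monotonicity properties 5 and 6 to pass from ``some large $n$'' to ``all large $n$''.

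First I would observe that properties 5 and 6 force the sequences of point sets $(|I^A_n|)_n$ and $(|E^A_n|)_n$ to be monotonically increasing in $n$. Indeed, each $Q \in \QD^d_n$ is the union of the finitely many $Q' \in \QD^d_{n+1}$ it contains, so $Q \in I^A_n$ forces all those sub-cubes into $I^A_{n+1}$, giving $|I^A_n| \subseteq |I^A_{n+1}|$, and likewise for $E^A_n$. Taking interiors preserves this inclusion, so $(\intr{|I^A_n|})_n$ and $(\intr{|E^A_n|})_n$ are ascending sequences of open sets whose unions equal $\intr{A}$ and $\R^d\setminus A$ respectively by properties 2 and 4.

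For the forward direction of the first equivalence, assume $K \subseteq \intr{A}$. Then $\{\intr{|I^A_n|}\}_n$ is an open cover of the compact set $K$, so a finite subcollection already covers $K$; by monotonicity a single index $N$ suffices, \textit{i.e.}~$K \subseteq \intr{|I^A_N|}$, and hence $K \subseteq \intr{|I^A_n|}$ for all $n \geq N$. For the reverse direction, if $K \subseteq \intr{|I^A_n|}$ for some $n$, then property 1 together with $\overline{|I^A_n|} \subseteq \intr{A}$ (which is precisely what $|I^A_n| \Subset A$ means) gives $K \subseteq \intr{A}$. The second equivalence is obtained verbatim from the same argument with $I^A_n$ replaced by $E^A_n$, using properties 3, 4, and 6 in place of 1, 2, and 5.

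I do not foresee a substantive obstacle: the only mild point to verify is that the definition of $\Subset$ and the finiteness of each $I^A_n$ (respectively $E^A_n$) really give $\overline{|I^A_n|} \subseteq \intr{A}$, which is immediate since the closure of a finite union of closed cubes is the union itself. The essence of the proof is simply combining the fact that $\alpha^d$-names encode $\intr{A}$ and $\R^d\setminus A$ as ascending unions of open ``safe'' regions with the standard compact-cover-by-monotone-opens argument.
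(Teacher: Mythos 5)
Your proof is correct and follows essentially the same route as the paper's: the trivial direction from property 1, and for the substantive direction a compactness argument against the ascending open cover $\bigcup_n \intr{\left|I^A_n\right|} = \intr{A}$ from property 2, with monotonicity from property 5 reducing the finite subcover to a single index and yielding ``all large $n$''. Your explicit justification of the monotonicity via the dyadic subdivision structure is a small addition the paper leaves implicit, but the argument is the same.
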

\begin{proof}
    We show the claim for $K \subseteq \intr{A}$.
    The second claim is proved analogously.
    If $K \subseteq \intr{\left|I^A_n\right|}$ for some $n$ 
    then $K \subseteq \intr{A}$ by definition.
    Now, assume that $K \subseteq \intr{A}$.
    Since 
        $\intr{\left|I^A_n\right|} \subseteq \intr{\left|I^A_{n + 1}\right|}$
    for all $n$, it suffices to show that there exists $n \in \N$ such that 
    $K \subseteq \intr{\left|I^A_n\right|}$.
    We have 
    $
        K \subseteq \intr{A} = \bigcup_{n \in \N} \intr{\left|I^A_n\right|}
    $
    Since $K$ is compact, the open cover on the right-hand side must have a finite subcover.
    The claim follows.
\end{proof}

\section{The Point Escape Problem}\label{Section: Algorithm and Main Results}

We will now present our algorithm for maximally partially deciding whether a point escapes a closed set under the iteration of a continuous function.
More precisely, we consider the following decision problem:

\begin{definition}
    The \emph{Point Escape Problem} asks to decide for a given $d \in \N$, 
    a continuous function 
    $f \colon \R^d \to \R^d$, given via a $\left[\rho^d \to \rho^d\right]$-name,
    a closed set $A \subseteq \R^d$, given via an $\alpha^d$-name, and a given point 
    $x_0 \in A$, given via a $\rho^d$-name, 
    whether there exists a natural number $n \in \N$ with $f^{n}(x) \notin A$.

    An instance $(f,A,x_0)$ of the Point Escape Problem where $f^{n}(x) \notin A$ for some $n \in \N$ is called an \emph{escaping instance}.
    An instance $(f,A,x_0)$ where $f^{n}(x) \in A$ for all $n$ is called a \emph{trapped instance}.
\end{definition}

\begin{algorithm}[H]
    \caption{Point Escape Algorithm}\label{Algorithm: main algorithm}
    \begin{algorithmic}[1]
    \Procedure{Point-Escape}{$(F_n)_n, ((I^A_n, E^A_n)_n)_n, (X_n)_n$}
    \For{$n = 0, 1, \dots$}
        \Switch{\textsc{Point-Escape-Stage}($n, F_n, I^A_n, E^A_n, X_n$)} 
            \Case{\texttt{Escapes} $\Rightarrow$ \textbf{return} \texttt{Escapes}.}
            \EndCase
            \Case{\texttt{Trapped} $\Rightarrow$  \textbf{return} \texttt{Trapped}.}
            \EndCase
            \Case{\texttt{Unknown} $\Rightarrow$  \textbf{continue}.}
            \EndCase
        \EndSwitch
    \EndFor
    \EndProcedure
    \end{algorithmic}
\end{algorithm}
\begin{algorithm}[H]
    \caption{Point Escape Algorithm -- $n^{\text{th}}$ Stage}\label{Algorithm: main algorithm nth-stage}
    \begin{algorithmic}[1]
    \Procedure{Point-Escape-Stage}{$n \in \N, F \colon \QD^d_n\left(\left[-2^n, 2^n\right]^d\right) \rightrightarrows \QD^d_n, I^A \subf \QD^d_n, E^A \subf \QD^d_n, X \subf \QD^d_n$}
    \State Let 
            $
                C_n = \QD^d_n \setminus \QD^d_n\left(\left[-2^n, 2^n\right]^d\right).
            $
    \If {$|X| \cap |C_n| \neq \emptyset$}           \Comment{If $x_0$ is not certainly contained in $\left[-2^n, 2^n\right]^d$:}
        \State \textbf{return} $\texttt{Unknown}$.  \Comment{\hspace{1cm}proceed to the next stage.}
    \EndIf
    \State Let $Q_{n, 1} = F(X)$. 
    \algorithmiccomment{The set $Q_{n,i}$ will contain the point $f^{i}(x_0)$.}
    \State Let 
    $
        P_{n, 1} = \Set{Q \in Q_{n, 1}}
                  {
                    Q \not\Subset \left|E^A_n\right|
                  }.
    $
    \algorithmiccomment{The set $P_{n,1}$ will contain the point $x_0$.}
    \State Let $O_{n, 1} = Q_{n, 1}$.
    \algorithmiccomment{The set $O_{n,i}$ will contain the set}
    \State\algorithmiccomment{\hspace{1cm}$\{f(x_0), f^2(x_0),\dots,f^{i}(x_0)\}$.}
    \If {$Q_{n, 1} \cap C_n \neq \emptyset$}   \algorithmiccomment{If $F_n$ cannot be applied to $Q_{n, 1}$:}
    \State \textbf{return} $\texttt{Unknown}$. \algorithmiccomment{\hspace{1cm}proceed to the next stage.}
    \EndIf
    \For {$i = 1, 2, \dots$}
    \State Compute $Q_{n, i + 1} = F(Q_{n, i})$. \algorithmiccomment{Compute an inclusion of the next point on the orbit.}
    \State Compute 
    $
        P_{n, i + 1} = \Set{Q \in F(P_{n, i})}
                        {Q \not\Subset \left|E^A_n\right|}.
    $                                               \algorithmiccomment{The set $P_{n,i+1}$ contains $f(P_{n,i}) \cap A$.}
    \State Compute $O_{n, i + 1} = O_{n, i} \cup Q_{n, i + 1}$. 
    \algorithmiccomment{Compute the next inclusion of the whole orbit.}
    \If {$P_{n, i + 1} = \emptyset$}
    \algorithmiccomment{If $P_{n, i}$ is empty:}
    \State \textbf{return} $\texttt{Escapes}$.
    \algorithmiccomment{\hspace{1cm}the point must have escaped.}
    \ElsIf{$Q_{n, i + 1} \cap C_n \neq \emptyset$} \algorithmiccomment{If $F_n$ cannot be applied to $Q_{n, i}$:}
    \State \textbf{return} $\texttt{Unknown}$. \algorithmiccomment{\hspace{1cm}proceed to the next stage.}
    \ElsIf {$O_{n, i} = O_{n, i + 1}$}  \algorithmiccomment{If the orbit-inclusion is invariant under $F_n$:}
    \If{$O_{n, i} \Subset I^A_n$}       \algorithmiccomment{\hspace{1cm}If the orbit is certainly contained in $A$:}
    \State \textbf{return} $\texttt{Trapped}$. \algorithmiccomment{\hspace{1cm}\hspace{1cm}the instance must be trapped.}
    \Else                                      \algorithmiccomment{\hspace{1cm}Otherwise, no further progress can be made:}
    \State \textbf{return} $\texttt{Unknown}$. \algorithmiccomment{\hspace{1cm}\hspace{1cm}proceed to the next stage.}
    \EndIf 
    \EndIf
    \EndFor
    \EndProcedure
    \end{algorithmic}
\end{algorithm}

Algorithm \ref{Algorithm: main algorithm} above takes as input a name $(F_n)_n$ of a continuous function $f \colon \R^d \to \R^d$,
a name $(I^A_n, E^A_n)_n$ of a closed set $A \subseteq \R^d$,
and a name $(X_n)_n$ of a point $x_0 \in A$.
It either runs forever or halts within finitely many steps.
Upon halting, it reports whether there exists $n \in \N$ such that $f^n(x_0) \notin A$.
The algorithm proceeds in stages, working on a subdivision of $[-2^n, 2^n]^d$ into cubes of side-length $2^{-n}$ in each stage. 
The stages are described in Algorithm \ref{Algorithm: main algorithm nth-stage}.
At each stage, Algorithm \ref{Algorithm: main algorithm nth-stage} computes an overapproximation of the orbit of $x_0$ to determine whether $x_0$ escapes. 
At the same time, it searches for an invariant set under $F_n$ which is contained in $\intr{A}$, yielding an invariant set under $f$ which is contained in $\intr{A}$.

It is straightforward to verify that Algorithm \ref{Algorithm: main algorithm} is correct, \textit{i.e.}, that it produces the correct answer upon halting. 
\begin{proposition}\label{Proposition: correctness}
    Let $A \subseteq \R^d$ be a closed set, let $f \colon \R^d \to \R^d$ be a continuous function,
    and let $x_0 \in A$ be a point.
    Let $(I^A_n, E^A_n)_n$, $(F_n)_n$, and $(X_n)_n$ be names of $A$, $f$, and $x_0$ respectively.
    Assume that Algorithm \ref{Algorithm: main algorithm} is given the inputs $(I^A_n, E^A_n)_n$, $(F_n)_n$, and $(X_n)_n$.
    
    If the algorithm eventually halts, it correctly reports whether $x_0$ is trapped or whether it escapes.
\end{proposition}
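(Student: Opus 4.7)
The plan is to verify correctness by identifying loop invariants for the quantities $Q_{n,i}$, $O_{n,i}$, and $P_{n,i}$ maintained by Algorithm \ref{Algorithm: main algorithm nth-stage}, and then to show that the two halting conditions (returning \texttt{Escapes} or \texttt{Trapped}) correctly certify the respective outcomes. First I would establish, by induction on $i$, that whenever the procedure reaches the body of the loop at step $i$ without having returned \texttt{Unknown}, the set $|Q_{n,i}|$ contains $f^{i}(x_0)$ and $|O_{n,i}|$ contains the orbit segment $\{f(x_0), \dots, f^{i}(x_0)\}$. Both inductive steps reduce to the defining property $f(Q) \subseteq \intr{|F_n(Q)|}$ of an $\left[\rho^d \to \rho^d\right]$-name, combined with the fact that the \texttt{Unknown}-guard $Q_{n,i} \cap C_n = \emptyset$ ensures that $F_n$ is actually defined on every cube in $Q_{n,i}$.

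For the \texttt{Escapes} case, I would strengthen the invariant by showing inductively that if $f(x_0), \dots, f^{i}(x_0)$ all lie in $A$, then $f^{i}(x_0) \in |P_{n,i}|$. The base case uses that $f(x_0) \in |Q_{n,1}|$ lies in some cube $Q \in Q_{n,1}$, and that $Q \Subset |E^A_n|$ would force $Q \subseteq \R^d \setminus A$, contradicting $f(x_0) \in Q \cap A$; hence $Q \in P_{n,1}$. The inductive step is analogous, using $f^{i+1}(x_0) \in f(|P_{n,i}|) \subseteq |F(P_{n,i})|$. Contrapositively, $P_{n,i+1} = \emptyset$ forces some $f^{j}(x_0) \notin A$ with $1 \leq j \leq i+1$, which certifies that the instance is escaping.

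The main obstacle is the \texttt{Trapped} case, because the algorithm draws an infinite conclusion from a finite computation. The crucial observation is that the guard $O_{n,i} = O_{n,i+1}$ is equivalent to $Q_{n,i+1} \subseteq O_{n,i}$, and combined with the recurrences $Q_{n,j+1} = F(Q_{n,j})$ for $j = 1, \dots, i$, this yields $F(Q_{n,j}) \subseteq O_{n,i}$ for every $j \leq i$, hence $F(O_{n,i}) \subseteq O_{n,i}$. Using once more the containment $f(Q) \subseteq \intr{|F_n(Q)|}$, this lifts to forward-invariance $f(|O_{n,i}|) \subseteq |O_{n,i}|$ for the actual map $f$. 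Combined with the base fact $f(x_0) \in |Q_{n,1}| \subseteq |O_{n,i}|$, a straightforward induction on $j$ gives $f^{j}(x_0) \in |O_{n,i}|$ for all $j \geq 1$. The guard $O_{n,i} \Subset I^A_n$ together with $|I^A_n| \Subset A$ from the name of $A$ then forces $|O_{n,i}| \subseteq \intr{A} \subseteq A$, so $f^{j}(x_0) \in A$ for all $j \geq 1$; together with the hypothesis $x_0 \in A$, this confirms that the instance is trapped.
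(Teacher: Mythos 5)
Your proposal is correct and follows essentially the same route as the paper's proof: it establishes the same loop invariants for $P_{n,i}$ and $O_{n,i}$, concludes escape from $P_{n,N} = \emptyset$, and concludes trapping from the forward-invariance $F_n(O_{n,i}) \subseteq O_{n,i}$ lifted to $f(|O_{n,i}|) \subseteq |O_{n,i}|$ together with $|O_{n,i}| \Subset A$. Your conditional formulation of the escape invariant (``if $f(x_0),\dots,f^i(x_0) \in A$ then $f^i(x_0) \in |P_{n,i}|$'') is a slight but welcome refinement of the paper's unconditional claim $f(x_0) \in |P_{n,1}|$, which strictly requires $f(x_0) \in A$.
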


\begin{proof}
    Assume that the algorithm halts and reports that the point escapes.
    Observe that for all $n$, the sequence $P_{n, 1}, P_{n, 2}, P_{n, 3}, \dots$ satisfies $f(x_0) \in |P_{n, 1}|$ and 
    $
        f\left(|P_{n, i}|\right) \cap A \subseteq \left|P_{n, i + 1}\right|.
    $
    Indeed, by definition we have 
    $
        P_{n, i + 1} = 
        \Set{Q \in F_n(P_{n, i})}
            {Q \not \Subset \left|E_n^A\right|}.
    $
    Let $y \in f(|P_{n, i}|) \cap A$.
    Then $y \in |F_n(P_{n, i})|$, so that $y \in Q$ for some $Q \in F_n(P_{n, i})$.
    If $Q \Subset \left|E_n^A\right|$, then $Q \subseteq \R^d \setminus A$, contradicting
    the assumption that $y \in A$.
    Hence, $Q \in P_{n, i + 1}$, so that $y \in |P_{n, i + 1}|$.

    Now, if the algorithm halts and reports that $x_0$ escapes under $f$, it has computed a finite sequence 
    $P_{n, 1}, P_{n, 2}, P_{n, 3}, \dots, P_{n, N}$
    satisfying 
    $f(x_0) \in |P_{n, 1}|$,
    $f\left(|P_{n, i}|\right) \cap A \subseteq \left|P_{n, i + 1}\right|$,
    and $P_{n, N} = \emptyset$.
    It follows that $f^k(x_0) \notin A$ for some $k \leq N$.

    Now assume that the algorithm halts and reports that the point is trapped.
    Observe that by construction we have 
    $O_{n, i + 1} = O_{n, i} \cup F_n(O_{n, i})$,
    so that 
    $F_n(O_{n, i}) \subseteq O_{n, i + 1}$.
    In particular, if $O_{n, i} = O_{n, i + 1}$, then $F(O_{n, i}) \subseteq O_{n, i}$.

    Hence, if the algorithm halts and reports that $x_0$ is trapped in $A$ under $f$,
    then the algorithm has computed a set of cubes $O_{n, i}$ with $f(x_0) \in O_{n, i}$,
    $f(|O_{n, i}|) \subseteq |F(O_{n, i})| \subseteq |O_{n, i}|$,
    and $|O_{n, i}| \Subset A$.
    It follows that $f^k(x_0) \in |O_{n, i}| \subseteq A$ for all $k \geq 1$.
\end{proof}

\subsection{Completeness}

We will now show that our algorithm is \emph{complete}, \textit{i.e.} that it halts on all problem instances for which the answer is robust under small perturbations.

It is easy to see that the algorithm halts on all problem instances where the point escapes.
We will show that if the point is trapped, termination of the algorithm is guaranteed by the existence of a \emph{robust invariant}.
Conversely, the absence of a robust invariant entails that the point escapes under arbitrarily small perturbations of the function.

\begin{definition}
Let $f \colon \R^d \to \R^d$ be a continuous function.
A \emph{robust invariant} for $f$ 
is a compact set $V \subseteq \R^d$ such that $f(V) \subseteq \intr{V}$.
\end{definition}

We first observe the following sufficient conditions for termination:
\begin{proposition}\label{Proposition: termination conditions}
    Let $A \subseteq \R^d$ be a closed set, let $f \colon \R^d \to \R^d$ be a continuous function,
    and let $x_0 \in A$ be a point.
    Let $(I^A_n, E^A_n)_n$, $(F_n)_n$, and $(X_n)_n$ be names of $A$, $f$, and $x_0$ respectively.
    Assume that Algorithm \ref{Algorithm: main algorithm} is given the inputs $(I^A_n, E^A_n)_n$, $(F_n)_n$, and $(X_n)_n$.
    \begin{enumerate}
        \item If $x_0$ escapes $A$ under $f$, then the algorithm eventually halts.
        \item If $f(x_0)$ is contained in the interior of a robust invariant $V$ for $f$ with $V \subseteq \intr{A}$, then the algorithm eventually halts.
    \end{enumerate}
\end{proposition}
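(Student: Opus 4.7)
Both halves reduce to showing that some stage $n$ of Algorithm~\ref{Algorithm: main algorithm} returns a definite answer. The two tools that drive the arguments are Lemma~\ref{lemma: modulus of continuity witnessed by representation}, which tightens the cube overapproximations $F_n$ on compact sets, and Proposition~\ref{Proposition: witnessing containment of compact sets}, which turns compact containments in $\intr{A}$ or $\R^d \setminus A$ into concrete witnesses from the name of $A$.

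\emph{Part 1.} Let $N$ be the least index with $f^N(x_0) \notin A$. Choose $\epsilon > 0$ so that $\overline{B(f^N(x_0), \epsilon)} \cap A = \emptyset$; by Proposition~\ref{Proposition: witnessing containment of compact sets}, this closed ball lies in $\intr{|E^A_n|}$ for all large $n$. The containments $|P_{n,i}| \subseteq |F_n(P_{n,i-1})|$ and $f^i(x_0) \in |P_{n,i}|$ for $1 \leq i \leq N-1$ are established in the proof of Proposition~\ref{Proposition: correctness} (using $f^i(x_0) \in A$ for $i<N$). Iterating Lemma~\ref{lemma: modulus of continuity witnessed by representation}, one then shows that for any prescribed $\delta > 0$ and large enough $n$, $|P_{n,i}|$ lies in the $\delta$-neighborhood of $f^i(x_0)$ for each $i \leq N-1$, so in particular $|F_n(P_{n, N-1})| \subseteq \overline{B(f^N(x_0), \epsilon)}$. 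Every cube $Q \in F_n(P_{n,N-1})$ then satisfies $Q \Subset |E^A_n|$, hence $P_{n, N} = \emptyset$ and the algorithm returns \texttt{Escapes}. The minimality of $N$ forces the points $f(x_0), \ldots, f^N(x_0)$ to be pairwise distinct: a collision at times $i<j\leq N$ would make the orbit eventually periodic within $A$, contradicting $f^N(x_0) \notin A$. For $n$ large enough, the cube diameter $2^{-n}$ is less than half their minimal pairwise distance and the entire orbit segment sits in $\intr{[-2^n, 2^n]^d}$, so none of the early-termination branches (the $C_n$-check, the $O_{n,i} = O_{n,i+1}$-check) trigger before iteration $i = N-1$.

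\emph{Part 2.} Set $\eta = \operatorname{dist}(f(V), \R^d \setminus \intr{V}) > 0$ (using $f(V) \subseteq \intr{V}$), and pick $\epsilon > 0$ so small that $\overline{B(V, 2\epsilon)} \subseteq \intr{A}$ and $f(\overline{B(V, 2\epsilon)}) \subseteq B(f(V), \eta/2)$ (by uniform continuity of $f$ on a compact neighborhood of $V$). Let $W_n = \QD^d_n(\overline{B(V, \epsilon)})$, a finite set of cubes. For all sufficiently large $n$:
\begin{enumerate}
    \item $|W_n| \subseteq B(V, 2\epsilon)$, since each cube in $W_n$ has diameter $2^{-n} < \epsilon$;
    \item $|F_n(W_n)| \subseteq B(f(V), \eta) \subseteq \intr{V} \subseteq |W_n|$ by Lemma~\ref{lemma: modulus of continuity witnessed by representation}; since distinct level-$n$ grid cubes have pairwise disjoint interiors, any $Q \in F_n(W_n)$ --- itself a level-$n$ cube contained in the union $|W_n|$ of level-$n$ cubes --- must coincide with some $Q' \in W_n$, so $F_n(W_n) \subseteq W_n$;
    \item $|F_n(X_n)| \subseteq B(V, \epsilon)$ because $f(x_0) \in \intr{V}$ and $|F_n(X_n)|$ may be made arbitrarily close to $f(x_0)$, giving $Q_{n,1} \subseteq W_n$;
    \item $\overline{B(V, 2\epsilon)} \subseteq \intr{|I^A_n|} \cap \intr{[-2^n, 2^n]^d}$ by Proposition~\ref{Proposition: witnessing containment of compact sets} and boundedness of $V$.
\end{enumerate}
By induction on $i$, items (2) and (3) give $Q_{n,i} \subseteq W_n$ for every $i$, hence $O_{n,i} \subseteq W_n$. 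Since $W_n$ is finite and $O_{n,i}$ is nondecreasing in $i$, it stabilizes at some smallest $I$ with $O_{n, I} = O_{n, I+1}$. Combining $|O_{n,I}| \subseteq |W_n| \subseteq \overline{B(V, 2\epsilon)}$ with item (4) certifies $O_{n, I} \Subset I^A_n$, and Algorithm~\ref{Algorithm: main algorithm nth-stage} returns \texttt{Trapped}.

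\emph{Main obstacle.} The crux is item (2): transferring the continuous invariance $f(V) \subseteq \intr{V}$ to invariance of a discrete cube cover $W_n$ under the discrete map $F_n$. This demands that the parameters $\eta$ and $\epsilon$ be chosen before $n$ is fixed, so that the modulus-of-continuity error of $f$ together with the cube scale $2^{-n}$ both fit inside the slack $\eta$. The upgrade from the set-theoretic containment $|F_n(W_n)| \subseteq |W_n|$ to the membership-level inclusion $F_n(W_n) \subseteq W_n$ then follows from the rigidity of the uniform dyadic grid (the interior of any level-$n$ cube meets at most one level-$n$ cube, namely itself).
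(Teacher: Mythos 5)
Your proof is correct and follows essentially the same strategy as the paper's: use Lemma~\ref{lemma: modulus of continuity witnessed by representation} to tighten the discrete overapproximations $F_n$ on a compact invariant neighbourhood, and use Proposition~\ref{Proposition: witnessing containment of compact sets} to convert the resulting compact containments in $\R^d\setminus A$ (Part~1) or $\intr{A}$ (Part~2) into concrete witnesses $E^A_n$, $I^A_n$ from the name of $A$, with the choice of $W_n=\QD^d_n\bigl(\overline{B}(V,\epsilon)\bigr)$ instead of the paper's $V_n=\{Q\in\QD^d_n : Q\subseteq\intr{V}\}$ being a cosmetic variation of the same invariant cube cover. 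One welcome piece of extra care is your observation in Part~1 that $f(x_0),\dots,f^N(x_0)$ are pairwise distinct (else the orbit would be eventually periodic inside $A$, contradicting minimality of $N$), so the $Q_{n,i}$ are eventually disjoint and the branch testing $O_{n,i}=O_{n,i+1}$ cannot cut stage $n$ short before $P_{n,N}$ is examined --- a point the paper's proof passes over without comment.
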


\begin{proof}
    \begin{enumerate}
    \item 
        For $n \in \N$, define $Q_{n, 1} = F_n(X_n)$, 
        $P_{n, 1} = \Set{Q \in Q_{n, 1}}{Q \not \Subset |E_n^A|}$.
        For $i \geq 1$, define $Q_{n, i}$ and $P_{n, i}$ recursively via 
        $Q_{n, i + 1} = F_n(Q_{n, i})$,
        and
        $P_{n, i + 1} = \Set{Q \in F_n(P_{n,i})}{Q \not \Subset |E_n^A|}$.
        The double-sequences $(P_{n, i})_{n, i}$ and $(Q_{n, i})_{n,i}$ thus defined agree with the 
        double-sequences $(P_{n, i})_{n, i}$ and $(Q_{n, i})_{n,i}$ that the algorithm computes, 
        except that the algorithm does not necessarily compute the sequence elements for all values of $n$ and $i$.

        Assume that the point eventually enters $\R^d \setminus A$.
        Then $f^N(x_0) \in \R^d \setminus A$ for some minimal $N$.
        The sequence $x_0, f(x_0), \dots, f^N(x_0)$ is bounded, say by $B \in \N$.
        By Lemma \ref{lemma: modulus of continuity witnessed by representation} there exists a function $\omega \colon \N \to \N$ such that 
        \[
            Q \in \QD^d_{\omega(n)}([-B,B]^d) \rightarrow \operatorname{diam} \left|F_{\omega(n)}(Q)\right| < 2^{-n}.
        \]
        We may assume that $\omega(n) \geq n$ for all $n \in \N$.
        Since $\R^d \setminus A$ is open, there exists $m_0 \in \N$ such that 
        $B(f^N(x_0), 2^{-m_0}) \subseteq \R^d \setminus A$.
        We may assume that $m_0$ is chosen so large that $|C_{m_0}|$ is contained in the complement of $[-B - 1,B + 1]^d$.

        Since $(X_n)_n$ is a name of $X_0$, there exists $m_1$ such that $\operatorname{diam} |X_{m_1}| < 2^{-\omega^N(m_0 + 1)}$. 

        Let $n_0 = \max\{m_0 + 1, m_1\}$.
        We claim that $|Q_{n_0, N}|$ is contained in $\R^d \setminus A$ and that $P_{n_0, i}$ does not intersect $C_{n_0}$ for $i \leq N$.
        
        By definition, $Q_{n_0, 1} = F_{n_0}(X_{n_0})  \subseteq \QD^d_{n_0}$ contains $x_0$ and satisfies 
        $\operatorname{diam} |Q_{n_0, 0}| < 2^{-\omega^{N - 1}(m_0 + 1)}$. 
        We have $Q_{n_0, i} = F_{n_0}(Q_{n_0, i - 1})$ for $i = 2, 3, \dots$.
        An easy induction shows that if $i \leq N$, then $Q_{n_0, i}$ is contained in $[-B - 1, B + 1]^d$ and that 
        $\operatorname{diam} |Q_{n_0, i}| < 2^{-\omega^{N - i}(m_0 + 1)}$.
        In particular, $P_{n_0, i} \subseteq Q_{n_0, i}$ does not intersect $C_n$ for $i \leq N$.
        We obtain $\operatorname{diam} |Q_{n_0, N}| < 2^{-m_0 - 1}$.
        By construction, $|Q_{n_0,N}|$ contains the point $f^N(x_0)$.
        We obtain
        \[ 
            |Q_{n_0, N}|
            \subseteq
            B(f^N(x_0), 2^{-m_0}) 
            \subseteq
            \R^d \setminus A.
        \]

        Since $|P_{n, N}| \subseteq |Q_{n, N}|$ for all $n$, we obtain $|P_{n_0, N}| \subseteq \R^d \setminus A$.

        Since $|P_{n + 1, N}| \subseteq |P_{n, N}|$, and $|P_{n, N}|$ is compact for all $n$,
        it follows from Proposition \ref{Proposition: witnessing containment of compact sets} that 
        $|P_{n, N}| \subseteq |P_{n_0, N}| \Subset |E^A_n|$ for all sufficiently large $n$.
        Since $P_{n, N}$ is disjoint from $E^A_n$ by construction, it follows that $P_{n, N} = \emptyset$.
        Further, since $P_{n, i}$ does not intersect $C_{n}$ for $n \geq n_0$ and $i \leq N$, 
        if $n$ is sufficiently large and the algorithm has not halted 
        within $n$ stages, 
        the algorithm will not return $\texttt{Unknown}$ during the $n^{\text{th}}$ stage and halt at the latest upon checking 
        whether $P_{n, N}$ is empty,
        reporting that the point escapes.
    \item 
        Assume that $f$ has a robust invariant $V \subseteq \intr{A}$ with $f(x_0) \in \intr{V}$.
        For $n \in \N$, let $V_n$ denote the set of all cubes $Q \in \QD^d_n$ with $|Q| \subseteq \intr{V}$.
        We claim that for sufficiently large $m$ we have $F_m(V_m) \subseteq V_m$.
        Indeed, we have by assumption that $f(V) \subseteq \intr{V}$.
        Since $f(V)$ is compact, there exists $n \in \N$ such that for all $y \in f(V)$ we have $B(y, 2^{-n}) \subseteq \intr{V}$.
        By Lemma \ref{lemma: modulus of continuity witnessed by representation} there exists $m \in \N$ such that 
        $Q \in \QD^d_m(V)$ implies $\operatorname{diam} |F(Q)| < 2^{-n-1}$.
        Letting $x \in Q\in \QD^d_m(V)$, we have $f(x) \in |F(Q)|$, which implies 
        $
            \intr{V} \supseteq B(f(x), 2^{-n}) \supseteq |F(Q)|
        $.
        The claim follows.

        We now claim that the algorithm halts eventually.
        Since $f(x_0) \in \intr{V}$, we have $Q_{n, 1} \in V_n$ for sufficiently large $n$.
        We may assume that $n$ is so large that $F_n(V_n) \subseteq V_n$.
        It then follows that $Q_{n, i} \subseteq V_n$ for all $i$ and thus 
        $O_{n, i} \subseteq V_n$ for all $i$.
        Since $V$ is compact, the set $V_n$ is finite.
        The sequence $O_{n, 1} \subseteq O_{n, 2} \subseteq \dots$ is strictly increasing and bounded by $V_n$.
        Hence, we must eventually have $O_{n, i} = O_{n, i + 1}$.
        Algorithm \ref{Algorithm: main algorithm nth-stage} now halts and returns $\texttt{Trapped}$ provided that $O_{n, i} \Subset I^A_n$, and otherwise returns $\texttt{Unknown}$.
        Thus, the algorithm computes a sequence of sets of cubes $O_{N, i_N}, O_{N, i_{N + 1}}$ with 
        $F_n(O_{n, i_n}) \subseteq O_{n, i_n}$ and $|O_{n, i_n}| \subseteq \intr{V} \subseteq \intr{A}$ and checks if 
        $O_{n, i_n} \Subset I^A_n$.
        Since $F_{n + 1}$ refines $F_n$ in the sense that $Q' \subseteq Q$ implies
        $|F_{n + 1}(Q')| \subseteq |F_n(Q)|$, we have $|O_{n, i_n}| \supseteq |O_{n + 1, i_{n + 1}}|$ for all $n$.
        It follows from Proposition \ref{Proposition: witnessing containment of compact sets} that for all large $n$ we have 
        $I^A_n \Supset \left|O_{N, i_N}\right| \supseteq \left|O_{n, i_n}\right|$,
        so that Algorithm \ref{Algorithm: main algorithm nth-stage} eventually outputs $\texttt{Trapped}$.
        This finishes the proof.
\end{enumerate}
\end{proof}

The core of the completeness proof is the following
\begin{lemma}\label{Lemma: main perturbation theorem}
    Let $f \colon \R^d \to \R^d$ be a continuous map.
    Let $K \subseteq \R^d$ be a compact set.
    Let $x_0 \in \intr{K}$.

    Assume that $x_0$ is not contained in the interior of any robust invariant of $f$ in $K$.
    Then for all $\varepsilon > 0$ there exists a map 
    $\widetilde{f} \colon \R^d \to \R^d$
    with
    $
        \norm{\widetilde{f} - f}_{\infty} < \varepsilon
    $ 
    such that $x_0$ escapes $K$ under $\widetilde{f}$.
\end{lemma}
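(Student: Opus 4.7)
The plan is to combine a pseudo-orbit argument with a bump-function perturbation, using the hypothesis on robust invariants to force the existence of a suitable pseudo-orbit from $x_0$. Fix $\varepsilon > 0$ and set $\eta = \varepsilon/2$; the factor of two will be needed to accommodate a boundary case. Define an increasing sequence of open sets by $E_0 = \R^d \setminus K$ and $E_{n + 1} = E_n \cup f^{-1}(B(E_n, \eta))$, so that $E_\infty := \bigcup_n E_n$ is exactly the set of points from which some $\eta$-pseudo-orbit leaves $K$ in finitely many steps. Since $f$ is continuous, every $E_n$ is open; hence $V := \R^d \setminus E_\infty$ is closed and, because $E_0 \subseteq E_\infty$, is contained in $K$, so in particular $V$ is compact.

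The key observation is that $V$ is a robust invariant: if $y \in V$ and $z \in B(f(y), \eta)$, then $z \in E_n$ would force $f(y) \in B(E_n, \eta)$ and therefore $y \in E_{n + 1}$, contradicting $y \in V$. Thus $B(f(y), \eta) \subseteq V$ for every $y \in V$, giving $f(V) \subseteq \intr{V}$. By the hypothesis of the lemma we then have $x_0 \notin \intr{V}$, so either $x_0 \in E_\infty$ or $x_0 \in \partial V \subseteq \overline{E_\infty}$.

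In the first case, unwinding the inductive definition of the $E_n$ yields an $\eta$-pseudo-orbit $x_0 = y_0, y_1, \dots, y_N$ with $\norm{f(y_i) - y_{i + 1}}_\infty < \eta$ for $i < N$ and $y_N \notin K$. In the second case I would pick, using continuity of $f$ at $x_0$, some $y \in E_\infty$ close enough to $x_0$ that $\norm{f(y) - f(x_0)}_\infty < \eta$, take an $\eta$-pseudo-orbit $y = y_0, y_1, \dots, y_N$ witnessing $y \in E_\infty$, and replace $y_0$ by $x_0$; the first transition then has error at most $\norm{f(x_0) - f(y)}_\infty + \norm{f(y) - y_1}_\infty < 2\eta = \varepsilon$, so $x_0, y_1, \dots, y_N$ is an $\varepsilon$-pseudo-orbit from $x_0$ escaping $K$. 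I expect this boundary case to be the main technical step, and the factor-of-two slack is precisely what makes it go through.

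It remains to convert the $\varepsilon$-pseudo-orbit into an honest orbit of some continuous perturbation. By truncating at the first exit time and dropping any loops I may assume $y_0, \dots, y_{N - 1}$ are pairwise distinct and lie in $K$, while $y_N \notin K$. Choose $\delta > 0$ so small that the balls $B(y_i, \delta)$, $i < N$, are pairwise disjoint, and pick continuous bump functions $\varphi_i \colon \R^d \to [0, 1]$ supported in $B(y_i, \delta)$ with $\varphi_i(y_i) = 1$. Define
\[
    \widetilde{f}(x) = f(x) + \sum_{i = 0}^{N - 1} \varphi_i(x)\bigl(y_{i + 1} - f(y_i)\bigr).
\]
Disjointness of the supports yields $\norm{\widetilde{f} - f}_\infty \leq \max_{i < N} \norm{y_{i + 1} - f(y_i)}_\infty < \varepsilon$, and by construction $\widetilde{f}(y_i) = y_{i + 1}$ for each $i < N$, so $\widetilde{f}^N(x_0) = y_N \notin K$, as required.
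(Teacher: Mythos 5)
Your proof is correct, and it takes a genuinely different route from the paper's. The paper builds a finite sequence of disjoint closed ``danger layers'' $E_0 = \R^d \setminus \intr K$, $E_{i+1} = \{x \in K \setminus \bigcup_{j\le i} B(E_j,\delta) : f(x) \in \bigcup_{j\le i}\overline B(E_j,\delta)\}$; the hypothesis forces each layer to be nonempty, and a separate compactness/packing argument (Lemma \ref{Lemma: disjoint sequence of sets in a compact set}) is needed to show the sequence terminates with $x_0 \in \overline B(E_N,\delta)$, after which a pseudo-orbit is chased backwards through the layers. You instead form the increasing union of open sets $E_n$ of points that admit an $\eta$-pseudo-orbit of length $\le n$ leaving $K$, and you observe that $V = \R^d \setminus \bigcup_n E_n$ is \emph{itself} a compact robust invariant contained in $K$; the hypothesis then forces $x_0 \in \overline{\bigcup_n E_n}$ with no termination argument required, since each point of $\bigcup_n E_n$ comes with its own finite pseudo-orbit by construction. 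The trade-off is that you need the $\varepsilon/2$ slack and a short separate argument for the boundary case $x_0 \in \partial V$, which roughly plays the role of the paper's choice of $\delta$ with $x_0 \notin \overline B(E_0,\delta)$. The final conversion of a pseudo-orbit to an honest orbit of a perturbation is essentially the same in both proofs (you use bump functions with pairwise disjoint supports where the paper uses explicit linear interpolation on annuli). Two small points worth making explicit in a polished write-up: in the boundary case you should state that $y$ is chosen in $\intr K$ as well as close to $x_0$, so that the pseudo-orbit from $y$ has positive length; and the loop-dropping step should note that after dropping a loop $y_i = y_j$ the new transition $y_i \to y_{j+1}$ still has error $\norm{f(y_j) - y_{j+1}}_\infty < \eta$, so the $\varepsilon$-bound on the (possibly larger) first transition out of $x_0$ is the only one that uses the full slack.
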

\begin{proof}[Proof Sketch]
    We will only sketch the proof idea.
    The full proof is given in Appendix \ref{Appendix: proof of main perturbation theorem}.

    Let $\varepsilon > 0$.
    Consider the set 
    $
        E_0 = \R^d \setminus \intr{K}.
    $
    Since $K$ is compact and $f$ is continuous,
    there exists $0 < \delta < \varepsilon/4$ such that 
    $x_0 \notin \overline{B}(E_0, \delta)$
    and such that $x, y \in K$,
    $d(x,y) \leq \delta$ implies $d(f(x), f(y)) < \varepsilon/4$.
    
    Consider the compact set $K_0 = K \setminus B(E_0, \delta)$.
    Then by construction, $K_0 \Subset K$ and $x_0$ is contained in the interior of the complement of $K_0$.
    By assumption, $K_0$ cannot be a robust invariant of $f$.
    Hence, the set 
    $
        E_{1}
            =
            \Set{ x \in K \setminus B(E_0, \delta) }
                { 
                    f(x) \in \overline{B}(E_0,\delta) 
                }
    $
    must be non-empty.
    If $x_0 \notin \overline{B}(E_1, \delta)$, we can repeat the same argument, 
    applied to 
    $K_1 = K \setminus \left(B(E_0, \delta) \cup B(E_1, \delta)\right)$
    to obtain the non-empty closed set
    $
        E_{2}
            =
            \Set{ x \in K \setminus \left(B(E_0, \delta) \cup B(E_1, \delta)\right)}
                { 
                    f(x) \in \overline{B}(E_1,\delta) 
                }.
    $

    Proceeding by induction, we obtain a sequence of non-empty closed subsets $E_1, E_2, \dots$
    of $K$ 
    such that $E_i$ is disjoint from $\bigcup_{j = 0}^{i - 1} B(E_j, \delta)$
    and for  
    $x \in K \setminus \bigcup_{j = 0}^{i}B(E_j, \delta)$
    we have $f(x) \in \overline{B}(E_{i}, \delta)$ if and only if $x \in E_{i + 1}$.
    A straightforward compactness argument shows that the sequence must be finite, so that 
    there exist $N \geq 1$ with $x_0 \in \overline{B}(E_N, \delta)$
    and $x_0 \notin \overline{B}(E_i, \delta)$ for $i < N$.  

    Now, since $x_0$ is $\delta$-close to an element of $E_N$, which gets mapped $\delta$-close to some element of $E_{N - 1}$ under $f$, we can modify $f$ on a small neighbourhood of $x_0$ such that $x_0$ gets mapped to a point $x_1 \in E_{N - 1}$.
    We then proceed to modify $f$ on a small neighbourhood of $x_1$ to ensure that $x_1$ maps to a point $x_2 \in E_{N - 2}$.
    By induction, we obtain a sequence of points $x_1, \dots, x_N$ and a perturbation $\widetilde{f}$ of $f$ by at most $\varepsilon$, such that 
    $\widetilde{f}(x_i) = x_{i + 1}$ and $x_N \notin K$.
\end{proof}

Lemma \ref{Lemma: main perturbation theorem} applies only to compact sets.
It admits the following extension to arbitrary closed sets:

\begin{lemma}\label{Lemma: extension to arbitrary closed sets}
    Let $f \colon \R^d \to \R^d$, let $A \subseteq \R^d$ be a closed set,
    and let $x_0 \in A$.

    If there is no robust invariant $V$ for $f$ with 
    $f(x_0) \in \intr{V} \subseteq V \subseteq \intr{A}$, then there exist
    sequences $(f_n)_n$ and $(A_n)_n$ with $f_n \to f$ and $A_n \to A$ 
    such that
    $x_0 \in A$ 
    and
    $x_0$ escapes $A_n$ under $f_n$.

    Further, if $A \neq \R^d$, then we may take $A_n = A$ for all $n$.
\end{lemma}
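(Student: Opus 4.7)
The plan is to reduce to Lemma \ref{Lemma: main perturbation theorem} by truncating $A$ to compact pieces. First I would make two preliminary reductions. The hypothesis can be strengthened to: no robust invariant $V \subseteq A$ contains $f(x_0)$ in its interior. Indeed, given such $V \subseteq A$, choose $\eta > 0$ small enough that $\overline{B}(f(V) \cup \{f(x_0)\}, \eta) \subseteq \intr{V}$; then $V' = \overline{B}(f(V) \cup \{f(x_0)\}, \eta/2)$ is compact, satisfies $V' \subseteq \intr{V} \subseteq \intr{A}$ and $f(x_0) \in \intr{V'}$, and has $f(V') \subseteq f(V) \subseteq \intr{V'}$, yielding a robust invariant in $\intr{A}$ that contradicts the original hypothesis. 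Next, I would reduce to $f(x_0) \in \intr{A}$: if $f(x_0) \notin A$ the conclusion is immediate, and if $f(x_0) \in \partial A \subseteq \overline{\R^d \setminus A}$ then a small bump function supported near $x_0$ sends $x_0$ to a point of $\R^d \setminus A$ within $\varepsilon$ of $f(x_0)$.

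For the first conclusion, set $A_n = A \cap \overline{B}(0, n)$. These are compact, and $A_n \to A$ in the topology of $\alpha^d$, since every compact $K \subseteq \intr{A}$ is eventually contained in $\intr{A_n}$ while $\R^d \setminus A \subseteq \R^d \setminus A_n$. For $n$ large enough that $f(x_0) \in \intr{A_n}$, applying Lemma \ref{Lemma: main perturbation theorem} with $K = A_n$ at the point $f(x_0)$ --- whose hypothesis holds because any robust invariant in $A_n$ is a robust invariant contained in $A$ --- yields $\widetilde{f}_n$ with $\norm{\widetilde{f}_n - f}_\infty < 1/n$ and $\widetilde{f}_n^k(f(x_0)) \notin A_n$ for some $k \geq 1$. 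A small bump of $\widetilde{f}_n$ in a ball around $x_0$ disjoint from the finite $\widetilde{f}_n$-orbit of $f(x_0)$ (which can be arranged when constructing $\widetilde{f}_n$) produces $f_n$ with $f_n(x_0) = f(x_0)$, so that $f_n^{k+1}(x_0) = \widetilde{f}_n^k(f(x_0)) \notin A_n$ and $\norm{f_n - f}_\infty \to 0$.

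For the second conclusion ($A \neq \R^d$), fix a point $z^* \in \R^d \setminus A$ and further modify $\widetilde{f}_n$ outside $\overline{B}(0, n)$: with a smooth cutoff $\phi_n$ equal to $1$ on $\overline{B}(0, n)$ and $0$ outside $B(0, n+1)$, set $f_n(x) = \phi_n(x) \widetilde{f}_n(x) + (1 - \phi_n(x)) z^*$, subsequently bumped at $x_0$ as before. The modification affects only radii exceeding $n$, so $d(f_n, f) \to 0$ in the metric \eqref{eq: distance on functions}. The main obstacle is ensuring the orbit under $f_n$ actually reaches $\R^d \setminus A$, rather than merely leaving the truncation $A_n$ through $\partial \overline{B}(0, n)$ while staying in $A$. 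I would handle this by refining the chain construction underlying Lemma \ref{Lemma: main perturbation theorem}: the exit set $E_0 = \R^d \setminus \intr{A_n}$ decomposes into an ``$A$-exit'' part inside $\overline{B}(0, n)$ and a ``ball-exit'' part $A \cap \partial \overline{B}(0, n)$; chains terminating in the first can be pushed into the open set $\R^d \setminus A$ by an additional $\varepsilon/2$-perturbation (since $\partial A \subseteq \overline{\R^d \setminus A}$), while chains terminating in the second, with $n$ enlarged slightly so that the exit lies outside $B(0, n+1)$, are then sent to $z^*$ by the next application of $f_n$ through the global cutoff.
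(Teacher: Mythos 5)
Your preliminary reductions are sound and in fact make explicit two points the paper leaves implicit: that a robust invariant $V \subseteq A$ yields one inside $\intr{A}$, and that the case $f(x_0) \notin \intr{A}$ can be dispatched separately. Your first conclusion is essentially correct, though your detour through $f(x_0)$ (apply Lemma \ref{Lemma: main perturbation theorem} at $f(x_0)$, then bump $x_0 \mapsto f(x_0)$) is avoidable: if $x_0 \in \intr{V}$ for a robust invariant $V$, then $f(x_0) \in f(V) \subseteq \intr{V}$, so the hypothesis ``no robust invariant contains $f(x_0)$ in its interior'' already implies ``no robust invariant contains $x_0$ in its interior'', and the paper applies the lemma directly at $x_0$. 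Your $A_n = A \cap \overline{B}(0,n)$ is a concrete substitute for the paper's $K_n$ (a compact set extending a name-prefix of $A$), and your argument that $A_n \to A$ in the $\alpha^d$-topology is correct.

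The second conclusion is where there is a genuine gap. You introduce a global cutoff $\phi_n$ equal to $1$ on $\overline{B}(0,n)$ and $0$ outside $B(0,n+1)$, and you yourself identify the obstruction: the orbit can leave $A_n$ through a point $p$ with $n < \norm{p}_\infty < n+1$ that still lies in $A$, at which $f_n(p)$ is a convex combination of $\widetilde{f}_n(p)$ and $z^*$ that need not leave $A$, nor is there any control over where the orbit goes from there. Your proposed remedy --- ``refining the chain construction'' so that ball-exit chains land outside $B(0,n+1)$ --- is a plan, not an argument: it would require reopening the proof of Lemma \ref{Lemma: main perturbation theorem} to control the location of the exit point within $\R^d \setminus K$, and it is not clear that this can be done, since the exit point is only constrained to lie within $2\delta$ of $f$ applied to the penultimate chain point. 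The paper's proof sidesteps all of this by using a \emph{localized} bump rather than a global cutoff: it takes $C_n = [-2^n, 2^n]^d \cap A$, obtains $g_n$ with $x_0$ escaping $C_n$ at step $k$, and observes that if $g_n^k(x_0) \in A$ then $g_n^k(x_0) \notin [-2^n,2^n]^d$; one then modifies $g_n$ only on a small ball $B(g_n^k(x_0), \varepsilon)$ disjoint from $[-2^n, 2^n]^d$ (and hence disjoint from the orbit up to step $k$), redirecting $g_n^k(x_0)$ to $z^*$. Because the modification is supported away from $[-2^n,2^n]^d$, it leaves $\norm{f_n - f}_{\infty, [-2^n,2^n]^d}$ unchanged, which is exactly what controls the metric \eqref{eq: distance on functions}, and it cannot interfere with the orbit before the escape step. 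You should replace the global cutoff in your second part with this one-point modification; as written, your proof of the second conclusion is incomplete.
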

\begin{proof}
    Fix a name $\left((I^A_n, E^A_n)\right)_n$ of $A$.
    For all $n \in \N$, the prefix 
    $(I^A_1, E^A_1), \dots, (I^A_n, E^A_n)$
    extends to a name of a compact set $K_n \subseteq A$
    with $x_0 \in K_n$.
    By assumption, there is no robust invariant of $f$ in $\intr{A}$ that contains $f(x_0)$.
    Hence, there is no such invariant in $K_n$.
    By Lemma \ref{Lemma: main perturbation theorem}, there exists a continuous function
    $f_n$
    with $\norm{f - f_n}_{\infty} < 2^{-n}$
    such that $x_0$ escapes $K_n$ under $f_n$. 
    By construction, we have $K_n \to A$ and $f_n \to f$.

    Now, assume that $A \neq \R^d$.
    Then the set $\R^d \setminus A$ is open and non-empty.
    Fix a point $y \in \R^d \setminus A$.

    Consider the sets $C_n = [-2^n, 2^n]^d \cap A$.
    Since there is no robust invariant of $f$ contained in the interior of $A$, there is no such invariant contained in the 
    interior of $C_n$.
    Hence, there exists a function $g_n \colon \R^d \to \R^d$ with 
    $\norm{g_n - f}_{\infty} < 2^{-n}$
    such that $x_0$ escapes $C_n$ under $g_n$.
    Let $k$ be the smallest integer with $g_n^k(x_0) \notin C_n$.
    We distinguish two cases:
        If $g_n^k(x_0) \notin A$, then $x_0$ escapes $A$ under $g_n$.
        In that case, let $f_n = g_n$.

        If $g_n^k(x_0) \in A$, then $g_n^k(x_0) \notin [-2^n, 2^n]^d$.
        Choose $\varepsilon > 0$ so small that $B(g_n^k(x_0), \varepsilon) \cap [-2^n, 2^n]^d = \emptyset$.
        Define 
        \[
            f_n(x) = 
                \begin{cases}
                    \left(1 - \frac{\norm{x - g_n^k(x_0)}}{\varepsilon}\right) y
                    + g_n(x) \frac{\norm{x - g_n^k(x_0)}}{\varepsilon} &\text{if }x \in B(g_n^k(x_0), \varepsilon),\\
                    g_n(x) &\text{otherwise.}
                \end{cases}
        \]
        Then, by construction, $f_n^{k + 1}(x_0) = y \notin A$.
        It follows from 
        $\norm{f_n - f}_{\infty, C_n} = \norm{g_n - f}_{\infty, C_n} < 2^{-n}$
        that $f_n \to f$ as $n \to \infty$.
\end{proof}

Our main theorem now follows immediately from 
Proposition \ref{Proposition: correctness},
Proposition \ref{Proposition: termination conditions},
and Lemma \ref{Lemma: extension to arbitrary closed sets}:

\begin{theorem}\label{Theorem: main theorem}
    Let $A \subseteq \R^d$ be a closed set, let $f \colon \R^d \to \R^d$ be a continuous function,
    and let $x_0 \in A$ be a point.
    Let $(I^A_n, E^A_n)_n$, $(F_n)_n$, and $(X_n)_n$ be names of $A$, $f$, and $x$ respectively.
    Assume that Algorithm \ref{Algorithm: main algorithm} is given the inputs $(I^A_n, E^A_n)_n$, $(F_n)_n$, and $(X_n)_n$.
    \begin{enumerate}
        \item If the algorithm eventually halts, it correctly reports whether $x_0$ is trapped or whether it escapes.
        \item If $x_0$ escapes $A$ under $f$, then the algorithm eventually halts.
        \item If $f(x_0)$ is contained in the interior of a robust invariant $V$ of $f$ with $I \subseteq \intr{A}$, then the algorithm eventually halts.
        \item If the algorithm does not halt, then there exist sequences $(f_{0, n})_n$,  $(f_{1, n})_n$, $(A_n)_n$  
        with $f_{i,n} \to f_n$ for $i = 0,1$ and $A_n \to A$, 
        such that $x_0$ escapes $A_{0,n}$ under $f_{0, n}$ and is trapped in $A_{1,n}$ under $f_{1, n}$ for all $n$.
        Moreover, if $A \neq \R^d$, then we can choose $A_{0,n} = A_{1,n} = A$ for all $n$.
    \end{enumerate}
    In particular, Algorithm \ref{Algorithm: main algorithm} is a complete decision method for the Point Escape Problem.
\end{theorem}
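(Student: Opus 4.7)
The plan is to assemble the four claims directly from the three results that precede the theorem statement; no new argument is really needed, the work having been done in Lemma \ref{Lemma: main perturbation theorem} and its extension.

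First I would dispatch the three easy parts. Part (1) is literally the content of Proposition \ref{Proposition: correctness}, and parts (2) and (3) are the two cases of Proposition \ref{Proposition: termination conditions}. So these can be handed off with one sentence each.

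The remaining task is part (4), which is the contrapositive form of completeness. Assume the algorithm fails to halt on the given input. By part (2), $x_0$ does not escape $A$ under $f$, so in particular $x_0$ is a trapped instance; this already furnishes the trivial trapping sequences by setting $f_{1,n} = f$ and $A_{1,n} = A$ for every $n$, since $x_0$ is literally trapped in $A$ under $f$. By part (3), there is no robust invariant $V$ for $f$ with $f(x_0) \in \intr{V}$ and $V \subseteq \intr{A}$. This is exactly the hypothesis of Lemma \ref{Lemma: extension to arbitrary closed sets}, which then hands us sequences $(f_{0,n})_n$ and $(A_{0,n})_n$ with $f_{0,n} \to f$ and $A_{0,n} \to A$ such that $x_0$ escapes $A_{0,n}$ under $f_{0,n}$. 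The addendum of the same lemma states that if $A \neq \R^d$ we may take $A_{0,n} = A$ for all $n$, which combined with the trivial trapping choice above gives the final $A_{0,n} = A_{1,n} = A$ clause.

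Finally, the concluding ``in particular'' sentence, that Algorithm \ref{Algorithm: main algorithm} is a complete decision method, follows because parts (1)--(3) show the algorithm halts correctly on every robust instance (any input which either escapes or has $f(x_0)$ in the interior of a suitable robust invariant lies in the interior of the escape/trapped set by the stability of these conditions), while part (4) shows that every non-halting input is a limit of both escaping and trapping instances and therefore lies on the decision boundary, matching the completeness criterion from Section \ref{Section: Decision Methods}. Since each ingredient is already proved, the only genuinely delicate point is articulating cleanly why non-halting implies boundary; everything else is bookkeeping.
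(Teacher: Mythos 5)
Your proposal is correct and takes essentially the same route as the paper: the paper itself states that Theorem~\ref{Theorem: main theorem} ``follows immediately from Proposition~\ref{Proposition: correctness}, Proposition~\ref{Proposition: termination conditions}, and Lemma~\ref{Lemma: extension to arbitrary closed sets},'' and your proposal simply spells out that assembly---parts (1)--(3) are citations, and part (4) combines the constant trapping sequence (valid because non-halting forces the instance to be trapped, by the contrapositive of part (2)) with the escaping sequences supplied by Lemma~\ref{Lemma: extension to arbitrary closed sets}, whose hypothesis is exactly the contrapositive of part (3). One small caveat on your ``in particular'' paragraph: the parenthetical remark attempting to show that parts (2)--(3) directly cover every robust instance argues in the wrong direction (it shows those conditions imply robustness, not that robustness implies those conditions); the clean route you also give---that part (4) shows non-halting implies boundary, combined with the general fact (from Section~\ref{Section: Decision Methods}) that a sound decision method's halting set is contained in the complement of the boundary---is the argument that actually closes the loop, and you correctly flag it as the delicate point.
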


Theorem \ref{Theorem: main theorem} establishes more than just the completeness of the algorithm: if $A$ is not equal to all of $\R^d$ and the algorithm fails to halt on a problem instance $(f, A, x_0)$, then the answer to the problem is unstable under small perturbations of the function $f$ alone. 
This implies that Algorithm \ref{Algorithm: main algorithm} yields a complete decision method for the Special Point Escape Problem where $A$ and $x_0$ are fixed, and only $f$ is given as an input.
Of course, this result does not extend to the case where $A = \R^d$ since every point is robustly trapped in $\R^d$, but not every function has a robust invariant.

\subsection{Generic Termination}

A complete algorithm for a decision problem need not be able to solve the problem on a ``large'' set of instances.
For example, the set $\Q \subseteq \R$ is maximally partially decided by the algorithm that never halts.
Given a complete algorithm, one is thus led to the problem of characterising the ``size'' of its halting set.

All escaping instances of the Point Escape Problem are robustly escaping -- in particular, Algorithm \ref{Algorithm: main algorithm} halts on a dense set of inputs -- which means that for every input $(f,A,x_0)$ there exists a sequence of inputs $(f_n,A_n,x_{0,n})$ with $f_n \to f$, $A_n \to A$, $x_{0,n} \to x_0$ such that the algorithm halts on input $(f_n, A_n, x_{0,n})$ for all $n$. Convergence is with respect to the final topology induced by the representation.
It is of course not the case that all trapped instances are robustly trapped. 
However, we will show that the robustly trapped instances are typical among the trapped ones: every boundary instance (\textit{i.e.} every trapped instance that is not robustly trapped) can be perturbed into a robustly trapped instance under arbitrarily small perturbations.
In this sense, both the robustly escaping instances and the robustly trapped instances constitute ``large'' sets.

\begin{lemma}\label{Lemma: generic termination of point-escape algorithm}
    Let $(f, K, x_0)$ be a boundary instance of the Point Escape Problem, where $K$ is a compact set.
    Then there exist sequences $(f_n)_n$ and $(K_n)_n$ with $f_n \to f$, $K_n \to K$ such that
    $(f_n, K_n, x_0)$ is a robust trapped instance of the Point Escape Problem for all $n$.
\end{lemma}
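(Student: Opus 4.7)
The plan is to produce perturbations $f_n \to f$ and $K_n \to K$ together with an explicit compact robust invariant $V_n$ for $f_n$ satisfying $f_n(x_0) \in \intr{V_n} \subseteq V_n \subseteq \intr{K_n}$. By the second clause of Proposition~\ref{Proposition: termination conditions} together with Proposition~\ref{Proposition: correctness}, this guarantees that Algorithm~\ref{Algorithm: main algorithm} halts on $(f_n, K_n, x_0)$ reporting \texttt{Trapped}, which by Theorem~\ref{Theorem: main theorem} means the instance is robust trapped.

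To make room for $V_n$ even when $x_0 \in \partial K$, I would take $K_n = \overline{B}(K, 1/n)$, so that $K \subseteq \intr{K_n}$. Convergence $K_n \to K$ in the topology induced by $\alpha^d$ then holds because every basic open neighbourhood of $K$ has the form $\{A : |I| \Subset \intr{A},\ |E| \Subset A^c\}$: the inclusion $|I| \Subset \intr{K}$ passes to $\intr{K_n}$ since $K_n \supseteq K$, while $|E| \Subset K^c$ forces $d(|E|, K) > 0$ and hence $|E| \Subset K_n^c$ as soon as $1/n$ is below this distance.

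The heart of the construction is a two-step perturbation of $f$. Writing $x_k = f^k(x_0)$, precompactness of $\{x_k\}_k$ in $K$ yields, for any $\eta_n \to 0$, indices $0 < M_n < N_n$ with $\|x_{M_n} - x_{N_n}\|_\infty < \eta_n$. I first \emph{close up} the orbit by setting $f_n^{(1)} = f + \phi \cdot (x_{M_n} - x_{N_n})$, where $\phi$ is a continuous bump with $\phi(x_{N_n - 1}) = 1$ supported in a small ball around $x_{N_n - 1}$ chosen disjoint from $x_0, \dots, x_{N_n - 2}$. Then $f_n^{(1)}(x_{N_n - 1}) = x_{M_n}$ while $f_n^{(1)}$ agrees with $f$ at every earlier orbit point, so the orbit of $x_0$ becomes the finite set $S = \{x_0, \dots, x_{N_n - 1}\}$ under the shift $\sigma$ with $\sigma(i) = i+1$ for $i < N_n - 1$ and $\sigma(N_n - 1) = M_n$. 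I then \emph{add contraction}: pick $\delta_n > 0$ small enough that the balls $B_i := \overline{B}(x_i, \delta_n)$ lie in $\intr{K_n}$, are pairwise disjoint for distinct orbit values, and (by uniform continuity of $f_n^{(1)}$ together with $f_n^{(1)}(x_i) = x_{\sigma(i)}$) satisfy $\|f_n^{(1)}(x) - x_{\sigma(i)}\|_\infty < \delta_n/2$ for all $x \in B_i$. Using auxiliary bumps $\phi_i$ equal to $1$ on $\overline{B}(x_i, \delta_n/2)$ and $0$ outside $B_i$, I define $f_n$ to equal $(1 - \phi_i) f_n^{(1)} + \phi_i \cdot x_{\sigma(i)}$ on each $B_i$ and $f_n^{(1)}$ elsewhere. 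The uniform-continuity estimate yields $\|f_n(x) - x_{\sigma(i)}\|_\infty < \delta_n/2$ on $B_i$, so $f_n(B_i) \subseteq \intr{B_{\sigma(i)}}$. Hence $V_n := \bigcup_i B_i$ is a robust invariant for $f_n$ contained in $\intr{K_n}$, and $f_n(x_0) = x_1 \in \intr{B_1} \subseteq \intr{V_n}$ as required.

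Both perturbations are localised and of magnitude $O(\eta_n + \delta_n) \to 0$, so $f_n \to f$ uniformly, hence in the compact-open topology. The main technical subtlety I foresee is ensuring that the closing-up bump genuinely has support disjoint from all earlier orbit points; this is automatic whenever $x_{N_n - 1}$ can be chosen distinct from $x_0, \dots, x_{N_n - 2}$, which holds for infinitely many $N_n$ unless the orbit is already eventually periodic -- in which case the closing-up step is unnecessary and the contraction construction is applied directly to the pre-existing cycle.
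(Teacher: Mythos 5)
Your overall strategy is the same as the paper's: replace $K$ by the thickening $K_n = \overline{B}(K, 1/n)$, find a near-return $\|x_{M_n} - x_{N_n}\| < \eta_n$ in the (precompact) orbit, splice the orbit into a finite cycle by a small bump, and then locally replace $f$ by constants near the cycle points via bump interpolation to produce a robust invariant contained in $\intr{K_n}$. Your two-step split (close up first, then contract) is just a cosmetic variation of the paper's single perturbation. A genuine (small) difference: the paper lets the near-return indices $n < m$ be chosen freely and then pulls the robust invariant back to $f(x_0)$ via Proposition~\ref{Proposition: pulling back robust invariant}, whereas you always keep the entire initial segment $x_0,\dots,x_{N_n-1}$ in the cycle, so no pull-back is needed; both are fine.

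There is, however, a concrete gap in the contraction step. You require, for $x \in B_i = \overline{B}(x_i, \delta_n)$, that $\|f_n^{(1)}(x) - x_{\sigma(i)}\|_\infty < \delta_n/2$ and attribute this to uniform continuity. But uniform continuity only says: for every $\varepsilon > 0$ there is some $\delta > 0$ with $\|x - x_i\| \leq \delta \Rightarrow \|f_n^{(1)}(x) - f_n^{(1)}(x_i)\| < \varepsilon$; it does \emph{not} say you can take $\varepsilon = \delta/2$. Your requirement is a Lipschitz-$\tfrac12$ contraction condition at each orbit point, which fails for, say, $f_n^{(1)}(x) = 2x$: then $\|f_n^{(1)}(x) - f_n^{(1)}(x_i)\|$ can be as large as $2\delta_n$ on $B_i$, for \emph{every} $\delta_n$. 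Consequently your claimed conclusion $f_n(B_i) \subseteq \intr{B_{\sigma(i)}}$ is unjustified, and $V_n = \bigcup_i B_i$ need not be a robust invariant.

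The fix is the one the paper uses: take the robust invariant to be the union of the \emph{inner} balls, $V_n = \bigcup_i \overline{B}(x_i, \delta_n/2)$, on which $\phi_i \equiv 1$ so that $f_n \equiv x_{\sigma(i)}$. Then $f_n(V_n) = \{x_0,\dots,x_{N_n-1}\} \subseteq \bigcup_i B(x_i, \delta_n/2) = \intr{V_n}$ holds trivially, with no Lipschitz hypothesis whatsoever. Uniform continuity is then needed only for the estimate $\|f_n - f_n^{(1)}\|_\infty \to 0$ as $\delta_n \to 0$, which is exactly what it gives. With this correction (and noting you still need to verify $\overline{B}(x_i, \delta_n/2)\subseteq\intr{K_n}$, which is fine since each $x_i \in K \subseteq \intr{K_n}$), the argument goes through and matches the paper's proof.
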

\begin{proof}[Proof Sketch.]
    Since Algorithm \ref{Algorithm: main algorithm} halts on all escaping instances, the instance $(f, K, x_0)$ must be trapped.

    Let $\varepsilon > 0$.
    Consider the compact set $K_{\varepsilon} = \bigcup_{y \in K}\overline{B}(y,\varepsilon)$.
    Then the orbit $\Set{f^{n}(x_0)}{n \in \N}$ is contained in the interior of $K_\varepsilon$.

    By compactness, $K_{\varepsilon}$ admits a cover of the form 
    $\bigcup_{i = 0}^N  B(y_i, \varepsilon)$
    with $y_i \in K_{\varepsilon}$.
    The orbit $\Set{f^{n}(x_0)}{n \in \N}$ is contained in $K_{\varepsilon}$, so that, by the pigeonhole principle, there exist minimal $j < k \leq N + 1$ such that 
    $f^{j}(x_0)$ and $f^{k}(x_0)$ are contained in the same ball $B(y_i, \varepsilon)$.
    The points $f^j(x_0), f^{j + 1}(x_0), \dots, f^{k - 1}(x_0)$ are uniformly bounded away from the boundary of $K_{\varepsilon}$ by some $\delta > 0$.
    We may further assume that the balls $\overline{B}(f^i(x_0), \delta)$ are disjoint for $i \in \{j, \dots, k\}$ and that $\delta$ is so small that $d(x,y) < \delta$ implies $d(f(x), f(y)) < \varepsilon$.
    For $j \leq i < k - 1$ we replace $f$ on $B(f^i(x_0), \delta/2)$ by the constant function with value $f^{i + 1}(x_0)$.
    On $B(f^{k - 1}(x_0), \delta/2)$, we replace $f$ by the constant function with value $f^{j}(x_0)$.
    To make the function continuous again, we use the annuli
    $B(f^{i}(x_0), \delta) \setminus B(f^{i}(x_0), \delta/2)$
    to connect $f$ with the modified function via linear interpolation.
    Then the set 
    $\bigcup_{i = j}^{k - 1} \overline{B}(f^{i}(x_0), \delta/2)$
    is a robust invariant for the perturbed function which is contained in the interior of $K_{\varepsilon}$ and contains $f^j(x_0)$ in its interior.
    From this, we easily obtain a robust invariant in the interior of $K_{\varepsilon}$ that contains $x_0$ in its interior.

    The full proof is given in Appendix \ref{Appendix: proof of generic termination of point-escape algorithm}.
\end{proof}

\begin{corollary}
    Let $(f, A, x_0)$ be a boundary instance of the Point Escape Problem.
    Then there exist sequences $(f_n)_n$ and $(A_n)_n$ with $f_n \to f$, $A_n \to A$ such that
    $(f_n, A_n, x_0)$ is a robust trapped instance of the Point Escape Problem for all $n$.
\end{corollary}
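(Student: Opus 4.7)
The plan is to reduce the corollary to Lemma \ref{Lemma: generic termination of point-escape algorithm} by exploiting the $\alpha^d$-representation of $A$. By Theorem \ref{Theorem: main theorem}(2), Algorithm \ref{Algorithm: main algorithm} halts on every escaping instance, so any boundary instance $(f, A, x_0)$ is necessarily trapped; in particular $f^i(x_0) \in A$ for all $i \in \N$. Writing the given $\alpha^d$-name of $A$ as $((I^A_n, E^A_n))_n$, its length-$n$ prefix extends to the name of a compact set $K_n \subseteq A$ with $x_0 \in K_n$, and by construction $K_n \to A$ in $\alpha^d$, exactly as in the proof of Lemma \ref{Lemma: extension to arbitrary closed sets}.

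First I would treat the case in which the forward orbit of $x_0$ under $f$ is bounded. Then for all sufficiently large $n$ the orbit is contained in $\intr{K_n}$, so $(f, K_n, x_0)$ is a trapped instance of the Point Escape Problem. It cannot be robustly trapped, because any robust invariant $V \Subset \intr{K_n}$ with $f(x_0) \in \intr{V}$ would also satisfy $V \Subset \intr{A}$, contradicting the assumption that $(f, A, x_0)$ is a boundary instance. Hence $(f, K_n, x_0)$ satisfies the hypothesis of Lemma \ref{Lemma: generic termination of point-escape algorithm}, which supplies sequences $(f_{n, m}, K^*_{n, m}) \to (f, K_n)$ with each $(f_{n, m}, K^*_{n, m}, x_0)$ robustly trapped. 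A standard diagonal choice $m = m_n$ then yields $(f_n, A_n, x_0) := (f_{n, m_n}, K^*_{n, m_n}, x_0) \to (f, A, x_0)$, as required.

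The remaining case, in which the orbit is unbounded, is the main obstacle, since $(f, K_n, x_0)$ is then escaping for every $n$ and the lemma no longer applies directly. Here I would instead perturb $f$ itself: for each $n$, pick $N$ with $f^{N-1}(x_0) \notin [-2^n, 2^n]^d$ and modify $f$ on a small ball around $f^{N-1}(x_0)$ by linear interpolation so that the perturbed $\widetilde{f}_n$ sends $f^{N-1}(x_0)$ to $x_0$. Because the modification is supported outside $[-2^n, 2^n]^d$, the compact-open convergence $\widetilde{f}_n \to f$ is preserved even though the pointwise change at $f^{N-1}(x_0)$ may be large. The $\widetilde{f}_n$-orbit of $x_0$ is then the periodic cycle $\{x_0, f(x_0), \dots, f^{N-1}(x_0)\} \subseteq A$, and a sufficiently small closed thickening $V$ of this cycle is a robust invariant of $\widetilde{f}_n$; choosing a compact $A_n \subseteq A$ with $A_n \to A$ in $\alpha^d$ and $V \subseteq \intr{A_n}$ completes the construction. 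The genuine subtlety in this step is that individual orbit points may lie on $\partial A$, so that no thickening fits in $\intr{A}$; this is handled by an additional arbitrarily small perturbation of $\widetilde{f}_n$ (of size $o(1)$, absorbed into the compact-open convergence) which pushes all orbit points off $\partial A$ before the thickening argument is applied.
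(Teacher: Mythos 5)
Your bounded case breaks at its opening step. You claim that for all large $n$ the orbit lies in $\intr{K_n}$, but a trapped boundary instance can have its orbit touching $\partial A$, and since $K_n \subseteq A$ forces $\intr{K_n} \subseteq \intr A$ the claim then fails; worse, nothing forces the prefix-extension $K_n$ you borrow from the proof of Lemma~\ref{Lemma: extension to arbitrary closed sets} to contain the orbit at all, so $(f, K_n, x_0)$ may be an \emph{escaping} instance, to which Lemma~\ref{Lemma: generic termination of point-escape algorithm} does not apply. Take $d=1$, $A=[-1,1]$, $f\equiv 1$, $x_0=0$: this is a trapped boundary instance (perturbing $f$ upward by $\varepsilon$ makes it escape), yet the orbit $\{0,1\}$ hits $\partial A$ at $1$, which a compact prefix-extension $K_n \subseteq A$ can easily miss. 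What is needed is to choose $K_n$ so that it \emph{visibly contains} the bounded orbit (for instance $K_n = A \cap \overline{B}(0, N + 2^n)$ when the orbit lies in $\overline{B}(0,N)$), and then invoke Lemma~\ref{Lemma: generic termination of point-escape algorithm}, whose internal thickening $K_\varepsilon$ is precisely what copes with an orbit meeting $\partial K_n$. Your observation that $(f, K_n, x_0)$ cannot be robustly trapped is correct but immaterial here: what is missing is that it is trapped at all.

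There is a second, independent gap in the unbounded case. After modifying $f$ by a single linear interpolation on a small ball around the far-away point $f^{N-1}(x_0)$ so that the orbit becomes the finite cycle $\{x_0, f(x_0), \dots, f^{N-1}(x_0)\}$, you assert that a ``sufficiently small closed thickening $V$'' of this cycle is a robust invariant of $\widetilde f_n$. It need not be: away from that one ball $\widetilde f_n$ agrees with $f$, which may be expanding, so the image of a small ball around $f^i(x_0)$ need not fit inside the ball you place around $f^{i+1}(x_0)$, and the radii cannot be chosen consistently since the ball around $x_0$ must absorb the image of the last one. In the proof of Lemma~\ref{Lemma: generic termination of point-escape algorithm} the thickening works only because the perturbed map is made \emph{locally constant} on a half-radius ball around \emph{every} cycle point, which a single interpolation near $f^{N-1}(x_0)$ does not give you. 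The paper's corollary proof therefore does not re-derive the robust invariant by hand: in the unbounded case it uses the wrap-around trick only to manufacture a trapped instance $(f_n, K_n, x_0)$ with $K_n$ compact, and then applies Lemma~\ref{Lemma: generic termination of point-escape algorithm} uniformly, exactly as in the bounded case.
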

\begin{proof}
    The instance $(f,A,x_0)$ must be a trapped instance of the Point Escape Problem.
    We show that there exists a sequence of functions $(f_n)_n$ and a sequence of compact sets $(K_n)_n$
    with $f_n \to f$ and $K_n \to A$ such that $(f_n, K_n, x_0)$ is a (not necessarily robust) trapped instance 
    of the Point Escape Problem.
    Together with Lemma \ref{Lemma: generic termination of point-escape algorithm} this yields the claim.

    We distinguish two cases: If the orbit of $x_0$ under $f$ is bounded, say, contained in the ball $\overline{B}(0, N)$ for some $N$, 
    we let $f_n = f$ and $K_n = \overline{B}(0, N + 2^n)$ for all $n$. It is clear that $x_0$ is trapped in $K_n$ under $f$ for all $n$. It is easy to see that $K_n \to A$ in the topology induced by our representation of closed sets.

    Now assume that the orbit of $x_0$ under $f$ is unbounded.
    Fix some $N > \norm{x}_{\infty}$.
    Let $x_j = f^{(j)}(x_0)$.
    Let $n \in \N$.
    Let $m$ be the smallest index such that $x_0, \dots, x_m$ are contained in $\overline{B}(0, N +  2^n)$, but 
    $x_{m + 1} \notin \overline{B}(0, N + 2^n)$.
    Let $\delta = \norm{x_{m + 1}}_\infty - N - 2^n > 0$.
    Let $K_n = A \cap \overline{B}(0, N + 2^n +\delta)$.
    Let 
    \[
        f_n(x) = 
        \begin{cases}
            f(x)                                  &\text{if }x \in \overline{B}(0, N +  2^n),\\
            \frac{\norm{x}_{\infty} - N -  2^n}{\delta} x_0 + \left(1 - \frac{\norm{x}_{\infty} - N -  2^n}{\delta}\right)f(x)   &\text{if }x \in \overline{B}(0, N +  2^n + \delta) \setminus \overline{B}(0, N +  2^n),\\
            x_0                                   &\text{if }x \in \R^d\setminus\overline{B}(0, N +  2^n + \delta).
        \end{cases}
    \]
    Observe that by construction we have $f_n(x_{m + 1}) = x_0$, so that the orbit of $x_0$ under $f_n$ is equal to $\{x_0,\dots,x_{m + 1}\} \subseteq K_n$.
    It is easy to see (for example by using Proposition \ref{Proposition: properties of function-space representation}) that $f_n \to f$ in the topology induced by our representation of functions and that $K_n \to K$ in the topology induced by our representation of closed sets.
\end{proof}

\section{Linear Systems}\label{Section: Linear Systems}

As mentioned in the introduction, Algorithm \ref{Algorithm: main algorithm} may fail to terminate on very simple-looking problem instances when $A$ is not a compact set.
The main reason for this is that our representation of continuous functions induces the topology of uniform convergence on compact sets, or equivalently, the topology induced by the metric \eqref{eq: distance on functions}.
In order for two functions $f, g \colon \R^d \to \R^d$ to have distance at most $2^{-n}$, it suffices that $f$ and $g$ agree on the cube $[-2^{n+1},2^{n+1}]^d$,
potentially admitting function values of arbitrarily large distance outside of this cube.
Thus, every function $f$ has ``arbitrarily small'' perturbations that differ from $f$ arbitrarily outside a certain bounded set.

Revisiting the example from the introduction, consider the set $A = \Set{x \in \R}{x \geq 0}$, the initial point $x_0 = 1 \in A$, and the map 
$f \colon \R \to \R$,  $f(x) = 2x$.
It is clear that the point $x_0$ is trapped in $A$ under iteration of $f$. 
However, it cannot be robustly trapped, since $f$ does not admit any robust invariants.
More explicitly, with respect to the metric \eqref{eq: distance on functions}, $f$ is the limit of the sequence
\[
    f_n(x) = \min\left\{2x, \left(1 - 2^{n + 1}\right)x + \left(1 + 2^{n + 1}\right)\left(2^n - 1\right)\right\}.
\]
We have $f_n^{n}(1) = 2^n$ and $f_n(2^n) = -1$, so that $x_0$ escapes $A$ under each $f_n$.

While a na\"ive direct application of Algorithm \ref{Algorithm: main algorithm} fails to verify that the system is trapped,
we can exploit the fact that the particular $f$ and $A$ given above are well behaved ``near infinity'' to compute a compactified version of the system that is amenable to analysis via Algorithm \ref{Algorithm: main algorithm}.

Consider the map
$
    \Phi \colon \R \to (-1,1),\;
    \Phi(x) = \tfrac{x}{1 + |x|}
$.
This map has a continuous inverse, explicitly given by
$
    \Phi^{-1} \colon (-1,1) \to \R,\;
    \Phi^{-1}(x) = \tfrac{x}{1 - |x|}
$.
We have
$
    \Phi \circ f \circ \Phi^{-1}(x) = \tfrac{2x}{1+|x|}
$.
The expression on the right-hand side of this equality defines a total function $\overline{f}(x) \colon \R\to\R$.
Observe that a point $x$ is trapped in $A$ under $f$ if and only if $\Phi(x)$ is trapped in $A$ under $\overline{f}$.
The latter can be verified by Algorithm \ref{Algorithm: main algorithm} for all $x > 0$, 
since $\overline{f}$ admits the robust invariant $\left[\tfrac{\Phi(x)}{2},2\right]$
with $\Phi(x) \in \left(\tfrac{\Phi(x)}{2},2\right) \subseteq \left[\tfrac{\Phi(x)}{2},2\right] \subseteq \intr{A}$.

To further illustrate how the scope of Algorithm \ref{Algorithm: main algorithm} can be extended using this compactification technique, 
we generalise the above observation to give a solution for the problem of deciding whether a point escapes a polyhedron (which is in general unbounded) under the iteration of an affine linear map:

\begin{definition}
    The \emph{Linear Escape Problem} asks to determine for a given non-singular matrix $A \in \R^{d \times d}$,
    a given vector $b \in \R^d$, a given polyhedron $P \subseteq \R^d$, and a given point $x_0 \in \R^d$
    whether $x_0$ escapes $P$ under the map $f_{A,b}(x) = Ax + b$.

    Here, $A$ is given by a $\rho^{d \times d}$-name of its entries,
    and $P$ is given by a finite list of affine half-spaces $H_1, \dots, H_m$
    with $P = \bigcap_{i = 1}^m H_i$,
    where each $H_i$ is given by a $\rho^d$-name of a normal $0 \neq N_i \in \R^d$ and a $\rho^1$-name of a distance $D_i \in \R$,
    such that $H_i = \Set{x \in \R^d}{N_ix \leq D_i}$.
\end{definition}

We will give a reduction of the Linear Escape Problem to the Point Escape Problem that sends robust instances to robust instances
based on the compactification idea discussed above.

Before we describe the reduction, we classify the robust instances of the Linear Escape Problem.
Since all escaping instances are robust, we focus only on trapped instances.

For a matrix $A$, let $\sigma(A) \subf \C$ denote the set of its eigenvalues and $r(A) = \max \Set{|\lambda|}{\lambda \in \sigma(A)}$ denote the largest value among the absolute values of its eigenvalues.

For a polyhedron $P = \bigcap_{i = 1}^m H_i$
with 
$H_i = \Set{x \in \R^d}{N_i x \leq D}$, 
let 
$P_0 = \bigcap_{i = 1}^m H_i^0$, 
where 
$H_i^0 = \Set{x \in \R^d}{N_i x \leq 0}$.

\begin{lemma}\label{Lemma: classification of robust instances}
    Let $(A, b, P, x_0)$ be a trapped instance of the Linear Escape Problem.
    Then $(A, b, P, x_0)$ is robust if and only if the orbit $\left(f_{A,b}^n(x_0)\right)_n$ is contained in the interior of $P$
    and one of the three following conditions is met:
    \begin{enumerate}
        \item $r(A) < 1$ and the unique fixed point of the map $f_{A,b}(x) = Ax + b$ is contained in $\intr{P}$.
        \item $A$ has a simple real eigenvalue $\rho > 1$ with $\rho > |\lambda|$ for all other eigenvalues and 
        there exists an eigenvector $v$ for $\rho$ with $v \in \intr{P}_0$ and 
        $x_0 = \alpha v + w$ where $\alpha > 0$ and $w$ is a linear combination of eigenvalues and generalised eigenvalues
        for eigenvalues $\lambda \neq \rho$ (in the sense of the real Jordan normal form).
        \item $A$ has the simple real eigenvalue $\rho = 1$ with $\rho > |\lambda|$ for all other eigenvalues,
        there exists an eigenvector $v$ for $\rho$ with $v \in \intr{P}_0$ such that 
        $x_0 = \alpha v + w$,
        $b = \beta v + u$,
        where $\alpha, \beta > 0$,
        and $w$ and $u$ are linear combinations of eigenvalues and generalised eigenvalues
        for eigenvalues $\lambda \neq \rho$ (in the sense of the real Jordan normal form).
    \end{enumerate}
\end{lemma}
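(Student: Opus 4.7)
I would separately prove necessity and sufficiency of conditions (1)--(3), with the substantive work on the necessity side.

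\emph{Necessity.} Suppose $(A,b,P,x_0)$ is a robust trapped instance. First, the entire orbit must lie in $\intr{P}$: if some iterate $f_{A,b}^{n}(x_0)$ sits on a defining hyperplane $\{x : N_i x = D_i\}$, perturbing $D_i$ downward by an arbitrarily small amount pushes that iterate outside $P$, contradicting robustness. Next, decompose $\R^d$ into the real generalised eigenspaces of $A$ and expand
\[
    f_{A,b}^{n}(x_0) \;=\; A^{n} x_0 + \bigl(I + A + \dots + A^{n-1}\bigr) b.
\]
I then split by the spectral radius. When $r(A) < 1$, every orbit converges to the unique fixed point $x^\star = (I-A)^{-1}b$, and containment of the orbit in $\intr{P}$ forces $x^\star \in \intr{P}$, yielding case (1). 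When $r(A) \geq 1$, I argue that the dominant spectrum must consist of a single simple real eigenvalue $\rho \geq 1$: a dominant complex eigenvalue can be perturbed to strictly larger modulus, producing an unbounded spiral that no polyhedron contains; a non-simple real eigenvalue of modulus $1$ can be split into a pair of slightly-larger-modulus eigenvalues with similar effect; and two distinct real eigenvalues of maximal modulus can be perturbed apart so that their leading contributions push the orbit in directions that cannot jointly lie in $\intr{P_0}$. With a unique dominant simple real $\rho$ isolated, the leading term of $f_{A,b}^{n}(x_0)$ is a scalar multiple of the corresponding eigenvector $v$; asymptotic containment of the orbit in $\intr{P}$ then forces $v \in \intr{P_0}$, and the coefficient of the leading term must be positive. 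In the $\rho > 1$ case this yields the sign condition $\alpha > 0$, and in the $\rho = 1$ case the leading term comes from the cumulative $b$-sum rather than $A^n x_0$, giving the additional condition $\beta > 0$.

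\emph{Sufficiency.} I exhibit a trapping region in each case. For (1), a closed ball $\overline{B}(x^\star, R) \subseteq \intr{P}$ of sufficiently large radius is a robust invariant: in a norm under which $A$ is a strict contraction, $f_{A,b}(\overline{B}(x^\star,R)) \Subset \overline{B}(x^\star,R)$, and this property persists under small perturbations of $(A,b)$ and of $P$ since the fixed point, contraction rate, and defining data of $P$ all vary continuously. For (2) and (3), I construct a closed convex cone $C \subseteq \intr{P}$ with axis $v$ and appropriately chosen apex, sufficiently narrow that the dominance of $\rho$ yields $f_{A,b}(C) \Subset C$; the tail of the orbit enters $C$ after finitely many iterations, and the finite prefix lies in $\intr{P}$ by assumption, so the cone persists as a robust trapping region under small perturbations of the dominant eigendata and of the recession cone.

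The main obstacle is the perturbation analysis in the necessity direction: one must make rigorous the claim that each pathological spectral configuration admits an arbitrarily small perturbation of $A$ (and possibly of $b$) which provokes the orbit to exit $P$, rather than merely shifting it within $P$. This requires simultaneously tracking the continuous motion of eigenvalues and eigenvectors under matrix perturbations, handling any eigenvalue crossings via a Puiseux-type expansion, while exploiting the freedom to perturb $b$ and the half-space data to break fortuitous cancellations that could otherwise keep the perturbed orbit inside $P$.
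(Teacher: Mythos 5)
Your decomposition into necessity and sufficiency, the spectral case analysis, and the reduction to the three listed configurations follow the paper's own proof quite closely; the paper's Appendix D also splits on $r(A)<1$ versus $r(A)\geq 1$, then on whether the dominant eigenvalue is a simple real greater than or equal to $1$, and it also handles necessity by showing that each excluded spectral configuration admits an arbitrarily small perturbation that expels the orbit.

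Where you differ is on sufficiency. The paper does not build invariant sets for $f_{A,b}$: it homogenizes the system, shows $\rho^{-n}f_{A,b}^n(x_0) \to \alpha v$ (respectively $n^{-1}f_{A,b}^n(x_0) \to \beta v$), and then argues that one can derive effective bounds on $n$ from the data, so that the asymptotic containment is uniform over a neighbourhood of $(A,b,P,x_0)$. Your ball/cone construction is a legitimate geometric alternative that is closer in spirit to the robust-invariant machinery of the Point Escape Problem, and it would also serve the subsequent reduction proposition; the paper instead defers the robust-invariant construction to that proposition and keeps this lemma purely asymptotic. Two cautions on your version: in case (1) you want the ball to be \emph{small} enough to sit inside $\intr{P}$ (the orbit enters it eventually and the finite prefix is handled separately), not ``sufficiently large''; and in case (3) you should explicitly check that your cone absorbs the perturbed fixed point $(I-\widetilde A)^{-1}b$ as $\rho \uparrow 1$, since that point runs off to infinity along a direction approaching $v$ -- the paper makes this normalization argument explicit.

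On necessity, your appeal to an ``unbounded spiral that no polyhedron contains'' when a complex (or negative) eigenvalue dominates is precisely the nontrivial step the paper discharges by citing Tiwari and Bell--Gerhold for the fact that such an orbit leaves every affine half-space; that result, applied to a homogenized system, is doing real work and should not be treated as self-evident. You correctly identify that one must also perturb $x_0$ or $b$ to kill fortuitous cancellations (a component of $x_0$ vanishing in the dominant eigenspace), which the paper handles in exactly the same way. Overall the argument is sound but would need the Tiwari-style escape result stated and invoked rather than asserted.
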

\begin{proof}
    The proof is very similar to that of \cite[Proposition 5.1, Proposition 6.2.]{DPLRSR21}.
    It is given in Appendix \ref{Appendix: Proof of classification of robust instances}.
\end{proof}

We now describe the reduction to the Point Escape Problem.
Consider the map 
\[
    \Phi \colon \R^d \to B(0,1),
    \;
    \Phi(x) = \frac{x}{1 + \norm{x}}.
\]
This map is invertible with inverse 
\[
    \Phi^{-1} \colon B(0,1) \to \R^d,
    \;
    \Phi^{-1}(x) = \frac{x}{1 - \norm{x}}.
\]
The map $\Phi \circ f_{A,b} \circ \Phi^{-1} \colon B(0,1) \to B(0,1)$ 
extends (uniformly computably) to the continuous map 
\[
    \overline{f}_{A,b} \colon \R^d \to \R^d,
    \;
    \overline{f}_{A,b}(x) 
        = 
    \frac{Ax + \left(1 - \min\left\{1,\norm{x}\right\}\right)b}{\norm{Ax + \left(1 - \min\left\{1,\norm{x}\right\}\right)b} + 1 - \min\left\{1, \norm{x}\right\}}.
\]

For a polyhedron $P = \bigcap_{i = 1}^m H_i$ with $H_i = \Set{x \in \R^d}{Nx \leq D}$, define 
$\widehat{P} = \bigcap_{i = 1}^m \widehat{H}_i$, where 
\[\widehat{H_i} = \Set{x \in \R^d}{Nx \leq D \left(1 - \min\{1, \norm{x}\}\right)}.\]
Observe that $\Phi(P) = \widehat{P} \cap B(0, 1)$.

\begin{proposition}
    The point $x_0$ is trapped in $P$ under $f_{A,b}$ if and only if 
    the point $\frac{x_0}{1 + \norm{x_0}}$ is trapped in $\widehat{P}$ under $\overline{f}_{A,b}$.
\end{proposition}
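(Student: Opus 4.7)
The plan is to reduce the statement to three routine verifications about the maps $\Phi$, $\overline{f}_{A,b}$, and $\widehat{P}$, after which the equivalence becomes a direct consequence of conjugation.

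First, I would verify the set-theoretic identity $\Phi(P) = \widehat{P} \cap B(0,1)$. For $y \in B(0,1)$ we have $\norm{y} < 1$, so $\min\{1,\norm{y}\} = \norm{y}$ and $1 - \norm{y} > 0$. Then $y \in \widehat{H_i}$ reads $N_i y \leq D_i(1 - \norm{y})$, and dividing by the positive quantity $1 - \norm{y}$ gives $N_i \Phi^{-1}(y) = N_i \tfrac{y}{1-\norm{y}} \leq D_i$, i.e.\ $\Phi^{-1}(y) \in H_i$. Intersecting over $i$ yields the claim.

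Second, I would check that on $B(0,1)$ the formula for $\overline{f}_{A,b}$ coincides with $\Phi \circ f_{A,b} \circ \Phi^{-1}$. For $y \in B(0,1)$, direct substitution gives
\[
    \Phi\bigl(f_{A,b}(\Phi^{-1}(y))\bigr) = \Phi\!\left(\frac{Ay + (1-\norm{y})b}{1-\norm{y}}\right) = \frac{Ay + (1-\norm{y})b}{\norm{Ay + (1-\norm{y})b} + (1-\norm{y})},
\]
which matches $\overline{f}_{A,b}(y)$ because $1 - \min\{1,\norm{y}\} = 1 - \norm{y}$ there. In particular, since $1 - \norm{y} > 0$, the norm of the right-hand side is strictly less than $1$, so $\overline{f}_{A,b}$ maps $B(0,1)$ into $B(0,1)$.

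With these facts in hand, an easy induction on $n$ shows that $\overline{f}_{A,b}^{\,n}(\Phi(x_0)) \in B(0,1)$ and $\overline{f}_{A,b}^{\,n}(\Phi(x_0)) = \Phi\bigl(f_{A,b}^{\,n}(x_0)\bigr)$ for all $n \in \N$, since the starting point $\Phi(x_0) \in B(0,1)$ is preserved under the map and $\overline{f}_{A,b}$ agrees with $\Phi \circ f_{A,b} \circ \Phi^{-1}$ on $B(0,1)$. Combining this with $\Phi(P) = \widehat{P} \cap B(0,1)$ and the fact that $\Phi$ is a bijection $\R^d \to B(0,1)$ yields, for each $n$,
\[
    f_{A,b}^{\,n}(x_0) \in P \iff \Phi\bigl(f_{A,b}^{\,n}(x_0)\bigr) \in \widehat{P} \cap B(0,1) \iff \overline{f}_{A,b}^{\,n}(\Phi(x_0)) \in \widehat{P},
\]
where the last equivalence uses that $\overline{f}_{A,b}^{\,n}(\Phi(x_0))$ already lies in $B(0,1)$. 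Quantifying over all $n$ gives the desired equivalence.

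There is no real obstacle; the only mildly delicate point is making sure the piecewise definition involving $\min\{1,\norm{x}\}$ matches the conjugated formula on $B(0,1)$ and that the denominator in $\overline{f}_{A,b}$ never vanishes on $B(0,1)$ (which follows from $1 - \norm{y} > 0$). Everything else is bookkeeping.
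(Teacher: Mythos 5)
Your proof is correct and follows essentially the same route as the paper: observe $\Phi(P) = \widehat{P} \cap B(0,1)$, observe that $\overline{f}_{A,b}$ agrees with $\Phi \circ f_{A,b} \circ \Phi^{-1}$ on $B(0,1)$ and maps $B(0,1)$ into itself, then conclude by induction that the two orbits correspond under $\Phi$. You spell out the two verifications explicitly rather than citing them, but the argument is the same.
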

\begin{proof}
    Since $\Phi$ is bijective we have $x \in P$ if and only if $\Phi(x) \in \Phi(P) = \widehat{P} \cap B(0,1)$.
    On $B(0,1)$, the map $\overline{f}_{A,b}$ agrees with 
    $\Phi \circ f_{A,b} \circ \Phi^{-1}$,
    which maps $B(0,1)$ to $B(0,1)$.
    It follows that we have
    $\overline{f}_{A,b}^n\left(\Phi\left(x_0\right)\right) = \Phi \circ f^n_{A,b} (x_0)$.
    This yields the claim.
\end{proof}

We now obtain our desired reduction:

\begin{proposition}
    The instance $(A, b, P, x_0)$ is a robust instance of the Linear Point Escape Problem 
    if and only if the instance $(\overline{f}_{A,b}, \widehat{P}, \Phi(x_0))$ is a robust 
    instance of the Point Escape Problem.
\end{proposition}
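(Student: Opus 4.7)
The plan is to combine the previous proposition, which identifies trapped and escaping status under the reduction, with the characterization of robustness provided by Theorem \ref{Theorem: main theorem}. Since escape occurs in finite time and the iterates depend continuously on $(A,b,P,x_0)$ in the linear formulation (and on $(\overline{f}_{A,b}, \widehat{P}, \Phi(x_0))$ in the compactified formulation), every escaping instance is robust on both sides. Hence the proof reduces to the trapped case.

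For trapped instances I argue the reverse implication ($\Leftarrow$) first, as essentially a continuity argument. Suppose the compactified instance is robust: by Theorem \ref{Theorem: main theorem} there is a robust invariant $V$ for $\overline{f}_{A,b}$ with $V \subseteq \intr{\widehat{P}}$ and $\overline{f}_{A,b}(\Phi(x_0)) \in \intr{V}$. The operations $(A,b) \mapsto \overline{f}_{A,b}$ into $C(\R^d,\R^d)$ with the compact-open topology, $P \mapsto \widehat{P}$, and $x_0 \mapsto \Phi(x_0)$ are continuous with respect to the standard representations, as is immediate from their explicit formulas. Hence for $(A',b',P',x_0')$ sufficiently close to $(A,b,P,x_0)$ the set $V$ remains a robust invariant for $\overline{f}_{A',b'}$, still contained in $\intr{\widehat{P'}}$ by Proposition \ref{Proposition: witnessing containment of compact sets}, and still contains $\overline{f}_{A',b'}(\Phi(x_0'))$ in its interior; so the perturbed linear instance is trapped, and $(A,b,P,x_0)$ is robust.

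For the forward direction ($\Rightarrow$) I use Lemma \ref{Lemma: classification of robust instances} to construct an explicit robust invariant $V$ for $\overline{f}_{A,b}$ inside $\intr{\widehat{P}}$ containing $\overline{f}_{A,b}(\Phi(x_0))$ in its interior. In case (i), where $r(A) < 1$ and the unique fixed point $p = (I-A)^{-1}b$ lies in $\intr{P}$, I choose an adapted norm with induced operator norm $\|A\|_* < 1$ and take a closed $\|\cdot\|_*$-ball $B$ around $p$ large enough to contain $x_0$ but small enough to remain in $\intr{P}$ (possible because the orbit converges to $p$ inside $\intr{P}$); since $\Phi$ conjugates $f_{A,b}$ with $\overline{f}_{A,b}|_{B(0,1)}$, the compact set $\Phi(B) \subseteq \intr{\widehat{P}}$ is the desired robust invariant. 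In cases (ii) and (iii) the orbit escapes to infinity along the dominant eigenvector $v \in \intr{P_0}$, so in $\Phi$-coordinates it accumulates at $v/\|v\|$, which lies on $\partial B(0,1) \cap \intr{\widehat{P}}$ (since $v \in \intr{P_0}$ gives $N_i v < 0$, hence $v/\|v\|$ is interior to each $\widehat{H}_i$). On $\{\|x\| \geq 1\}$ the formula for $\overline{f}_{A,b}$ reduces to the projective map $x \mapsto Ax/\|Ax\|$, for which $v/\|v\|$ is an attracting fixed point because $\rho$ is simple and strictly dominant. Gluing a small robust invariant for this projective dynamics near $v/\|v\|$ to a tube along the finitely many transient iterates of $\Phi(x_0)$ inside $B(0,1)$ produces the required $V$.

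The main technical obstacle is the construction in cases (ii) and (iii): verifying attractivity of $v/\|v\|$ for the projective dynamics of $\overline{f}_{A,b}$ on the unit sphere with an open basin contained in $\intr{\widehat{P}}$, and carefully patching a neighborhood of this sphere fixed point together with the transient tube so that the resulting set is simultaneously $\overline{f}_{A,b}$-invariant with strict interior inclusion and contained in $\intr{\widehat{P}}$, which near $\partial B(0,1)$ is governed by the open cone condition $N_i x < 0$ defining $\intr{P_0}$ rather than the polyhedral inequalities defining $\intr{P}$ in the finite region.
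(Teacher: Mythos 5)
The overall architecture — dispose of escaping instances, prove the reverse implication ($\Leftarrow$) by continuity of the reduction map, and prove the forward implication ($\Rightarrow$) by building a robust invariant from the classification in Lemma~\ref{Lemma: classification of robust instances} — is sound and matches the paper's strategy. The reverse direction argument is essentially correct (though your citation of Proposition~\ref{Proposition: witnessing containment of compact sets} for the claim $V \subseteq \intr{\widehat{P'}}$ under perturbation of $P$ is slightly off-target: that proposition concerns names of a single fixed set, and you additionally need openness of $\alpha^d$ together with continuity of $P \mapsto \widehat{P}$).

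However, your construction in case~(i) of the forward direction has a genuine gap. You claim there is an adapted-norm ball around the fixed point $p = (I-A)^{-1}b$ which contains $x_0$ yet remains in $\intr{P}$, justified by ``the orbit converges to $p$ inside $\intr{P}$.'' This does not follow: the orbit is a sequence converging to $p$, while a norm ball around $p$ containing $x_0$ is a full $d$-dimensional set. Take $P = \R \times [-1,1]$, $A = 0.9\, I$, $b = 0$, $x_0 = (100,0)$. This is a robustly trapped instance by case~(i) of Lemma~\ref{Lemma: classification of robust instances}, but any norm ball around $p = 0$ containing $(100,0)$ also contains points with second coordinate of absolute value at least on the order of $1$, hence leaves $P$. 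The correct construction is exactly the tube-plus-small-ball gluing you sketch for cases~(ii) and (iii): take a small ball $\overline{B}(\Phi(p),\varepsilon) \subseteq \intr{\widehat{P}}$, show it is mapped into $B(\Phi(p), \varepsilon/2)$ by $\overline{f}_{A,b}^N$ for some $N$, and then apply Propositions~\ref{Proposition: invariant for n-fold iterate is contained in invariant for f} and~\ref{Proposition: pulling back robust invariant} to obtain a robust invariant for $\overline{f}_{A,b}$ itself whose interior contains $\overline{f}_{A,b}(\Phi(x_0))$ and which lies in $\intr{\widehat{P}}$. Those two appendix propositions are what make the ``gluing'' precise; the paper routes all three cases of the classification through them uniformly, which also discharges what you flag as the ``main technical obstacle'' in cases~(ii) and (iii).
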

\begin{proof}
    If the instance escapes, the claim is obvious.
    Thus, assume that the instance is trapped.
    We show that $\overline{f}_{A,b}$ has a robust invariant in the interior of 
    $\widehat{P}$.

    Consider the classification of robust instances given in Lemma \ref{Lemma: classification of robust instances}.

    In the first case, it follows from the proof of Lemma \ref{Lemma: classification of robust instances}
    that the point
    $p = \Phi((I - A)^{-1}b)$ is a fixed point of 
    $\overline{f}_{A,b}$
    such that every point in $B(0,1)$ converges to $\Phi((I - A)^{-1}b)$
    under the iteration of $\overline{f}_{A,b}$.
    Moreover, for a sufficiently small $\varepsilon > 0$ and 
    sufficiently large $N$, the ball 
    $\overline{B}(p, \varepsilon)$
    is mapped by 
    $\overline{f}_{A,b}^N$
    inside the ball 
    $B(p, \varepsilon/2)$.

    In the second and third case, it follows from the proof of Lemma \ref{Lemma: classification of robust instances}
    that the point $p = \tfrac{v}{\norm{v}}$ is a fixed point of $\overline{f}_{A,b}$
    with $\overline{f}_{A,b}^n\left(\Phi(x_0)\right) \to p$ as $n \to \infty$.
    Moreover, for a sufficiently small $\varepsilon > 0$ and 
    sufficiently large $N$, the ball 
    $\overline{B}(p, \varepsilon)$
    is mapped by 
    $\overline{f}_{A,b}^N$
    inside the ball 
    $B(p, \varepsilon/2)$.

    In either case, for sufficiently small $\varepsilon > 0$, the ball $\overline{B}(p, \varepsilon)$ is a robust invariant of 
    $\overline{f}_{A,b}^N$ for some $N$ which is contained in the interior of $\widehat{P}$, such that all $x \in \overline{B}(p, \varepsilon)$ remain in the interior of $\widehat{P}$ under iteration of $\overline{f}_{A,b}$, $\overline{f}_{A,b}^k(x_0) \in B(p, \varepsilon)$ for some $k$,
    and $\overline{f}_{A,b}^j(x_0) \in \intr{\widehat{P}}$ for all $j \leq k$.
    It now follows from Propositions \ref{Proposition: pulling back robust invariant} and \ref{Proposition: invariant for n-fold iterate is contained in invariant for f} in the appendix that $\overline{f}_{A,b}(x_0)$ is contained in the interior of a robust invariant of $\overline{f}_{A,b}$ which is contained in the interior of $\widehat{P}$.
    Theorem \ref{Theorem: main theorem} yields that the instance is robust.
\end{proof}

\section{The Quadratic Family}\label{Section: Mandelbrot}

    As a final application of Algorithm \ref{Algorithm: main algorithm}, we provide a complete decision method for the Mandelbrot set subject to the hyperbolicity conjecture.
    The \emph{Mandelbrot set} $\mathcal{M}$ is the set of all parameters $c \in \C$ such that the 
    orbit of the origin $0 \in \C$ under the map $f_c(z) = z^2 + c$ is bounded.
    The Mandelbrot set is a compact connected subset of $\C$.
    Despite $\mathcal{M}$ being best known for its numerous computer-generated depictions \cite{PeitgenRichter}, 
    it is unknown whether there exists any rigorous algorithm for computing accurate images of $\mathcal{M}$ to a given resolution.
    See \cite{BravermanYampolsky} for a good discussion of this.
   This has lead Penrose \cite{PenroseEmperor} to go as far as conjecturing that the Mandelbrot set might be an uncomputable subset of $\C$, however without specifying a computational model.
    Shortly after Penrose made his conjecture, Blum and Smale \cite{BlumMandel} showed that the Mandelbrot set is undecidable in the BSS-model of real computation \cite{blum2}. 
    However, Brattka \cite{brattka-emperor} soon raised the objection that BSS-decidability does not adequately capture the intuitive notion of computability that Penrose had in mind:     
    In fact, Penrose explicitly mentions the closed epigraph of the exponential function as an example of an intuitively computable subset of the complex plane.
    However, all BSS-decidable subsets of $\C$ must be semi-algebraic, so that the epigraph of the exponential function is BSS-undecidable.
    Computable analysis offers an alternative definition of computability: a subset $X \subseteq \C$ of the complex plane is called computable if its \emph{distance function}
    $
    d_{X} \colon \C \to \R,
    \;
    d_{X}(c) = \inf \Set{|c - z|}{z \in X}
    $
    is computable. 
    It is quite easy to see that the distance function of a compact set is computable if and only if the set can be plotted at any given resolution -- see for example \cite{BravermanYampolsky} for details.

    Hertling \cite{hertling9} showed that $\mathcal{M}$ is computable in this sense subject to the \emph{hyperbolicity conjecture}, which we describe below.
    He even showed that under this conjecture the \emph{signed distance function}
    $d^s_{\mathcal{M}}(c) = (-1)^{\chi_{\mathcal{M}}(c)}d_{\mathcal{M}}(c)$
    is computable. This is stronger than computability, as there exist sets whose distance function is computable but whose signed distance function is not.
    It follows from Hertling's work that the signed distance function of $\mathcal{M}$ is computable if and only if $\mathcal{M}$ is maximally partially decidable.

    The problem of maximally partially deciding the Mandelbrot set reduces to the Point Escape Problem as follows:
    map $c = a + ib \in \C$ to the Point Escape Problem instance $\left(g_c, \overline{B}(0,3), 0\right)$, where
    $
        g_c(x,y) = \left(x^2 - y^2 + a, 2 xy + b\right)
    $.
    Observe that 
    $g_{c}(x,y) = \left(\operatorname{Re}\left(f_c\left(x + iy\right)\right), \operatorname{Im}\left(f_c\left(x+iy\right)\right)\right)$.

    We will discuss how much of the Mandelbrot set our algorithm for the Point Escape Problem is able to compute under this reduction.
    To state our main result we require one more definition.
    A point $c \in \intr{\mathcal{M}}$ in the interior of $\mathcal{M}$ is called \emph{hyperbolic} if $f_c$ has an 
    \emph{attracting cycle},
    \textit{i.e.} there exist $z_0 \in \C$ and $n \in \N$ with $f^n_c(z_0) = z_0$, $f^j_c(z_0) \neq f^k_c(z_0)$ for $j, k < n$, and
    $\left|\tfrac{d}{dz}(f^n_c) (z_0)\right| < 1$.
    
    The \emph{hyperbolicity conjecture} states that every point in the interior of $\mathcal{M}$ is hyperbolic.        
    We obtain the following result:
            
    \begin{theorem}
        The decision problem for the Mandelbrot set reduces to the General Escape Problem via the reduction above.
        This reduction maps a parameter $c \in \C$ to a robust instance of the General Escape Problem
        if and only if $c$ belongs to $\C \setminus \mathcal{M}$ or to a hyperbolic component of the interior of $\mathcal{M}$. 
        In particular, the reduction maps robust instances to robust instances if and only if the hyperbolicity conjecture holds true.
    \end{theorem}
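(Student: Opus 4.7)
My plan is to prove three claims in turn: that $c \mapsto (g_c, \overline{B}(0,3), 0)$ is a valid reduction of the Mandelbrot decision problem to the Point Escape Problem; that the reduced instance is robust precisely when $c$ lies in $\C \setminus \mathcal{M}$ or in a hyperbolic component of $\intr{\mathcal{M}}$; and that the final equivalence with the hyperbolicity conjecture then drops out. The reduction itself is the standard escape-radius argument: $|z| > \max\{2,|c|\}$ implies $|f_c(z)| > |z|$, so the orbit of $0$ under $f_c$ is bounded iff it stays inside the Euclidean disk of radius $2$, which sits inside $\intr{\overline{B}(0,3)} \subseteq \R^2$, while if it is unbounded it eventually leaves $\overline{B}(0,3)$; hence $0$ is trapped in $\overline{B}(0,3)$ under $g_c$ iff $c \in \mathcal{M}$. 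The map $c \mapsto g_c$ is continuous into $C(\R^2,\R^2)$ with the compact-open topology, by Proposition~\ref{Proposition: properties of function-space representation}.

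For the characterisation I would split into four cases. If $c \notin \mathcal{M}$ the instance escapes and is therefore robust by Theorem~\ref{Theorem: main theorem}(2). If $c$ is hyperbolic, $f_c$ has an attracting periodic cycle $\{z_0,\ldots,z_{p-1}\}$ with multiplier of absolute value strictly less than one; small closed disks $D_i$ around the $z_i$ satisfying $g_c(D_i) \Subset D_{i+1 \bmod p}$ assemble into a robust invariant $W = \bigcup_i D_i \subseteq \intr{\overline{B}(0,3)}$, and the critical orbit of a hyperbolic map converges to the attracting cycle, so $g_c^N(0) \in \intr{W}$ for large $N$. Pulling $W$ back along the initial segment $g_c^{N-1}(0),\ldots,g_c(0)$ via the propositions about pullbacks of robust invariants (Propositions~\ref{Proposition: pulling back robust invariant} and~\ref{Proposition: invariant for n-fold iterate is contained in invariant for f}) yields a robust invariant $V$ of $g_c$ with $g_c(0)\in\intr{V}\subseteq V\subseteq\intr{\overline{B}(0,3)}$, making the instance robust by Theorem~\ref{Theorem: main theorem}(3). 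If $c \in \partial\mathcal{M}$, continuity of $c\mapsto g_c$ together with the fact that $\C\setminus\mathcal{M}$ accumulates at $c$ produces arbitrarily small perturbations of the Point Escape input that escape, so the answer is unstable and the instance is not robust.

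The one subtle case is $c \in \intr{\mathcal{M}}$ non-hyperbolic. Suppose, towards a contradiction, that a robust invariant $V$ with $g_c(0)\in\intr{V}\subseteq V\subseteq\intr{\overline{B}(0,3)}$ exists. Then the iterates $\{g_c^n|_{\intr{V}}\}_n$ are uniformly bounded by the compact set $V$, so by Montel's theorem $\intr{V}\subseteq\mathrm{Fatou}(f_c)$, and consequently $g_c^n(0)\in\mathrm{Fatou}(f_c)$ for every $n\geq 1$. By Sullivan's no-wandering-domains theorem the Fatou component containing $g_c(0)$ is eventually periodic, and its terminal periodic cycle must be of attracting, super-attracting, parabolic, or Siegel type. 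Siegel disks are disjoint from the grand orbit of the critical point, while every parameter $c$ for which $f_c$ carries a parabolic cycle lies on $\partial\mathcal{M}$; so the cycle must be (super-)attracting, making $c$ hyperbolic, a contradiction. Hence no such $V$ exists, Algorithm~\ref{Algorithm: main algorithm} does not halt on the instance by Theorem~\ref{Theorem: main theorem}(4), and the instance is not robust.

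The final equivalence is now immediate: the robust instances of the Mandelbrot decision problem form $\C \setminus \partial\mathcal{M}$, whereas by the characterisation above the reduction produces a robust Point Escape instance iff $c \in (\C\setminus\mathcal{M}) \cup \{c \in \intr{\mathcal{M}} : c \text{ hyperbolic}\}$; these two subsets of $\C$ coincide precisely when every point of $\intr{\mathcal{M}}$ is hyperbolic, i.e., when the hyperbolicity conjecture holds. The main technical obstacle lies in the non-hyperbolic interior case: the soft perturbation machinery of Section~\ref{Section: Algorithm and Main Results} is not enough on its own there, and one must genuinely invoke the Fatou-Julia classification of periodic components together with the two standard facts about Siegel and parabolic cycles used above.
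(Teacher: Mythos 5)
Your proof follows the paper closely for the reduction, the escaping case, the hyperbolic case (attracting-cycle disks assembled into a robust invariant and pulled back via Propositions~\ref{Proposition: pulling back robust invariant} and \ref{Proposition: invariant for n-fold iterate is contained in invariant for f}), and the boundary case, but it takes a genuinely different route for the non-hyperbolic interior case, which is the crux of the result. The paper directly cites the fact from the Orsay notes that for $c$ in a non-hyperbolic component of $\intr{\mathcal{M}}$ one has $c \in K_c = \partial K_c$, i.e.\ the filled Julia set has empty interior, and then builds an explicit $\varepsilon$-perturbation $f_\varepsilon$ of $f_c$ sending $0$ to a nearby escaping point $z_\varepsilon$, so that the Point Escape instance is witnessed to be a boundary instance without appeal to the completeness half of Theorem~\ref{Theorem: main theorem}. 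You instead argue indirectly: supposing a robust invariant $V$ with $g_c(0) \in \intr{V}$ existed, Montel places $\intr{V}$ in the Fatou set, Sullivan's no-wandering-domains theorem plus the classification of periodic Fatou components (no Herman rings for polynomials) forces the terminal cycle to be attracting, parabolic, or Siegel, and you rule out parabolic and Siegel. This is a legitimate alternative, and it does bring out \emph{why} the Orsay fact is true rather than merely citing it, at the cost of invoking considerably heavier machinery. One caveat: your justification for dismissing the Siegel case, namely that \emph{``Siegel disks are disjoint from the grand orbit of the critical point,''} is non-standard and, as stated, not obviously a theorem (the standard statement in this area concerns the boundary of the Siegel disk lying in the postcritical closure, not the disk being disjoint from the critical grand orbit). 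The clean argument, which is the same one you correctly use for parabolic cycles, is that any indifferent cycle (rational \emph{or} irrational multiplier on the unit circle) forces $c \in \partial\mathcal{M}$, a consequence of Mañé--Sad--Sullivan J-stability; this handles Siegel and parabolic in one stroke and would make your non-hyperbolic argument fully rigorous.
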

    \begin{proof}
        Consider the map $c \mapsto \left(g_c, \overline{B}(0,3), 0\right)$ defined above.
        It is easy to see that this is really a reduction.

        It is obvious that -- under this reduction -- Algorithm \ref{Algorithm: main algorithm} halts on all parameters outside $\mathcal{M}$
        and that Algorithm \ref{Algorithm: main algorithm} does not halt on any boundary point.
        
        Now, let $c \in \intr{\mathcal{M}}$.
        Assume first that $c$ is hyperbolic. Then $c$ has an attracting cycle $\{z_0, z_1,\dots, z_{n - 1}\}$ as described above.
        It follows from a celebrated theorem by Fatou \cite[Theorem 1]{Branner} that the origin is attracted by this attracting cycle,
        \textit{i.e.} for all $\varepsilon$ there exists $N$ such that 
        $\left|f^N_c(0) - z_0\right| < \varepsilon$.

        We show that the attracting cycle induces a robust invariant.
        Since $|\tfrac{d}{dz}(f^n_c)(z_0)|  < 1$, there exists $\delta > 0$ such that $|\tfrac{d}{dz}(f^n_c)(z_0)|  < 1 - \delta$.
        Further, observe that by the chain rule we have $\tfrac{d}{dz}(f^n_c) (z_0) = \tfrac{d}{dz}(f^n_c) (f^k(z_0))$ for all $k$.
        By continuity, there exists $\varepsilon > 0$ such that 
        $|\tfrac{d}{dz}(f^n_c) (z)| < 1 - \delta$ for all $z$ with $|z - f^k(z_0)| < \varepsilon$ for some $k \in \N$.
        Consider the set 
        $
            K = \bigcup_{j = 0}^{n - 1} \overline{B}(f^{k}_c(z_0), \varepsilon).
        $
        It follows from the uniform bound on the first derivative that on $K$, the function $f^{n}_c$ is Lipschitz-continuous with Lipschitz constant $1 - \delta$.
        Let $z \in K$ with $|z - f^{k}_c(z_0)| < \varepsilon$.
        Then we obtain:
        \[
            \left| f^n_c\left(f_c(z)\right) - f^{k}_c(z_0)\right| 
            =
            \left| f^n_c\left(f_c(z)\right) - f^{n}_c\left(f^{k}_c(z_0)\right)\right| 
            \leq 
            \left|1 - \delta\right| 
            \left|f^n_c(z) - f^{k}_c(z_0)\right|
            \leq
            \left|1 - \delta\right| \varepsilon
            < \varepsilon.
        \]
        So $K$ is a robust invariant for $f^n_c$.
        Since $0$ is attracted by the attracting cycle, we have $f^N(0) \in \intr{K}$ for sufficiently large $N$.
        It now follows from Propositions \ref{Proposition: pulling back robust invariant} and \ref{Proposition: invariant for n-fold iterate is contained in invariant for f} in the appendix that there exists a robust invariant for $f_c$ whose interior contains $0$.
        This invariant must be contained in the closed disk of radius $2$, for any orbit of $f_c$ that contains an element of absolute value greater than $2$ must be unbounded. In particular, it must be contained in the open disk of radius $3$. 
        Hence, Algorithm \ref{Algorithm: main algorithm} halts on input $(g_c, \overline{B}(0,3), 0)$.

Suppose on the other hand that $c$ belongs to a non-hyperbolic component of the interior of $\mathcal{M}$.
Let $K_c = \Set{z \in \C}{ f^n_c(z) \text{ is bounded}}$ denote the \emph{filled-in Julia set} of $c$.
It can be shown (see \cite{Orsay}) that since $c$ belongs to a non-hyperbolic component of the interior of $\mathcal{M}$,
we must have $c \in K_c = \partial K_c$. 
This implies that for all $\varepsilon > 0$ there exists 
$z_\varepsilon$ 
with 
$|z_{\varepsilon} - c| < \varepsilon$
and 
$z_{\varepsilon} \notin K_c$.
Since the orbit $f^n_c(z_\varepsilon)$ is unbounded and disjoint from $K_c$ with $0 \in K_c$, there exists a number $\delta > 0$ with 
$|f^n_c(z_\varepsilon)| > \delta$.
Thus, if we define 
\[
    f_{\varepsilon}(z) = 
        \begin{cases}
            z_\varepsilon\left(1 - \frac{|z|}{\delta}\right) + f_c(z) \frac{|z|}{\delta} &\text{if }|z| < \delta,\\
            f_c(z)                                         &\text{otherwise.}
        \end{cases}
\]
then the orbit of $0$ under $f_{\varepsilon}$ is unbounded, and $\norm{f_{\varepsilon} - f_c}_\infty < \varepsilon$.
This shows that Algorithm \ref{Algorithm: main algorithm} cannot halt on input $\left(g_c, \overline{B}(0,3), 0\right)$.
    \end{proof}

\section{Conclusion and Future Work}

We have studied the problem of deciding for a given set, initial point, and function whether the initial point escapes the set under iteration of the function. We have allowed arbitrary continuous functions as inputs, represented by the weakest representation that makes function evaluation uniformly computable. We have given an algorithm that certifies that the initial point is trapped by searching for a robust invariant.
We have shown that this algorithm is complete in the sense that no sound algorithm can detect further trapped instances.
This settles the very natural question of how much is decidable about the escape problem when we are given just enough information about our system to compute the orbit of a given point.
In general, computable analysis allows us to ask and answer questions of this form: if we are given a certain (limited) amount of information about a system, how much can we say about the system's behaviour?
This leads to various natural directions for future work: 
On the one hand, it would be very interesting to study more general systems, for example by replacing $\R^d$ with a locally compact metric space $X$.
On the other hand, it would be interesting to study more special systems which are presented with additional information, such as polynomials or other subclasses of analytic maps which are ``well-behaved at infinity'' such as Pfaffian functions.
We have provided partial evidence that, somewhat surprisingly, our algorithm for general systems yields maximal partial algorithms for rather rigid special systems, such as linear systems and -- conjecturally -- one-dimensional quadratic systems.
It would be very interesting to determine whether this reduction extends, for example, to higher-degree multivariate polynomials.

\bibliography{escape}
\newpage

\appendix

\section*{Appendix}

\section{Proofs for Section \ref{Section: Encodings}}\label{Appendix: Representations}

\begin{proposition}\label{Proposition: closed balls about closed sets are closure of open balls}
    Let $A \subseteq \R^d$ be a closed set.
    Then $\overline{B}(A, r)$ is the closure of $B(A, r)$.
\end{proposition}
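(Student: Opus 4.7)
The plan is to prove both inclusions separately. For $\overline{B(A,r)} \subseteq \overline{B}(A,r)$, it suffices to observe that $B(A,r) \subseteq \overline{B}(A,r)$ and that $\overline{B}(A,r)$ is \emph{itself} closed under the assumption that $A$ is closed. To see this, I would take a convergent sequence $y_n \to y$ in $\overline{B}(A,r)$, choose witnesses $x_n \in A$ with $\|x_n - y_n\|_\infty \leq r$, and use boundedness of $(y_n)$ together with the triangle inequality to conclude that $(x_n)$ is bounded. By passing to a subsequence, $x_{n_k} \to x$ for some $x \in \R^d$, and closedness of $A$ forces $x \in A$. Continuity of the norm then gives $\|x - y\|_\infty \leq r$, so $y \in \overline{B}(A,r)$.

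For the reverse inclusion $\overline{B}(A,r) \subseteq \overline{B(A,r)}$, let $y \in \overline{B}(A,r)$ and pick $x \in A$ with $\|x - y\|_\infty \leq r$. If $y = x$, then $y \in A$ and $\|y - y\|_\infty = 0 < r$, so $y \in B(A,r)$. Otherwise, I would interpolate along the segment from $x$ to $y$: for $t \in [0,1)$ set $y_t = (1-t)x + ty$. Then $\|x - y_t\|_\infty = t\,\|x - y\|_\infty \leq tr < r$, so $y_t \in B(A,r)$; and $y_t \to y$ as $t \to 1^-$, showing that $y$ lies in the closure of $B(A,r)$.

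No step here is particularly delicate — the only place one must be careful is in arguing that $\overline{B}(A,r)$ is closed, which genuinely uses that $A$ is closed (this is exactly what fails for general $S$ in the remark preceding the proposition). Everything else is elementary interpolation and a compactness-style extraction argument.
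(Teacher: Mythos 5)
Your proof is correct and takes essentially the same route as the paper's: for $\overline{B}(A,r)\subseteq\overline{B(A,r)}$ both arguments reduce to the fact that a closed ball is the closure of the open ball (you make the interpolation $y_t=(1-t)x+ty$ explicit, the paper cites it), and for the reverse inclusion both use a Bolzano--Weierstra\ss{} extraction of witnesses $x_n\in A$ together with closedness of $A$, with the only cosmetic difference that you frame it as ``$\overline{B}(A,r)$ is closed'' while the paper argues directly about limit points of $B(A,r)$.
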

\begin{proof}
    Assume that $y \in \overline{A}(K, r)$.
    Then $y \in \overline{B}(x, r)$ for some $x \in A$.
    Since closed balls in $\R^d$ are the closure of open balls,
    $y$ is contained in the closure of $B(x, r)$.
    Since $B(x, r)$ is contained in $B(A, r)$, the closure of $B(x, r)$ is 
    contained in the closure of $B(A, r)$.
    In particular, $y$ is contained in the closure of $B(A, r)$.

    Conversely, let $y$ be contained in the closure of $B(A, r)$.
    If $y$ is contained in the closure of $B(K, r)$, then it is the limit of a sequence $(y_n)_n$ with $y_n \in B(A, r)$.
    The sequence $\{\norm{y_0}_\infty, \norm{y_1}_\infty, \dots\}$ is bounded by some integer $B > 0$.
    Let $K = A \cap [-B - r, B + r]^d$.
    Then the sequence $(y_n)_n$ is contained in $B(K, r)$.
    Indeed, for all $n \in \N$ there exists $x \in A$ with $\norm{y_n - x}_{\infty} < r$.
    We have 
    \[
        \norm{x}_\infty \leq \norm{y_n}_\infty + \norm{y_n - x}_{\infty} < B + r
    \]
    so that $x \in K$.
    
    Since $K$ is compact and the map $x \mapsto \norm{y_n - x}_\infty$ is continuous for all $n$, there exists a sequence $(x_n)_n$ in $K$ such that 
    \[ 
        \norm{y_n - x_n}_\infty = \min \Set{\norm{y_n - x}_\infty}{x \in K} < r.
    \]
    By the Bolzano-Weierstraß theorem, the sequence $(x_n)_n$ has a convergent subsequence $(x_{n_k})_k$ with limit $x \in K$.
    We have 
    \[
        \norm{y_{n_k} - x}_\infty
        \leq 
        \norm{y_{n_k} - x_{n_k}}_\infty
        +
        \norm{x_{n_k} - x}_\infty
        <
        r + \norm{x_{n_k} - x}_\infty.
    \] 
    Taking the limit as $k \to \infty$ we obtain $\norm{y - x}_\infty \leq r$, so that 
    $y \in \overline{B}(K, r) \subseteq \overline{B}(A, r)$.
\end{proof}

\begin{proposition}\label{Proposition: openness and fibre-separableness}
    Let $\delta \colon \subseteq \NN \to X$ be a representation.
    Assume that there exists a continuous map 
    \[
        F \colon \dom (\delta) \to \left(\dom (\delta)\right)^\N
    \]
    such that for all $p \in \dom(\delta)$, the sequence 
    $F(p)$ is dense in $\delta^{-1}(\delta(p))$.
    Then $\delta$ is an open map.
\end{proposition}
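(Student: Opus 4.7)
The plan is to unfold the definition of an open representation: I need to show that for every open set $U \subseteq \dom(\delta)$, the image $\delta(U)$ is open in $X$ with respect to the final topology, equivalently that $\delta^{-1}(\delta(U))$ is open in $\dom(\delta)$. So I fix an arbitrary open $U \subseteq \dom(\delta)$ and an arbitrary $q \in \delta^{-1}(\delta(U))$, and aim to produce an open neighbourhood of $q$ contained in $\delta^{-1}(\delta(U))$.

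First I would locate a witness inside $U$ on the same fibre as $q$. Pick any $p \in U$ with $\delta(p) = \delta(q)$; such a $p$ exists by the choice of $q$. Then $p$ lies in the fibre $\delta^{-1}(\delta(q))$, and since $U$ is open in $\dom(\delta)$, the set $U \cap \delta^{-1}(\delta(q))$ is a non-empty relatively open subset of this fibre. By hypothesis, the sequence $F(q) = (F(q)_n)_{n \in \N}$ lies in and is dense in the fibre, so density forces $F(q)_n \in U$ for some index $n \in \N$.

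Next I would use continuity of $F$ to transport openness back to a neighbourhood of $q$. The $n$-th coordinate projection $\pi_n \colon (\dom(\delta))^\N \to \dom(\delta)$ is continuous, so $\pi_n \circ F \colon \dom(\delta) \to \dom(\delta)$ is continuous, and therefore $V := (\pi_n \circ F)^{-1}(U)$ is an open neighbourhood of $q$. For any $r \in V$ we have $F(r)_n \in U$ and, by the fibre condition, $\delta(F(r)_n) = \delta(r)$, so $\delta(r) \in \delta(U)$, i.e.\ $r \in \delta^{-1}(\delta(U))$. This yields $V \subseteq \delta^{-1}(\delta(U))$, completing the proof.

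The only delicate point is ensuring that the selection step (some $F(q)_n$ lands in $U$) actually uses density in the appropriate sense: density is asserted within the fibre, but we only need openness of $U \cap \delta^{-1}(\delta(q))$ in the subspace topology of the fibre, which is automatic from openness of $U$ in $\dom(\delta)$. No genuine obstacle is expected; the argument is essentially an unfolding of the definitions, and the continuity of $F$ together with the density hypothesis does all the work.
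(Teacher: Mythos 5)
Your proof is correct and takes essentially the same approach as the paper's: density of $F(q)$ in the fibre yields some index $n$ with $F(q)_n \in U$, continuity of $\pi_n \circ F$ makes $(\pi_n \circ F)^{-1}(U)$ an open neighbourhood, and the fibre condition transfers membership of $F(r)_n$ in $U$ to membership of $r$ in $\delta^{-1}(\delta(U))$. The only cosmetic difference is that the paper defines the open set $V = \{p : \exists n.\, F(p)_n \in U\}$ globally and proves $V = \delta^{-1}(\delta(U))$, whereas you localise to a neighbourhood of each point; these are two phrasings of the same argument.
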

\begin{proof}
    Let $U \in \O(\dom \delta)$ be an open set.
    Then the set 
    \[
        V = \Set{p \in \dom (\delta)}{ \exists n \in \N. F(p)_n \in U }
    \]
    is open.
    We claim that 
    \[
        V = \delta^{-1}(\delta(U))
    \]
    which shows that $\delta(U)$ is open.

    Let $p \in V$.
    Then there exists a $q \in \dom \delta$ with $\delta(q) = \delta(p)$ and 
    $q \in U$.
    We have $\delta(q) \in \delta(U)$ and hence 
    $p \in \delta^{-1}(\delta(p)) \subseteq \delta^{-1}(\delta(U))$.

    Conversely, let $p \in \delta^{-1}(\delta(U))$.
    Then $\delta(p) \in \delta(U)$, so that there exists 
    $q \in \dom \delta$ with $q \in U$ and $\delta(q) = \delta(p)$.
    Since $F(p)$ is dense in $\delta^{-1}(\delta(p))$, there exists a subsequence
    $(q_n)_n$ of $F(p)$ with $q_n \to q$.
    By definition, we must have $q_n \in U$ for all large $n$.
    This implies $p \in V$ by definition.
\end{proof}

\begin{proposition}\label{Proposition: openness and fibre-overtness}
    Let $\delta \colon \subseteq \NN \to X$ be a representation.
    Assume that there exists an algorithm that takes as input 
    the name $p \in \dom \delta$ of a point $x$ and 
    a finite sequence $q_0 \in \N^*$ of integers and 
    halts if and only if $q_0$ can be extended to a name $q \in \dom \delta$ of $x$.
    Then $\delta$ is an open map.
\end{proposition}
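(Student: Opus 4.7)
The plan is to adapt the argument of Proposition \ref{Proposition: openness and fibre-separableness}, using the hypothesized algorithm in place of the dense-fibre map $F$. I would fix an open set $U \subseteq \dom \delta$ and show that $\delta^{-1}(\delta(U))$ is open in $\dom \delta$; by the definition of the final topology on $X$ this is equivalent to $\delta(U)$ being open, which is what we need.

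First, since $\dom \delta$ has a countable basis of cylinder sets, I would write $U = \bigcup_{i \in \N} \left([q_0^{(i)}] \cap \dom \delta\right)$, where $[q_0^{(i)}] \subseteq \NN$ is the cylinder determined by a finite prefix $q_0^{(i)} \in \N^*$. For each finite prefix $q_0 \in \N^*$, set
\[
V(q_0) = \Set{p \in \dom \delta}{q_0 \text{ extends to a name } q \in \dom \delta \text{ of } \delta(p)}.
\]
By assumption, $V(q_0)$ is precisely the set of oracles on which the given algorithm halts when $q_0$ is supplied as its finite input and $p$ as its oracle. Since any halting run reads only a finite prefix of the oracle tape, the halting set is automatically a union of cylinders in $\NN$, so $V(q_0)$ is open in $\dom \delta$.

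The key step is the identity
\[
\delta^{-1}(\delta(U)) = \bigcup_{i \in \N} V\!\left(q_0^{(i)}\right).
\]
For the inclusion $\supseteq$: if $p \in V(q_0^{(i)})$, then $q_0^{(i)}$ extends to some $q \in \dom \delta$ with $\delta(q) = \delta(p)$; this $q$ lies in $[q_0^{(i)}] \cap \dom \delta \subseteq U$, so $\delta(p) = \delta(q) \in \delta(U)$. For $\subseteq$: if $\delta(p) = \delta(q)$ for some $q \in U$, then $q \in [q_0^{(i)}]$ for some $i$, and the finite prefix $q_0^{(i)}$ extends to the name $q$ of $\delta(p)$, so $p \in V(q_0^{(i)})$. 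Since each $V(q_0^{(i)})$ is open, the union is open, proving that $\delta^{-1}(\delta(U))$ is open and hence that $\delta$ is an open map.

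There is no real obstacle here: the argument is a direct translation of the proof of the preceding proposition, with the role of a continuous fibre-parameterisation replaced by the algorithmic check for prefix-extendability inside the fibre. The only ingredient one has to notice is the finite-use property of halting oracle computations, which immediately turns the algorithm's halting set (for each fixed $q_0$) into an open subset of $\dom \delta$.
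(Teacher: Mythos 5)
Your proof is correct and takes essentially the same route as the paper: the paper reduces to basic cylinder sets and proves $\delta^{-1}(\delta([q_0])) = V(q_0)$, while you decompose an arbitrary open $U$ into cylinders and prove $\delta^{-1}(\delta(U)) = \bigcup_i V(q_0^{(i)})$ — the same identity applied termwise. Your version is slightly more explicit in justifying that $V(q_0)$ is open (the finite-use property of a halting oracle computation), which the paper leaves implicit.
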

\begin{proof}
    Any finite sequence $q_0 \in \N^*$ defines an open set 
    \[
        \left[q_0\right] = \Set{q \in \dom \delta}{q(n) = q_0(n) \text{ for }n \leq \operatorname{length}(q_0)}.
    \]
    The family of sets $\left([q_0]\right)_{q_0 \in \N^*}$ form a basis for the topology on $\dom \delta$.
    It suffices to show that $\delta$ sends basic open sets to open sets.
    
    Let $\left[q_0\right]$ be a basic open set.
    By assumption, the set 
    \[
        V = \Set{p \in \dom \delta}{ q_0 \text{ has an extension }q \in \dom \delta \text{ with }\delta(q) = \delta(p) }
    \]
    is open.
    We claim that 
    $V = \delta^{-1}\left(\delta\left(\left[q_0\right]\right)\right)$.
    It then follows from the definition of the topology on $X$ that $\delta\left(\left[q_0\right]\right)$ is open.

    Let $p \in V$.
    Then there exists an extension $q$ of $q_0$ with $\delta(p) = \delta(q)$.
    Then $\delta(p) = \delta(q) \in \delta\left(\left[q_0\right]\right)$,
    so that $p \in \delta^{-1}\left(\delta\left(\left[q_0\right]\right)\right)$.
    
    Conversely, let $p \in \delta^{-1}\left(\delta\left(\left[q_0\right]\right)\right)$.
    Then $\delta(p) \in \delta\left(\left[q_0\right]\right)$,
    so that there exists $q \in \left[q_0\right]$ with $\delta(q) = \delta(p)$.
    By definition, $p \in V$.
\end{proof}

\begin{proposition}\label{Proposition: rho-d admissible and open (Appendix)}
    The representation $\rho^d$ is open and admissible and induces the standard Euclidean topology on $\R^d$,
    \textit{i.e.}, the topology generated by the norm $\norm{\cdot}_\infty$.
\end{proposition}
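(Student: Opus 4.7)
The plan is to reduce all three claims to the fact that $\rho^d$ is computably equivalent to the standard Cauchy representation $\rho_C^d$ of $\R^d$ (where a name of $x$ is a fast Cauchy sequence of dyadic rationals converging to $x$), and then to apply Proposition~\ref{Proposition: openness and fibre-overtness} for openness.

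First I would exhibit computable translations in both directions. The direction $\rho^d \leq \rho_C^d$ is straightforward: given a name $(X_n)_n$, at output stage $k$ search for the least index $n_k$ with $\diam |X_{n_k}| \leq 2^{-k}$ (which exists by the diameter condition) and emit the dyadic barycentre $q_k$ of any cube in $X_{n_k}$. Since $x, q_k \in |X_{n_k}|$, this yields $\norm{q_k - x}_\infty \leq 2^{-k}$, hence a fast Cauchy name. The direction $\rho_C^d \leq \rho^d$ needs a little more care: starting from a fast Cauchy name $(q_k)_k$ with $\norm{q_k - x}_\infty < 2^{-k}$, I inductively build $X_n$ as the set of level-$n$ dyadic cubes meeting $\overline{B}(q_{k_n}, 2^{-k_n})$, where $k_n$ is chosen large enough to simultaneously guarantee (i) $x$ lies in the interior of the resulting $|X_n|$, (ii) $|X_n| \subseteq |X_{n-1}|$, and (iii) $\diam |X_n| \leq 2^{-n+1}$. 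All three conditions are achievable by taking $k_n$ sufficiently large, because $q_{k_n} \to x \in \intr{|X_{n-1}|}$.

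Since $\rho_C^d$ is well known to be admissible and to induce the Euclidean topology on $\R^d$ (see \cite{WeihrauchBook, PaulyRepresented}), the computable equivalence just established transfers both properties to $\rho^d$, yielding the topological claim and admissibility at once.

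Finally, openness follows from Proposition~\ref{Proposition: openness and fibre-overtness}. Given a name $(X_n)_n$ of $x$ together with a finite candidate prefix $(X_0', \dots, X_k')$, the prefix extends to some $\rho^d$-name of $x$ if and only if it satisfies the intrinsic constraints $X_i' \subf \QD^d_i$ and $|X_{i+1}'| \subseteq |X_i'|$ for $i < k$ (decidable) together with $x \in \intr{|X_i'|}$ for every $i \leq k$. The latter condition is semi-decidable from the given name: search for an index $n_i$ with $|X_{n_i}| \Subset |X_i'|$. Such an $n_i$ exists precisely when $x \in \intr{|X_i'|}$, because $\diam|X_n|\to 0$ and $x \in \intr{|X_n|}$ for all $n$. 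Whenever all checks succeed the prefix can always be completed, for instance by running the inductive construction from the $\rho_C^d \leq \rho^d$ reduction starting at level $k$ and keeping only cubes inside $|X_k'|$. The main obstacle is really just the bookkeeping in the $\rho_C^d \leq \rho^d$ translation: enforcing interior-containment, nesting, and shrinking-diameter at the same time. Choosing the safety indices $k_n$ large enough at each inductive step resolves this cleanly.
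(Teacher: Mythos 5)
Your proposal is correct and follows essentially the same route as the paper: both delegate admissibility and the induced-topology claim to standard facts about the Cauchy-type representation and both obtain openness from Proposition~\ref{Proposition: openness and fibre-overtness} by noting that a candidate prefix extends to a name of $x$ iff it satisfies decidable nesting constraints and a semi-decidable condition equivalent to $x \in \intr{|X_k'|}$. The only cosmetic difference is that you spell out the two-way translation between $\rho^d$ and the Cauchy representation, whereas the paper cites this equivalence from the literature; your extension criterion (check $x \in \intr{|X_i'|}$ for all $i \leq k$) is also logically equivalent to the paper's (check $|X_n| \Subset |X_k'|$ for some $n$), since by nesting the condition at level $k$ already implies it at all earlier levels.
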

\begin{proof}
    For the claim of admissibility and the final topology, see \cite{WeihrauchBook,SchroederPhD}.
    The openness of the representation follows from Proposition \ref{Proposition: openness and fibre-overtness}:
    Given a name $(X_n)_n$ of a point $x \in \R^d$
    and a finite sequence $\langle Y_1, \dots, Y_N \rangle$ of finite sets of cubes,
    observe that $\langle Y_1, \dots, Y_N \rangle$ extends to a name 
    of $x$ if and only if $|Y_{n + 1}| \subseteq |Y_n|$ for all $n < N$
    and $|X_n| \Subset |Y_{N}|$ for some $n \in \N$.
    This is semi-decidable, since there are only finitely many cubes to check for each $n \in \N$.
\end{proof}

\begin{proposition}\label{Proposition: rho-d-rho-d admissible and open (Appendix)}
    The representation $\left[\rho^d \to \rho^d\right]$ is open and admissible and induces the compact-open topology on $C\left(\R^d, \R^d\right)$.
    This topology further coincides with the topology induced by the metric \eqref{eq: distance on functions}.
\end{proposition}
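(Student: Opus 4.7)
The plan is to establish the three assertions---identification of the induced topology, admissibility, and openness---in this order. For the coincidence of the compact-open topology on $C(\R^d,\R^d)$ with the topology induced by the metric \eqref{eq: distance on functions}, I would invoke the classical fact that for the locally compact Hausdorff space $\R^d$, exhausted by the compact sets $[-2^n,2^n]^d$, the compact-open topology coincides with the topology of uniform convergence on compacta and is metrised by \eqref{eq: distance on functions}. To identify this with the final topology of $[\rho^d \to \rho^d]$, I would reason sequentially. If a function $g$ shares the prefix $F_1,\dots,F_N$ of some name of $f$, then for every $x \in [-2^N, 2^N]^d$ both $f(x)$ and $g(x)$ lie in $|F_N(Q)|$ for the enclosing cube $Q \in \QD^d_N$, giving $\norm{f - g}_{\infty, [-2^N, 2^N]^d} \leq \max_Q \diam |F_N(Q)|$, which tends to $0$ by Lemma~\ref{lemma: modulus of continuity witnessed by representation}. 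Conversely, given $f_k \to f$ uniformly on each $[-2^N, 2^N]^d$ and a fixed name $(F_n)_n$ of $f$, the uniform closeness ensures $f_k(Q) \subseteq \intr{|F_n(Q)|}$ for every cube at every resolution $n \leq N$ and every sufficiently large $k$, so each such $f_k$ admits a name extending the prefix $F_1, \dots, F_N$.

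For admissibility I would show that $[\rho^d \to \rho^d]$ is computably equivalent to the canonical exponential representation $[\rho^d \to \rho^d]_{\mathrm{std}}$ of $C(\R^d, \R^d)$, which is admissible by the general theory of function-space representations (cf.~\cite{WeihrauchBook, SchroederPhD, PaulyRepresented}). One direction is the uniform computability of evaluation with respect to $[\rho^d \to \rho^d]$: given a name $(F_n)_n$ of $f$ and a name $(X_n)_n$ of $x$, an appropriate shift of the sequence $(F_n(X_n))_n$ is a $\rho^d$-name of $f(x)$; since $[\rho^d \to \rho^d]_{\mathrm{std}}$ is minimal among representations making evaluation computable, this yields a computable reduction $[\rho^d \to \rho^d] \to [\rho^d \to \rho^d]_{\mathrm{std}}$. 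For the reverse reduction, I would construct the datum $F_n$ at resolution $n$ directly from a $[\rho^d \to \rho^d]_{\mathrm{std}}$-name: for each $Q \in \QD^d_n([-2^n, 2^n]^d)$, use evaluations at the vertices and interior of $Q$ together with the computable modulus of continuity of $f$ on $Q$ to produce a finite family of cubes whose interior contains $f(Q)$; the refinement and diameter conditions can be enforced by making the tail choices recursively compatible.

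For openness I would invoke Proposition~\ref{Proposition: openness and fibre-overtness}: given a name $(F_n)_n$ of $f$ and a prefix $\langle G_1, \dots, G_N \rangle$, I need to semi-decide whether the prefix extends to a full name of $f$. The syntactic consistency conditions on the prefix (finiteness and the refinement condition) are decidable. The remaining content constraint is that $f(Q) \subseteq \intr{|G_n(Q)|}$ for every $n \leq N$ and every $Q \in \QD^d_n([-2^n, 2^n]^d)$; for fixed $Q$ this is equivalent to the existence of some $m \geq n$ with $|F_m(R)| \Subset |G_n(Q)|$ for every $R \in \QD^d_m$ contained in $Q$, and that statement is semi-decidable by dovetailing over $m$. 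The hardest step, and the main technical obstacle, is verifying that once all these checks succeed the consistent prefix really does extend to some $[\rho^d \to \rho^d]$-name of $f$. This requires producing a tail $H_{N+1}, H_{N+2}, \dots$ that simultaneously refines $G_N$, covers the data of $f$ at each resolution as in condition~(1), and satisfies the vanishing-diameter condition~(3). The natural recipe takes $H_n$ to be a coarsening of $F_{m(n)}$ for some $m(n) \gg n$ chosen large enough that Lemma~\ref{lemma: modulus of continuity witnessed by representation} forces the diameters down; the delicate point is choosing the schedule $n \mapsto m(n)$ so that both the refinement relation against $G_N$ (and against the successive $H_{n-1}$) and the diameter bound of condition~(3) are simultaneously met.
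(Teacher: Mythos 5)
Your proposal follows the same overall route as the paper: the paper cites standard references for admissibility and the identification of the induced topology, and establishes openness by verifying the hypothesis of Proposition \ref{Proposition: openness and fibre-overtness} via an extension criterion that is essentially the one you describe. You fill in explicitly what the paper leaves to citations, and you correctly flag the subtlety---that a prefix passing the consistency checks must actually be shown to extend to a full name---which the paper's terse openness argument leaves implicit; your proposed tail construction is the right way to discharge it.

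One sentence in your topology argument is wrong as stated: the quantity $\max_{Q \in \QD^d_N([-2^N, 2^N]^d)} \diam |F_N(Q)|$ need \emph{not} tend to $0$ as $N \to \infty$. The domain of the maximum grows along with the resolution, and condition (3) of the definition controls diameters only along nested sequences of cubes, hence (via Lemma \ref{lemma: modulus of continuity witnessed by representation}) gives uniform bounds only on a \emph{fixed} compactum. For example, with $f(x) = x^2$ in one dimension, the cube $Q = [2^N - 2^{-N}, 2^N]$ satisfies $\diam f(Q) \approx 2$ for all $N$, so $\diam |F_N(Q)|$ cannot tend to $0$. The repair is straightforward: fix $M$ first, observe that by Lemma \ref{lemma: modulus of continuity witnessed by representation} together with the monotonicity condition (2) we have $\max_{Q \in \QD^d_N([-2^M, 2^M]^d)} \diam |F_N(Q)| \to 0$ as $N \to \infty$, and then combine with the definition of the metric \eqref{eq: distance on functions} (where the $n$-th term is capped at $2^{-n}$) to conclude that long enough shared prefixes force metric closeness. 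With that fix, your argument is correct and your proposal is an adequate, more detailed version of the paper's proof.
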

\begin{proof}
    For the claim of admissibility and the final topology, see \cite{WeihrauchBook,SchroederPhD}.
    The openness of the representation follows from Proposition \ref{Proposition: openness and fibre-overtness}:
    Given a name $(F_n)_n$ of a function $f \colon \R^d \to \R^d$
    and a finite sequence $\langle G_1, \dots, G_n \rangle$ of set-valued maps  
    $G_n \colon \QD^d_n([-2^n, 2^n]^d) \rightrightarrows \QD^d_n$
    with finite non-empty values, 
    observe that $\langle G_1, \dots, G_N \rangle$ extends to a name 
    of $f$ if and only if $|G_{n + 1}(Q')| \subseteq |G_n(Q)|$ 
    whenever $Q' \subseteq Q$
    for all $n < N$
    and there exists $n \in \N$ 
    such that for all $Q' \in \QD^d_n([-2^n, 2^n]^d)$
    and all $Q \in \QD^d_N([-2^N, 2^N]^d)$
    with $Q' \subseteq Q$
    we have 
    $|F_n(Q')| \Subset |G_N(Q)|$.
    This is semi-decidable, since there are only finitely many cubes to check for each $n \in \N$.
\end{proof}

\begin{proposition}\label{Proposition: alpha is admissible and open (Appendix)}
    The representation $\alpha^d$ is admissible and open.
\end{proposition}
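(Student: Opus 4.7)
The plan is to handle the two claims separately, leveraging the corresponding auxiliary propositions already proved in the appendix.

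For openness, the plan is to invoke Proposition \ref{Proposition: openness and fibre-overtness}: we must show that given an $\alpha^d$-name $(I^A_n, E^A_n)_n$ of a closed set $A$ and a finite prefix $\langle (I_1, E_1), \dots, (I_N, E_N)\rangle$, it is semi-decidable whether the prefix extends to some $\alpha^d$-name of $A$. The monotonicity conditions (5) and (6) of the definition involve only finitely many cube inclusions and are therefore decidable on the prefix. The remaining obligations are that $|I_n| \Subset A$ and $|E_n| \Subset \R^d \setminus A$ for each $n \leq N$. Since each $|I_n|$ and each $|E_n|$ is a finite union of closed cubes, hence compact, Proposition \ref{Proposition: witnessing containment of compact sets} (applied to $A$ and to $|E_n|$ against the name $(I^A_n, E^A_n)_n$, and to $A$ again for the inclusions involving $\intr{A}$) renders each such containment semi-decidable. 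Provided all these conditions are witnessed, the prefix can always be completed to a full $\alpha^d$-name by appending the tail $(I^A_n, E^A_n)_{n > N}$ of the given name, possibly after inflating the early terms to absorb the prefix via monotonicity; the density conditions (2) and (4) are automatic from the completion.

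For admissibility, the plan is to show that $\alpha^d$ is computably equivalent to the join $\psi_-^d \wedge \theta_<^d$, where $\psi_-^d$ is the standard representation of $\mathcal{A}(\R^d)$ as closed sets with negative information (enumerating cubes disjoint from $A$) and $\theta_<^d$ is the standard representation of open sets with positive information (enumerating cubes contained in $\intr{A}$). Both standard representations are admissible \cite{WeihrauchBook,SchroederPhD}, and the join of two admissible representations over the same underlying set is admissible, so this equivalence gives admissibility of $\alpha^d$. The forward direction of the equivalence is immediate: from $(I^A_n, E^A_n)_n$ one reads off the two standard names directly, since by conditions (1)--(4) the unions $\bigcup_n I^A_n$ and $\bigcup_n E^A_n$ enumerate exactly the cubes (up to subdivision) needed. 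For the backward direction, given standard names for $\R^d \setminus A$ and for $\intr{A}$ as open sets, one constructs $(I^A_n, E^A_n)_n$ by enumerating, at stage $n$, all cubes $Q \in \QD^d_n$ witnessed so far to lie in $\intr{A}$ (respectively $\R^d \setminus A$), then closing under refinement to enforce monotonicity (5) and (6) and under the relation $|I^A_n| \cap |E^A_n| = \emptyset$.

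The main obstacle, such as it is, lies in the admissibility half: one has to verify that the densities (2) and (4) are genuinely recovered by the construction from the standard names, which amounts to the observation that every cube contained in the open set $\intr{A}$ (resp.\ $\R^d \setminus A$) is eventually enumerated by the standard positive representation, and that the refinement step preserves this property. Given that both standard representations are known to be admissible and the join operation preserves admissibility, nothing deeper is required.
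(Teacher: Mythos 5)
Your proof is correct and follows essentially the same route as the paper: for openness you invoke Proposition \ref{Proposition: openness and fibre-overtness} and reduce to a semi-decidability argument, where your appeal to Proposition \ref{Proposition: witnessing containment of compact sets} is just the unwound form of the paper's condition $|J^A_N| \Subset |I^A_n|$ for some $n$ (and checking every $n \leq N$ rather than only the last is harmless by the internal monotonicity). For admissibility the paper merely cites \cite{WeihrauchBook,SchroederPhD}, while you supply a sketch via the equivalence to the join of the negative-information closed-set representation with the positive-information representation of $\intr{A}$ --- this matches the paper's own characterization of $\alpha^d$ in Section \ref{Section: Encodings}, with the minor caveat that it is strictly a restriction of a product representation to the diagonal $A \mapsto (A, \intr{A})$ rather than a join of two representations of the same set.
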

\begin{proof}
    For the claim of admissibility, see \cite{WeihrauchBook,SchroederPhD}.
    
    The openness of the representation follows from Proposition \ref{Proposition: openness and fibre-overtness}:
    Given a name $(I^A_n, E^A_n)_n$ of a closed set $A \subseteq \R^d$
    and a finite sequence $\langle (J^A_1, F^A_1), \dots, (J^A_N, F^A_N) \rangle$ of pairs of finite sets of cubes,
    observe that $\langle (J^A_1, F^A_1), \dots, (J^A_N, F^A_N) \rangle$ extends to a name of $A$ if and only if
    for all $n < N$ if $Q \in J^A_n$ and $Q' \in \QD^d_{n + 1}$ with $Q' \subseteq Q$, then $Q' \in J^A_{n + 1}$,
    the same property holds true for $F^A_n$ and $F^A_{n + 1}$,
    and there exists some $n \in \N$ such that 
    $|J^A_N| \Subset |I^A_n|$
    and 
    $|F^A_N| \Subset |E^A_n|$.
    This is semi-decidable, since there are only finitely many cubes to check for each $n \in \N$.
\end{proof}

\section{Proof of Lemma \ref{Lemma: main perturbation theorem}}\label{Appendix: proof of main perturbation theorem}

Before we prove Lemma \ref{Lemma: main perturbation theorem}, we establish two simple auxiliary results.

\begin{lemma}\label{Lemma: disjoint sequence of sets in a compact set}
    Let $K$ be a compact set.
    Let $\delta > 0$.
    Then there does not exist an infinite sequence
    $(E_i)_i$ 
    of non-empty subsets of $K$ such that 
    $E_i$ is disjoint from $\bigcup_{j = 0}^{i - 1} B(E_j, \delta)$
    for all $i$.
\end{lemma}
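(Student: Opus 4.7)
The plan is to argue by contradiction using sequential compactness. Suppose for contradiction that such an infinite sequence $(E_i)_{i \in \N}$ of non-empty subsets of $K$ exists. For each $i \in \N$, choose a representative $x_i \in E_i$; this is possible since each $E_i$ is non-empty.

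The first step is to verify the key separation property: for every pair $i < j$, we have $d(x_i, x_j) \geq \delta$. Indeed, by the hypothesis, $E_j$ is disjoint from $\bigcup_{k = 0}^{j - 1} B(E_k, \delta)$, and in particular from $B(E_i, \delta)$. Since $x_i \in E_i$ and $x_j \in E_j$, we get $x_j \notin B(x_i, \delta)$, hence $d(x_i, x_j) \geq \delta$.

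Next, since $(x_i)_i$ is a sequence in the compact set $K$, by the Bolzano--Weierstra\ss{} theorem it has a convergent subsequence $(x_{i_k})_k$. Any convergent sequence is Cauchy, so there exists $N \in \N$ such that $d(x_{i_k}, x_{i_\ell}) < \delta$ for all $k, \ell \geq N$. But for $k \neq \ell$ with $k, \ell \geq N$ we have $i_k \neq i_\ell$, whence $d(x_{i_k}, x_{i_\ell}) \geq \delta$ by the separation property, a contradiction. Thus no such infinite sequence exists. No particular step is an obstacle here; the only minor care required is in unpacking the disjointness hypothesis into the pointwise separation bound, after which the argument is a standard compactness extraction.
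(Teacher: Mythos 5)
Your proof is correct. The key observation — that choosing representatives $x_i \in E_i$ yields $d(x_i, x_j) \geq \delta$ for $i < j$, because $x_j \in E_j$ cannot lie in $B(x_i, \delta) \subseteq B(E_i, \delta)$ — is exactly right, and the contradiction via Bolzano--Weierstra\ss{} then finishes the argument cleanly. The paper's proof reaches the same conclusion by a different compactness argument: it covers $K$ with finitely many balls of radius $\delta/2$ and shows that each new set $E_i$ must ``claim'' a previously unclaimed ball centre in its $\delta/2$-neighbourhood, so the sequence can have length at most $N$. Your sequential-compactness route avoids the slightly fiddly counting and the need to halve $\delta$, and is arguably cleaner; the paper's total-boundedness route has the minor advantage of giving an explicit bound on the length of any such sequence (namely the covering number $N$), which is not used elsewhere. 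Both reduce to the standard fact that a compact metric space admits no infinite $\delta$-separated set.
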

\begin{proof}
    Suppose for the sake of contradiction that such a sequence exists.
    By total boundedness there exist finitely many points $x_1, \dots, x_N \in K$
    such that 
    $
        K \subseteq \bigcup_{i = 1}^N B(x_i, \tfrac{\delta}{2}).
    $
    In particular, for every $x \in K$ there exists $j \in \{1,\dots,N\}$ such that 
    $d(x, x_j) < \tfrac{\delta}{2}$.

    We claim that for all $i \in \N$ there exists $j \in \{1,\dots,N\}$ such that 
    $B(E_i, \tfrac{\delta}{2})$ contains $x_j$ and $\bigcup_{k = 0}^{i - 1}B(E_k, \tfrac{\delta}{2})$
    does not contain $x_j$.

    Indeed, since $E_i$ is non-empty, it contains a point $x$.
    There must exist a point $x_j$ satisfying $d(x, x_j) < \delta/2$.
    If $B(E_k, \tfrac{\delta}{2})$ with $k < i$ contains $x_j$, then 
    there exists $y \in E_k$ with $d(y, x_j) < \delta/2$.
    It follows that $d(y, x) < \delta$, contradicting the assumption that 
    $E_i$ is disjoint from $B(E_k, \delta)$.

    Hence, for all $j \in \{1, \dots, N\}$ there exists $i \in \{1,\dots,N\}$ such that $x_j \in B(E_i, \tfrac{\delta}{2})$.
    We obtain
    \[
        \bigcup_{i = 1}^N B(E_i, \delta)
        \supseteq 
        \bigcup_{i = 1}^N B(x_i, \tfrac{\delta}{2})
        \supseteq 
        K.
    \]
    Now, $E_{N + 1}$ is required to be a non-empty subset of $K$, disjoint from $K$. Contradiction!
\end{proof}

\begin{proposition}\label{Proposition: complement of a sequence of open sets}
    Let $A \subseteq \R^d$ be a closed set.
    Let $U_1, \dots, U_m \subseteq \R^d$ be open sets.
    Then the complement of the interior of 
    $A \setminus \bigcup_{i = 1}^m U_i$
    in $A$ is equal to
    $
        A \cap \bigcup_{i = 1}^m \overline{U}_i.
    $
\end{proposition}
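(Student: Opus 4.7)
The plan is to set $V = \bigcup_{i=1}^m U_i$ (open) and $B = A \setminus V$ (closed, as the intersection of $A$ with the closed complement of $V$), so that the desired identity becomes $A \setminus \intr{B} = A \cap \bigcup_{i=1}^m \overline{U}_i$. The main tools will be the closure--interior duality $\intr{B} = \R^d \setminus \overline{\R^d \setminus B}$ and the fact that closure distributes over finite unions.

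First, I would observe that $\R^d \setminus B = (\R^d \setminus A) \cup V$ and apply finite distributivity of closure over union to obtain $\overline{\R^d \setminus B} = \overline{\R^d \setminus A} \cup \bigcup_{i=1}^m \overline{U}_i$. The complement in $A$ of $\intr{B}$ is then $A \setminus \intr{B} = A \cap \overline{\R^d \setminus B}$, which distributes as $(A \cap \overline{\R^d \setminus A}) \cup \bigl(A \cap \bigcup_{i=1}^m \overline{U}_i\bigr)$.

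Second, since $A$ is closed, the standard identity $\overline{\R^d \setminus A} = \R^d \setminus \intr{A}$ gives $A \cap \overline{\R^d \setminus A} = A \setminus \intr{A} = \partial A$. Combining, we arrive at $A \setminus \intr{B} = \partial A \cup \bigl(A \cap \bigcup_{i=1}^m \overline{U}_i\bigr)$.

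The main obstacle is that this expression carries an extra $\partial A$ term that must be absorbed into $A \cap \bigcup_{i=1}^m \overline{U}_i$. This absorption is not automatic from the bare hypotheses of the proposition: one genuinely needs the side condition $\partial A \subseteq \bigcup_{i=1}^m \overline{U}_i$. In the intended application within the proof of Lemma~\ref{Lemma: main perturbation theorem}, this condition holds because the family always contains the open set $B(\R^d \setminus \intr{K}, \delta)$, so that $\partial A \subseteq \R^d \setminus \intr{A} \subseteq B(\R^d \setminus \intr{A}, \delta) \subseteq \overline{U}_1$. Under this condition, the $\partial A$ contribution is absorbed and the stated equality follows.
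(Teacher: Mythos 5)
Your derivation is correct, and it exposes a genuine flaw in the proposition as stated. The identity you obtain, $A \setminus \intr{(A \setminus \bigcup_{i=1}^m U_i)} = \partial A \cup \bigl(A \cap \bigcup_{i=1}^m \overline{U}_i\bigr)$, is the sharp one, and without the extra hypothesis $\partial A \subseteq \bigcup_{i=1}^m \overline{U}_i$ the stated equality is simply false: take $d=1$, $A = [0,1]$, $m = 1$, $U_1 = (2,3)$, then the left side is $\{0,1\}$ while the right side is empty. The paper's own proof commits exactly the slip your algebra predicts: in the converse inclusion it finds a ball $B(x,r)$ about a point $x \in A$ with $x \notin \bigcup_i \overline{U}_i$ such that $B(x,r)$ avoids $\bigcup_i U_i$, and then concludes that $x$ is interior to $A \setminus \bigcup_i U_i$; but for that conclusion the ball would also have to lie inside $A$, which fails whenever $x \in \partial A$. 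Your closure--interior duality route (together with finite additivity of closure over unions) has the advantage of making the $\partial A$ term visible from the start, whereas the paper's pointwise argument silently drops it.

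You are also right that the gap is harmless at the unique place the proposition is invoked, inside the proof of Lemma~\ref{Lemma: main perturbation theorem}: there the compact set $K$ plays the role of the proposition's $A$, the open sets are $B(E_i,\delta)$ for $i = 0,\dots,m$ with $E_0 = \R^d \setminus \intr{K}$, and $\partial K \subseteq E_0 \subseteq B(E_0,\delta) \subseteq \overline{B(E_0,\delta)}$, so the side condition is automatic. The clean repair is thus to add the hypothesis $\partial A \subseteq \bigcup_{i=1}^m \overline{U}_i$ to the proposition; with it, your argument proves the statement directly.
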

\begin{proof}
    If a point is contained in $\bigcup_{i = 1}^m \overline{U}_i$, then it cannot be an interior point
    of $A \setminus \bigcup_{i = 1}^m U_i$, since every neighbourhood of that point will intersect some $U_i$.

    It remains to prove the converse inclusion.
    By contrapositive, it suffices to show that every point in $A$ which is not contained in 
    $
        \bigcup_{i = 1}^m \overline{U}_i
    $
    is an interior point of $A \setminus \bigcup_{i = 1}^m U_i$.

    If a point is not contained in $\bigcup_{i = 1}^m \overline{U}_i$, then its distance to this set is positive.
    Hence, a ball of sufficiently small radius will not intersect $\bigcup_{i = 1}^m \overline{U}_i$,
    and in particular it will not intersect any $U_i$, thus proving the claim.
\end{proof}

\begin{proof}[Proof of Lemma \ref{Lemma: main perturbation theorem}]
    Let $\varepsilon > 0$.
    Let $E_0 = \R^d \setminus \intr{K}$.
    Since $K$ is compact and $f$ is continuous, $f$ is uniformly continuous on $K$.
    There hence exists $\delta > 0$ such that 
    $
        \norm{x - y}_{\infty} \leq \delta \rightarrow \norm{f(x) - f(y)}_\infty < \tfrac{\varepsilon}{4}
    $ 
    for all $x, y \in K$.
    We may further assume that $\delta < \tfrac{\varepsilon}{4}$
    and that $x_0 \notin \overline{B}(E_0, \delta)$.

    Starting from $E_0$, we construct a finite sequence $E_0, \dots, E_N$ of non-empty disjoint closed subsets of $\R^d$ satisfying the following properties:
    \begin{enumerate}
        \item $x_0 \in \overline{B}(E_N, \delta)$.
        \item $x_0 \notin \overline{B}(E_i, \delta)$ for $i < N$.
        \item $E_i \subseteq K$ for all $i \geq 1$.
        \item $E_i$ is disjoint from $\bigcup_{j = 0}^{i - 1} B(E_j, \delta)$ for all $i$.
        \item If $N \geq i + 1$ and $x \in K \setminus \bigcup_{j = 0}^{i}B(E_j, \delta)$, then 
            $f(x) \in \overline{B}(E_{i}, \delta)$
            if and only if 
            $x \in E_{i + 1}$.
    \end{enumerate}

    Suppose we have constructed $E_0, \dots, E_m$ as above with $x_0 \notin \overline{B}(E_i, \delta)$ for $i = 0,\dots, m$.
    To construct $E_{m + 1}$, consider the closed set 
    $
        A = K \setminus \bigcup_{i = 0}^m B(E_i, \delta).
    $
    This set is non-empty, since it contains $x_0$.
    We claim that $x_0$ is contained in the interior of $A$.
    Indeed, suppose that $x_0$ is not contained in the interior of $A$.
    Then $B(x_0, \varepsilon)$ intersects $\bigcup_{i = 0}^m B(E_i, \delta)$ for all $\varepsilon > 0$.
    Since the union is finite, $B(x_0, \varepsilon)$ intersects some $B(E_i, \delta)$ for all $\varepsilon > 0$.
    It follows that $x_0$ is contained in the closure of $B(E_i, \delta)$.
    Since $E_i$ is closed,
    Proposition \ref{Proposition: closed balls about closed sets are closure of open balls} yields that the set $\overline{B}(E_i, \delta)$ is equal to
    the closure of the set $B(E_i, \delta)$.
    It follows that $x_0$ is contained in $\overline{B}(E_i, \delta)$, contradicting our assumptions about $E_0, \dots, E_m$.

    Since $A$ contains $x_0$ in its interior, and no robust invariant for $f$ does so, $A$ is not a robust invariant for $f$.
    We hence must have that the closed set
    \[
        E_{m + 1} = 
            \Set{ x \in K \setminus \bigcup_{i = 0}^m B(E_i, \delta) }
                { 
                    f(x) \notin A^\circ 
                }
            =
            \Set{ x \in K \setminus \bigcup_{i = 0}^m B(E_i, \delta) }
                { 
                    f(x) \in \bigcup_{i = 0}^m \overline{B}(E_i,\delta) 
                }
    \]
    is non-empty.
    The second equality above follows from Proposition \ref{Proposition: complement of a sequence of open sets}.

    By construction, $E_{m + 1}$ is disjoint from $\bigcup_{i = 0}^m B(E_i, \delta)$.
    Also by construction, if $x \in K \setminus \bigcup_{i = 0}^m B(E_i, \delta)$ then 
    $f(x) \in \bigcup_{i = 0}^m \overline{B}(E_i,\delta)$
    if and only if $x \in E_{m + 1}$.
    We claim that for $x \in K \setminus \bigcup_{i = 0}^m B(E_i, \delta)$ we have 
    $f(x) \in \bigcup_{i = 0}^m \overline{B}(E_i,\delta)$ if and only if 
    $f(x) \in E_{m}$.
    Indeed we have for all $k < m$ that $x \in K \setminus \bigcup_{i = 0}^k B(E_i, \delta)$,
    so that $f(x) \in B(E_k, \delta)$ implies $x \in E_{k + 1}$, which is impossible by our assumption on $x$.
    This proves the claim.
    We have now shown that $E_{m + 1}$ has all the required properties.

    If $x_0 \in \overline{B}(E_{m + 1}, \delta)$, then we are done.
    Otherwise we construct the next element of the sequence using the same method.

    By 
        Lemma \ref{Lemma: disjoint sequence of sets in a compact set},
    the sequence $E_0, E_1, \dots$ is finite, \textit{i.e.}, there must exist $N \in \N$ such that $x_0 \in \overline{B}(E_N, \delta)$.

    We now construct an $\varepsilon$-perturbation $\widetilde{f}$ of $f$ under which the point $x_0$ escapes $K$.
    We will construct a sequence of points $x_1, \dots, x_{N}$ and continuous functions $f_1, f_2, \dots, f_{N}$ such that:
    \begin{enumerate}
        \item $f_i$ agrees with $f$ outside $\bigcup_{j = 0}^{i - 1} B(x_j, \tfrac{\delta}{2})$ for all  $i \in \{1, \dots, N\}$.
        \item $f_i$ agrees with $f_{i - 1}$ on $\bigcup_{j = 0}^{i - 2} B(x_j, \tfrac{\delta}{2})$  for all  $i \in \{2, \dots, N\}$.
        \item $\norm{f_i - f}_\infty < \varepsilon$ for all $i \in \{1, \dots, N\}$.
        \item $f_i(x_{i - 1}) = x_i$ for all $i \in \{1, \dots, N\}$.
        \item $x_i \in E_{N - i}$ for all $i \in \{1, \dots, N\}$.
        \item $x_N \notin K$.
    \end{enumerate}
    Observe that the balls $B(x_j, \tfrac{\delta}{2})$ are disjoint for $0 \leq j < N$.
    Indeed,
    if $j > k > 0$ we have $x_j \in E_{N - j}$ and $x_k \in E_{N - k}$, so that 
    $d(x_j, x_k) \geq \delta$ by the third property of the $E_i$s.
    Further, we have $x_0 \notin \overline{B}(E_{N - j}, \delta)$ for all $j > 0$ by the second property of the $E_i$s.
    Hence, $d(x_0, x_j) > \delta$ for all $j > 0$.

    It then follows that $\widetilde{f} := f_{N + 1}$ is an $\varepsilon$-perturbation of $f$ with 
    $
        \widetilde{f}(x_i) = f_{N + 1}(x_i) = f_{i + 1}(x_i) = x_{i + 1}
    $
    for all $i = 0, \dots, N$.
    Since $x_{N} \notin K$, the point $x_0$ escapes $K$ under $\widetilde{f}$.

    It remains to construct the sequences as above.
    Let us first construct $x_1$ and $f_1$.
    Since $x_0 \in \overline{B}(E_N, \delta)$, there exists $\widetilde{x}_0 \in E_N$ with 
    $d(x_0, \widetilde{x}_0) \leq \delta$.
    We have 
        $f\left(\widetilde{x}_0\right) \in B(E_{N - 1}, \delta)$, 
    so that there exists a point $x_1 \in E_{N - 1}$ with 
    $d(f(\widetilde{x}_0), x_1) < \delta$.
    If $N = 1$, we have $E_{N - 1} = \R^d \setminus \intr{K}$, so that we may also assume that $x_1 \notin K$.

    Define a new function $f_1 \colon K \to \R^d$ as follows:
    \[
        f_1(x) = 
            \begin{cases}
                x_1 \left(1 - \frac{2d(x, x_0)}{\delta}\right) + f(x) \frac{2d(x,x_0)}{\delta}
                & \text{if } d(x, x_0) < \delta/2,\\
                f(x) &\text{otherwise.}
            \end{cases}
    \]
    The function $f_1$ is continuous, maps $x_0$ to $x_1$, and agrees with $f$ outside $B(x_0, \delta/2)$.
    On $B(x_0, \delta/2)$ we have:
    \begin{align*}
            \left|f_1(x) - f(x)\right|
            &=
            \left(1 - \frac{2d(x, x_0)}{\delta}\right)
            \norm{
                x_1 - f(x)
            }\\
            &\leq 
            \norm{
                x_1 - f(\widetilde{x}_0)
            }
            +
            \norm{
                f(\widetilde{x}_0) - f(x_0)
            }
            +
            \norm{
                f(x_0) - f(x)
            }\\
            &< \delta + \varepsilon/4 + \varepsilon/4
            < 3 \varepsilon/4
            < \varepsilon.
    \end{align*}
    Hence, $f_1$ is an $\varepsilon$-perturbation of $f$.

    Suppose we have constructed $f_1, \dots, f_i$ and $x_1,\dots,x_{i}$ as above.
    We have $f_i(x_i) = f(x_i) \in \overline{B}(E_{N - i - 1}, \delta)$ by the fourth property of the $E_i$s, 
    since $x_i$ is contained in $E_{N - i}$ by assumption.
    There hence exists $x_{i + 1} \in E_{N - i - 1}$ with 
    $d(f_i(x_i), x_{i + 1}) \leq \delta$.
    If $i + 1 = N$, then we may also assume that $x_{i + 1} \notin K$,
    at the cost of correcting the estimate to $d(f_i(x_i), x_{i + 1}) \leq 2\delta$.

    Define 
    \[
        f_{i + 1}(x) =
        \begin{cases}
            x_{i + 1} \left(1 - \frac{2d(x, x_i)}{\delta}\right) + f(x) \frac{2d(x,x_i)}{\delta}
            & \text{if } d(x, x_i) < \delta/2,\\
            f_i(x) &\text{otherwise.}
        \end{cases}
    \]
    It is clear that $f_{i + 1}$ maps $x_i$ to $x_{i + 1}$ and that it agrees with with $f_i$ outside $B(x_i, \tfrac{\delta}{2})$.
    The map is continuous since $f_i$ agrees with $f$ on $B(x_i, \tfrac{\delta}{2})$.
    By induction hypothesis, we have $d(f_{i + 1}(x), f(x)) < \varepsilon$ for all $x \notin B(x_i, \tfrac{\delta}{2})$.
    For $x \in B\left(x_i, \tfrac{\delta}{2}\right)$ we have:
    \begin{align*}
        \norm{
            f_{i + 1}(x) - f(x)
        }
        &=
        \left|1 - \frac{2d(x, x_i)}{\delta}\right|
        \norm{
            x_{i + 1} - f(x)
        }\\
        &\leq 
        \norm{x_{i + 1} - f(x_i)}_\infty
        + \norm{f(x_i) - f(x)}_\infty
        \leq 
        2\delta 
        + \varepsilon/4
        <
        \varepsilon.
    \end{align*}
    This shows that $f_{i + 1}$ has all the required properties.
\end{proof}

\section{Proof of Lemma \ref{Lemma: generic termination of point-escape algorithm}}\label{Appendix: proof of generic termination of point-escape algorithm}

As a preparation we observe:

\begin{proposition}\label{Proposition: pulling back robust invariant}
    Let $A$ be a closed set. 
    Let $x_0 \in \intr{A}$.
    Assume that there exist
    a robust invariant $V$ for $f$ and a number
    $n \in \N$ such that $f^i(x_0) \in \intr{A}$ for all $1 \leq i \leq n - 1$, and 
    $f^n(x_0) \in \intr{V} \subseteq V \subseteq \intr{A}$.
    Then there exists a robust invariant $V_0$ for $f$ with 
    $x_0 \in \intr{V_0} \subseteq V_0 \subseteq \intr{A}$.
\end{proposition}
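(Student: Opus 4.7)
The plan is to enlarge the given robust invariant $V$ by gluing on a thin ``tube'' of small closed balls around the initial orbit segment $x_0, f(x_0), \dots, f^{n-1}(x_0)$. Concretely, I would set
\[
    V_0 \;=\; V \;\cup\; \bigcup_{i=0}^{n-1} \overline{B}\!\left(f^i(x_0), \varepsilon_i\right),
\]
where the radii $\varepsilon_0,\dots,\varepsilon_{n-1} > 0$ are to be chosen by \emph{backwards} induction on $i$. The edge case $n = 0$ is trivial: the hypothesis then reads $x_0 \in \intr{V} \subseteq V \subseteq \intr{A}$, so $V_0 := V$ already works.

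For the inductive choice, first pick $\varepsilon_{n-1} > 0$ small enough that $\overline{B}(f^{n-1}(x_0), \varepsilon_{n-1}) \subseteq \intr{A}$, which is possible because $f^{n-1}(x_0) \in \intr{A}$, and $f\bigl(\overline{B}(f^{n-1}(x_0), \varepsilon_{n-1})\bigr) \subseteq \intr{V}$, which is possible because $f$ is continuous at $f^{n-1}(x_0)$ and $f^n(x_0) \in \intr{V}$. Given $\varepsilon_{i+1}$ for some $0 \leq i \leq n-2$, I pick $\varepsilon_i > 0$ small enough that $\overline{B}(f^i(x_0), \varepsilon_i) \subseteq \intr{A}$ and $f\bigl(\overline{B}(f^i(x_0), \varepsilon_i)\bigr) \subseteq B(f^{i+1}(x_0), \varepsilon_{i+1})$, which is again possible by continuity of $f$ at $f^i(x_0)$ together with $f^i(x_0) \in \intr{A}$ (from the standing hypothesis for $i \geq 1$, and from $x_0 \in \intr{A}$ for $i = 0$).

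It then remains to verify that $V_0$ has the desired properties. Compactness is immediate as $V_0$ is a finite union of compact sets, and $V_0 \subseteq \intr{A}$ holds since each constituent does. The inclusion $x_0 \in B(x_0, \varepsilon_0) \subseteq \intr{V_0}$ gives $x_0 \in \intr{V_0}$. For the invariance condition $f(V_0) \subseteq \intr{V_0}$, I split on which piece a point $y \in V_0$ belongs to: if $y \in V$, then $f(y) \in \intr{V} \subseteq \intr{V_0}$; if $y \in \overline{B}(f^i(x_0), \varepsilon_i)$ with $i \leq n-2$, then $f(y) \in B(f^{i+1}(x_0), \varepsilon_{i+1}) \subseteq \intr{V_0}$; and if $y \in \overline{B}(f^{n-1}(x_0), \varepsilon_{n-1})$, then $f(y) \in \intr{V} \subseteq \intr{V_0}$.

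There is no real obstacle here: the only mild subtlety is that $\intr{V_0}$ need not coincide with the union of the interiors of the pieces, but the invariance check only requires the forward image of each piece to land in some open set already contained in $V_0$, which is automatic from the construction. The proof is therefore essentially a routine backwards continuity argument.
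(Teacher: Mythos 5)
Your proof is correct and takes essentially the same approach as the paper's: the paper reduces to the case $n = 1$ by induction and then appends a single shrunk closed ball $\overline{B}(x_0,\delta)$ to $V$, and unrolling that induction produces exactly your finite union $V \cup \bigcup_{i=0}^{n-1}\overline{B}(f^i(x_0),\varepsilon_i)$ with the radii chosen by the same backwards continuity argument. The only difference is presentational.
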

\begin{proof}
    By induction, it suffices to show the result for $n = 1$.
    Assume that $f(x_0)$ is contained in the interior of a robust invariant for $f$.
    Then $f(x_0) \in \intr{V}$, so that there exists $\varepsilon > 0$ such that 
    $B(f(x_0), \varepsilon) \subseteq \intr{V}$.
    By continuity, there exists 
    $\delta > 0$ such that 
    \[
        f\left(\overline{B}(x_0,\delta)\right) \subseteq B(f(x_0), \varepsilon)\subseteq \intr{V}.
    \]
    Since $x_0 \in \intr{A}$, we may assume that $\delta$ is so small that $\overline{B}(x_0, \delta) \subseteq \intr{A}$.
    Then $V_0 = \overline{B}(x,\delta) \cup V$ has all the required properties.
\end{proof}

\begin{proposition}\label{Proposition: invariant for n-fold iterate is contained in invariant for f}
    Let $f \colon \R^d \to \R^d$. 
    Let $V \subseteq \R^d$ be a compact subset.
    If $V$ is a robust invariant for $f^N(V)$ for some $N \geq 1$, then 
    there exists $V_0 \supseteq V$ with 
    $f(V_0) \subseteq \intr{V}_0$.

    Moreover, if $U$ is an open set such that $f^k(V) \subseteq U$ for all $k \leq N$,
    then $V_0$ can be chosen such that $V_0 \subseteq U$. 
\end{proposition}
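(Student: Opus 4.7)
The plan is to build $V_0$ as the union of $V$ with a chain of compact neighbourhoods of the iterates $V, f(V), \dots, f^{N-1}(V)$, constructed by backward induction so that each neighbourhood maps strictly inside the next one. The hypothesis $f^N(V) \subseteq \intr{V}$ provides the base case of the induction, and continuity of $f$ propagates the strict-inclusion property at each step.

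More concretely, I would set $K_N := V$ and construct compact sets $K_{N-1}, K_{N-2}, \dots, K_0$ by descending induction with the two properties
\[
    f^i(V) \subseteq \intr{K_i}, \qquad f(K_i) \subseteq \intr{K_{i+1}}.
\]
For the base $i = N-1$: the hypothesis gives $f(f^{N-1}(V)) = f^N(V) \subseteq \intr{V} = \intr{K_N}$, and since $f^{N-1}(V)$ is compact and $f$ is continuous, the preimage $f^{-1}(\intr{K_N})$ is an open neighbourhood of $f^{N-1}(V)$, so a sufficiently small closed $\delta_{N-1}$-neighbourhood $K_{N-1} := \overline{B}(f^{N-1}(V), \delta_{N-1})$ satisfies both required properties. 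For the inductive step from $i+1$ to $i$: the already-established $f^{i+1}(V) \subseteq \intr{K_{i+1}}$ together with $f(f^i(V)) = f^{i+1}(V)$ and continuity of $f$ lets us pick a small closed $\delta_i$-neighbourhood $K_i := \overline{B}(f^i(V), \delta_i)$ with $f(K_i) \subseteq \intr{K_{i+1}}$ and $f^i(V) \subseteq \intr{K_i}$.

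Having built the chain, I would set
\[
    V_0 := V \cup K_0 \cup K_1 \cup \dots \cup K_{N-1}.
\]
This is compact and contains $V$ (via $V \subseteq \intr{K_0}$ and $V = K_N$). To verify $f(V_0) \subseteq \intr{V_0}$, I would split the image:
\[
    f(V_0) = f(V) \cup f(K_0) \cup \dots \cup f(K_{N-1}).
\]
For $0 \leq i \leq N-2$, we have $f(K_i) \subseteq \intr{K_{i+1}} \subseteq \intr{V_0}$ by construction. For $i = N-1$, $f(K_{N-1}) \subseteq \intr{K_N} = \intr{V} \subseteq \intr{V_0}$. Finally, $f(V) \subseteq f(K_0) \subseteq \intr{K_1} \subseteq \intr{V_0}$ using $V \subseteq K_0$. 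So $V_0$ is a robust invariant containing $V$.

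For the moreover clause, note that in choosing the radii $\delta_i > 0$ above we only needed them small enough to make the inclusions hold; since each $f^i(V) \subseteq U$ with $U$ open and $f^i(V)$ compact, we can additionally shrink $\delta_i$ so that $K_i \subseteq U$ for every $i = 0, \dots, N-1$. Combined with $V \subseteq U$, this gives $V_0 \subseteq U$. There is no real obstacle here, the only subtlety being the direction of induction: one must start from the fact that $f^N(V)$ sits strictly inside $V$ and propagate backwards, rather than trying to thicken $V$ directly.
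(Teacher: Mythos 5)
Your proof is correct and takes essentially the same approach as the paper's: construct a chain of closed neighbourhoods around the finite orbit segment $V, f(V), \dots, f^{N-1}(V)$ so that $f$ maps each strictly inside the next (the base case coming from $f^N(V) \subseteq \intr{V}$), then take their union. The only presentational difference is that the paper pins down all the radii in one shot by iterating a fixed uniform modulus of continuity on a large cube, whereas you determine them one at a time by descending induction using the compactness-of-$f^i(V)$-inside-an-open-set argument at each step; both are instances of the same idea.
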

\begin{proof}
    The sets $f^k(V)$ are compact for all $k \in \N$.
    There hence exists a positive integer $B$ with $f^k(V) \subseteq [-B,B]^d$
    for all $k \leq N$.
    Then $f$ is uniformly continuous on $[-B - 1,B + 1]^d$.
    There hence exists a function
    $
        \omega_f \colon \N \to \N,
    $
    satisfying 
    \[
        d(x,y) \leq 2^{-\omega_f(n)}
        \rightarrow
        d(f(x),f(y)) < 2^{-n}
    \]
    for all $x,y \in [-B - 1, B + 1]^d$.

    Since $f^N(V)$ is compact and $\intr{V}$ is open, there exists $n \in \N$ such that 
    $B(f^N(V), 2^{-n}) \subseteq \intr{V}$.
    Consider the set 
    \[
        V_0 = V \cup \bigcup_{k = 1}^{N - 1} \overline{B}\left(f^{k}(V), 2^{-\omega_f^{N - k}(n)}\right).
    \]
    For $k < N - 1$ we have 
    \[
        f\left(\overline{B}\left(f^{k}(V), 2^{-\omega_f^{N - k}(n)}\right)\right)
        \subseteq 
        B\left(f^{k + 1}(V), 2^{-\omega_f^{N - k - 1}(n)}\right)
        \subseteq 
        \intr{V}_0.
    \]
    Indeed, if $x \in \overline{B}\left(f^{k}(V), 2^{-\omega_f^{N - k}(n)}\right)$,
    then $x \in [-B-1, B+1]^d$.
    Pick $y \in f^{k}(V)$ with $d(x,y) < 2^{-\omega_f^{N - k}(n)}$.
    Then $f(y) \in f^{k + 1}$ with 
    $d(f(x), f(y)) < 2^{-\omega_f^{N - k - 1}(n)}$.

    For $k = N - 1$ we have 
    \[
        f\left(\overline{B}\left(f^{N - 1}(V), 2^{-\omega_f(n)}\right)\right)
        \subseteq 
        B(f^{N}(V), 2^{-n})
        \subseteq 
        \intr{V}
        \subseteq
        \intr{V}_0.
    \]

    If additionally $f^k(V) \subseteq U$ for some open set $U$ and all $0 \leq k \leq N$,
    then by compactness of the sets $f^k(V)$, we can ensure that $V_0 \subseteq U$
    up to choosing $\omega_f$ such that the the balls $\overline{B}\left(f^{k}(V), 2^{-\omega_f^{N - k}(n)}\right)$
    are all contained in $U$. 
    This is possible since we may always replace $\omega_f$ with a pointwise larger function.
\end{proof}

We now prove Lemma \ref{Lemma: generic termination of point-escape algorithm}. 

\begin{proof}[Proof of Lemma \ref{Lemma: generic termination of point-escape algorithm}]
    Since Algorithm \ref{Algorithm: main algorithm} halts on all escaping instances, the instance $(f, K, x_0)$ must be trapped.

    Let $\varepsilon > 0$.
    Consider the compact set $K_{\varepsilon} = \bigcup_{y \in K}\overline{B}(y,\varepsilon)$.
    Define the sequence $(x_i)_i$ recursively via $x_{i + 1} = f(x_i)$.
    Then $(x_i)_i$ is contained in the interior of $K_{\varepsilon}$.
    We construct an $\varepsilon$-perturbation $f_{\varepsilon}$ of $f$ such that $x_0$ is robustly trapped in $K_{\varepsilon}$ under $f_{\varepsilon}$.

    Since $K_{\varepsilon} \to K$ and $f_{\varepsilon} \to f$ as $\varepsilon \to 0$, the claim follows.
    
    Now, assume that $f(x_i) \in \intr{K}$ for all $i \in \N$.
    Since $K_{\varepsilon}$ is compact,
    there exist $n < m \in \N$ such that $\norm{x_m - x_n}_\infty < \varepsilon$.
    
    We may assume that $\norm{x_i - x_j}_\infty \geq \varepsilon$ for all 
    $i,j \in \{n, \dots, m\}$ with $i < j$ and $i > n$ or $j < m$.

    Choose a positive number $\delta > 0$ such that the balls 
    $\overline{B}(x_i, \delta)$
    are disjoint and contained in $\intr{K_\varepsilon}$
    for $i = n, \dots, m - 1$ and such that we have
    $
        d(x,y) < \delta \rightarrow d(f(x), f(y)) < \varepsilon
    $
    for all $x, y \in K$.

    Let $f_\varepsilon$ be the function which is obtained from $f$ as follows:
    For $i = n, \dots, m - 2$, replace $f$ on $B(x_i, \delta)$ with the function
    \[
        \widetilde{f}_i(x) =  
        \begin{cases}  
            x_{i + 1} &\text{if }\norm{x - x_{i}}_\infty < \tfrac{\delta}{2}\\
            \left(1 - \tfrac{2 d(x, x_i) - \delta}{\delta}\right) x_{i + 1} 
            + \tfrac{2 d(x, x_i) - \delta}{\delta}f(x)  &\text{if }\norm{x - x_{i}}_\infty \geq \tfrac{\delta}{2}\\
        \end{cases}
    \]
    On $B(x_{m - 1}, \delta)$, replace $f$ with the function
    \[
        \widetilde{f}_{m - 1}(x) = 
            \begin{cases}
                x_n &\text{if }\norm{x - x_{m - 1}}_\infty < \tfrac{\delta}{2}\\
                \left(1 - \tfrac{2 d(x, x_{m - 1}) - \delta}{\delta}\right) x_{n} 
            + \tfrac{2 d(x, x_{m - 1}) - \delta}{\delta}f(x) &\text{if }\norm{x - x_{m - 1}}_\infty \geq \tfrac{\delta}{2}\\
            \end{cases}
    \]
    Let $i \in \{n, \dots, m - 2\}$.
    Then, for $x \in B(x_i, \delta/2)$, we have: 
    \[
            \left|
                f(x) - f_\varepsilon(x)    
            \right|
            =
            \left|
                f(x) - x_{i + 1}    
            \right|
            =
            \left|
                f(x) - f(x_i) 
            \right|
            <
            \varepsilon.
    \] 
    For $x \in B(x_i,  \delta) \setminus B(x_i,  \delta/2)$ we have:
    \[
            \left|
                f(x) - f_{\varepsilon}(x)    
            \right|
            =
            \left|1 - \tfrac{2 d(x, x_i) - \delta}{\delta}\right|
            \left|
                f(x) - f(x_{i})   
            \right|
            <
            \varepsilon.
    \] 
    The case of $i = m - 1$ is identical.

    We have 
    \[
        f_{\varepsilon}
        \left(
            \bigcup_{i = n}^{m - 1} \overline{B}\left( x_i, \tfrac{\delta_i}{2} \right)
        \right)
        = 
        \{
            x_n, \dots, x_{m - 1}
        \}
        \subseteq 
        \bigcup_{i = n}^{m - 1} B\left( x_i, \tfrac{\delta_i}{2} \right).
    \]
    Hence, $V = \bigcup_{i = n}^{m - 1} \overline{B}\left( x_i, \tfrac{\delta_i}{2} \right)$ is a robust invariant for $f_{\varepsilon}$.
    By construction, $V \subseteq \intr{K}$. 
    Further, $V$ contains $f_{\varepsilon}^n(x_0)$ in its interior, so that by Proposition \ref{Proposition: pulling back robust invariant}, 
    $f(x_0)$ is contained in the interior of a robust invariant for $f_{\varepsilon}$.
    It follows from Theorem \ref{Theorem: main theorem} that $(f_{\varepsilon}, K_{\varepsilon}, x_0)$ is a robust trapped instance of the Point Escape Problem.
\end{proof}

\section{Proof of Lemma \ref{Lemma: classification of robust instances}}
\label{Appendix: Proof of classification of robust instances}

We first show that the instances satisfying the conditions listed in Lemma \ref{Lemma: classification of robust instances} are robust.
Assume that $r(A) < 1$. 
Then $A$ does not admit $1$ as an eigenvalue, so that the map $(I - A)$ is invertible.
It follows that $Ax + b$ has a unique fixed point.
A calculation shows that this fixed point is given by $(I - A)^{-1}b$.
We have for all $x \in \R^d$:
\[
    f_{A,b}^n(x) = A^nx + \sum_{i = 0}^{n - 1}A^i b 
                = A^nx + (I - A^n) (I - A)^{-1} b
                \to (I - A)^{-1}b.
\]
On a sufficiently small neighbourhood of $A$, the point $(I - A)^{-1}b$ depends continuously on $A$.
It is a routine exercise to obtain a computable bound on $n$ in terms of $A$, $b$, $x_0$, and $\varepsilon > 0$ such that 
$\norm{f_{A,b}^n(x_0) - (I - A)^{-1}b} < \varepsilon$.
Now, if the point 
$(I - A)^{-1}b$ 
and all points in the orbit of $x_0$ under $f_{A,b}$
are contained in the interior of $P$, 
we obtain that the point must be robustly trapped.

Now assume that the orbit $(f_{A,b}^n(x_0))_n$ is contained in the interior of $P$, 
that $A$ has a simple real eigenvalue $\rho > 1$ with $\rho > |\lambda|$ for all other eigenvalues and 
that there exists an eigenvector $v$ for $\rho$ with $v \in \intr{P}_0$, $x_0 = \alpha v + w$ where $\alpha > 0$ and $w$ is a linear combination of eigenvalues and generalised eigenvalues for eigenvalues $\lambda \neq \rho$.

We will work with a homogenized version of the system. Consider the matrix 
\[
    B = \begin{bmatrix}
        A & b \\ 
        0 & 1
    \end{bmatrix} \in \R^{(d + 1) \times (d + 1)}.
\]
Consider 
the initial point $z_0 = (x_0, 1) \in \R^{d + 1}$,
and the polyhedron 
$
    Q = \bigcap_{i = 1}^m J_i
$
where $J_i = \Set{z \in \R^{d + 1}}{(N_i, -D_i) \cdot z \leq 0}$.
We have $\sigma(B) = \sigma(A) \cup \{1\}$, so that $\rho$ satisfies 
$\rho > |\lambda|$ for all other eigenvalues of $B$.

Write $z_0 = \alpha (v,0) + (w,1)$.
Consider the sequence 
$
    \tfrac{1}{\rho^n}B^n z_0 = \alpha (v,0) + \tfrac{1}{\rho^n} B^n w.
$
Then, by a standard argument involving the Jordan normal form (cf.~\cite{Tiwari}), the entries of 
$B^n w$ are bounded by an exponential polynomial of the form 
$\sum_j \sum_k t_{j,k} n^k |\lambda_j|^n$
where $|\lambda_j| < \rho$ for all $j$, so that we have 
$
    \tfrac{1}{\rho^n}B^n z_0 \to \alpha (v,0).
$
It follows that 
$
    \tfrac{1}{\rho^n}f^n_{A,b}(x_0) \to \alpha v.
$
Now, we have $N_i f^n_{A,b}(x_0) \leq D_i$ if and only if 
$\tfrac{1}{\rho^n} N_i f^n_{A,b}(x_0) \leq \tfrac{D}{\rho^n}$.
Since $N_i v < 0$ for all $i$, this inequality holds true for all sufficiently large $i$.
Again, it is a routine exercise to derive effective bounds on $n$ in terms of the data that 
show that the instance is robustly trapped.
See \cite[Lemma 6.1]{DPLRSR21} for a similar result.

Finally, assume that $A$ has the simple real eigenvalue $\rho = 1$ with $\rho > |\lambda|$ for all other eigenvalues,
there exists an eigenvector $v$ for $\rho$ with $v \in \intr{P}_0$ such that 
$x_0 = \alpha v + w$,
$b = \beta v + u$,
where $\alpha, \beta > 0$,
and $w$ and $u$ are linear combinations of eigenvalues and generalised eigenvalues
for eigenvalues $\lambda \neq \rho$.

A similar argument as above shows that we have $\tfrac{1}{n} f^n_{A,b}(x_0) \to \beta v$ as $n \to \infty$,
so that the point $x_0$ is trapped under $f_{A,b}$.
Under small perturbations of $A$, the largest real eigenvalue $\rho$ can be perturbed to an eigenvalue $\rho < 1$ or an eigenvalue $\rho > 1$,
but if the perturbations are sufficiently small, we will still have the relation $\rho > |\lambda|$ for all other eigenvalues $\lambda$.
For perturbations $\widetilde{A}$ with $\rho > 1$, the sequence $f^n_{\widetilde{A},b}(x_0)$ asymptotically behaves like $\alpha v$,
so that the point $x_0$ remains trapped.
For perturbations $\widetilde{A}$ with $\rho < 1$, the sequence $f^n_{\widetilde{A},b}(x_0)$ converges to the fixed point 
$(I - \widetilde{A})^{-1}b$.
Using $b = \beta v + u$, we obtain 
$(I - \widetilde{A})^{-1} b = \frac{\beta}{1-\rho}v + (I - \widetilde{B})^{-1} u$, where all eigenvalues of the matrix $\widetilde{B}$ have a strictly smaller absolute value than $\rho$.
Hence, 
$\frac{(I - \widetilde{A})^{-1} b}{\norm{(I - \widetilde{A})^{-1} b}} \to \beta v$
as $\rho \to 1$, so that the sequence remains trapped under all sufficiently small perturbations of this type.
Here, we can derive effective bounds on $n$ in terms of  $\varepsilon > 0$ and the data such that, no matter which of the above is the case, the point 
$\tfrac{f^n_{\widetilde{A},b}(x_0)}{\norm{f^n_{\widetilde{A},b}(x_0)}}$ 
will be $\varepsilon$-close to a positive multiple of $v$.
This yields that the instance is robustly trapped.

Conversely, assume that the instance is robustly trapped.
Then the orbit $(f^n_{A,b}(x_0))_n$ must be contained in the interior of $P$, for otherwise the instance 
escapes under arbitrarily small perturbations of $P$.

If $r(A) < 1$, then as we saw above, the map $f_{A,b}$ has the unique fixed point 
$(I - A)^{-1}b$
with $f^n_{A,b}(x_0) \to (I - A)^{-1}b$.
Since the instance is robustly trapped we must have $(I - A)^{-1}b \in \intr{P}$.

Consider the case $r(A) \geq 1$.
Assume that the instance does not satisfy any of the assumptions of Lemma \ref{Lemma: classification of robust instances}. 
We distinguish four cases:
\begin{enumerate}
    \item There is no real eigenvalue $\geq 1$.
    \item The largest real eigenvalue is $\geq 1$ but is not simple
    or not strictly greater than all other eigenvalues.
    \item The largest real eigenvalue is simple, greater than all other eigenvalues, and equal to $1$,
    but all eigenvectors $v$ for $\rho$ satisfy $v \notin \intr{P}_0$ or 
    in the decompositions $x_0 = \alpha v + w$ and $b = \beta v + u$ as above we have
    $\alpha \leq 0$ or $\beta \leq 0$.
    \item The largest real eigenvalue $\rho$ is simple, greater than all other eigenvalues, strictly greater than $1$,
        but all eigenvectors $v$ for $\rho$ satisfy $v \notin \intr{P}_0$ or in the decomposition $x_0 = \alpha v + w$ as above we have
        $\alpha \leq 0$.
\end{enumerate}
We claim that the instance is not robustly trapped in any of the above cases.

In the last case, it easily follows from our argument above that the point $x_0$ eventually escapes under $f_{A,b}$,
potentially up to an arbitrarily small perturbation of the eigenspace of $\rho$.
Indeed, fix an eigenvector $v$ for $\rho$ of Euclidean norm $1$.
Up to an arbitrarily small perturbation we may assume that  $\alpha \neq 0$.
Then $\tfrac{1}{\rho^n}f^n_{A,b}(x_0) \to \alpha v$ where for at least one normal $N_i$ of $P$ we have 
$N_i \alpha v \geq 0$.
Thus, up to a small perturbation of $v$ we can ensure $N_i \alpha v > 0$, making the instance escape.

The second-to-last case is similar.
Fix an eigenvector $v$ for $\rho$ of Euclidean norm $1$.
Up to an arbitrarily small perturbation we may assume that $\alpha \neq 0$ and that  $\beta \neq 0$.
We have, potentially up to an arbitrarily small perturbation, $N_i \alpha v > 0$ or $N_i \beta v > 0$.
Again, potentially up to a small perturbation, we can ensure that 
$\tfrac{1}{\rho^n}f^n_{A,b}(x_0) \to \alpha v$
or that 
$\tfrac{1}{\rho^n}f^n_{A,b}(x_0) \to \beta v$
(where $\rho$ is the largest real eigenvalue of the perturbed instance),
so that the point escapes.

In the two remaining cases, under arbitrarily small perturbations of the matrix, the largest positive real eigenvalue $\rho$ satisfies 
$\rho < |\lambda|$ for some complex or negative eigenvalue $\lambda$ with $|\lambda| > 1$.
Up to an arbitrarily small perturbation, $x_0$ has a non-zero component in an eigenspace of $\lambda$.
It then follows that $f_{A,b}^n(x_0)$ escapes every affine half-space \cite{Tiwari,BellGerhold07}. 
The results in \cite{Tiwari,BellGerhold07} are stated for homogeneous linear systems, but we can homogenize the system as in the construction of $B$ above, introducing the real eigenvalue $1$. 
Since $|\lambda| > 1$, the result still applies.

\end{document}